\newtheorem{theorem}{Theorem}
\newtheorem{lemma}[theorem]{Lemma}
\newtheorem{example}[theorem]{Example}
\newcommand{\ord}{{\mathrm{ord}}}
\newcommand{\tr}{{\mathrm{Tr}}}
\newcommand{\gf}{{\mathrm{GF}}}
\newcommand{\Aut}{{\mathrm{Aut}}}
\newcommand{\PAut}{{\mathrm{PAut}}}
\newcommand{\MAut}{{\mathrm{MAut}}}
\newcommand{\wt}{{\mathtt{wt}}}
\newcommand{\Z}{\mathbb{{Z}}}
\newcommand{\m}{\mathbb{M}}
\newcommand{\C}{{\mathcal{C}}}
\newcommand{\bc}{{\mathbf{c}}}
\newcommand{\0}{\textbf{0}}
\newcommand{\Rmnum}[1]{\expandafter\@slowromancap\romannumeral #1@}
\begin{document} 

\title{Several families of ternary negacyclic codes and their duals\thanks{
Z. Sun's research was supported by The National Natural Science Foundation of China under Grant Number 62002093. C. Ding's research was supported by The Hong Kong Research Grants Council, Proj. No. $16301522$}}

\author{Zhonghua Sun\thanks{School of Mathematics, Hefei University of Technology, Hefei, 230601, Anhui, China. Email:  sunzhonghuas@163.com}, 
\and Cunsheng Ding\thanks{Department of Computer Science
                           and Engineering, The Hong Kong University of Science and Technology,
Clear Water Bay, Kowloon, Hong Kong, China. Email: cding@ust.hk}
}

\maketitle

\begin{abstract} 
Constacyclic codes contain cyclic codes as a subclass and have nice algebraic structures. Constacyclic codes have theoretical importance, as they are connected to a number of areas of mathematics and outperform cyclic codes in several aspects. Negacyclic codes are a subclass of constacyclic codes and are distance-optimal in many cases. However, compared with the extensive study of cyclic codes, negacyclic codes are much less studied. In this paper, several families of ternary negacyclic codes and their duals are constructed and analysed. These families of negacyclic codes and their duals contain distance-optimal codes and have very good parameters in general.  

\vspace*{.3cm}
\noindent 
{\bf Keywords:} Cyclic code, negacyclic code, linear code
\end{abstract}

\section{Introduction and motivations} 

\subsection{Constacyclic codes}

For a given prime power $q$, let $\gf(q)$ denote the finite field with $q$ elements, and let $\gf(q)^*$ denote the multiplicative group of $\gf(q)$. A $q$-ary $[n, k, d]$ linear code $\C$ is a $k$-dimensional linear subspace of $\gf(q)^n$ with minimum distance $d$. Let $\lambda \in \gf(q)^*$. A $q$-ary linear code $\C$ of length $n$ is said to be $\lambda$-{\it constacyclic} if $(c_0,c_1,\ldots,c_{n-1})\in \C$ implies $(\lambda c_{n-1}, c_0,c_1,\ldots,c_{n-2})\in \C$. Let
\begin{align*}
\Phi: \ \gf(q)^n	&\rightarrow \gf(q)[x]/( x^n-\lambda)\\
(c_0,c_1, \ldots, c_{n-1})&\mapsto c_0+c_1x+c_2x^2+ \cdots + c_{n-1}x^{n-1}.
\end{align*}
It is  known each ideal of the quotient ring $\gf(q)[x]/(x^n-\lambda)$ is {\it principal} and a $q$-ary linear code $\C$ is $\lambda$-constacyclic if and only if $\Phi(\C)$ is an ideal of the quotient ring $\gf(q)[x]/(x^n-\lambda)$. Due to this fact, we will 
identify $\Phi(\C)$ with $\C$ for any $\lambda$-constacyclic code $\C$. 
Let $\C=( g(x))$ be a $q$-ary $\lambda$-constacyclic code, where $g(x)$ is monic and has the smallest degree. Then $g(x)$ is called the {\it generator polynomial} and $h(x)=(x^n-\lambda)/g(x)$ is referred to as the {\it check polynomial} of $\C$. A $q$-ary $\lambda$-constacyclic code $\C$ is said to be {\it irreducible} if its check polynomial is irreducible over $\gf(q)$. By definition, a $1$-constacyclic code is a {\it cyclic code}. In particular, $(-1)$-constacyclic code are called {\it negacyclic} codes. Hence, cyclic codes form a subclass of constacyclic codes. Further information on constacyclic codes can be found in  \cite{Black66,CDFL15,CFLL12,DP92,DDR11,DY10,FWF2017,KS90,LLLM17,LQL2017,LQ2018,MC21,PD91,WSZ19,WSD22,Wolfmann2008,SR2018,SF2020,SZW20,SWD22,ZSL18} and the references therein.

\subsection{Motivations and objectives}

Negacyclic codes over finite fields are a subclass of constacyclic codes, and were first studied by Berlekamp \cite{Black66} for correcting errors measured in the Lee metric. Therefore, the history of negacyclic codes goes back to 1966. In the past 56 years, some works on the application of negacyclic codes in quantum codes were done (see \cite{GLL20,KZ12,KZ13,KZ18,WLL20,ZSL18}). However, only a few references about theoretical results of negacyclic codes have appeared in the literature \cite{Black66,Black08,DDR11,PZS18,ZKZ19,WSD22},  Hence, very limited results on the parameters of negacyclic codes over finite fields are known in the literature.

Negacyclic codes have similar algebraic structures as cyclic codes. With the help of Magma, we found that 
the best ternary negacyclic code of certain length and dimension has a better error-correcting capability  than 
the best ternary cyclic code of the same length and dimension. Some examples of such code parameters are 
given in Table \ref{tab-sun1}.
\begin{table*}
	\begin{center}
	\renewcommand\arraystretch{1.1}
\caption{The best ternary cyclic codes and ternary negacyclic codes}\label{tab-sun1}
\begin{tabular}{ccc} \hline
  Best cyclic codes &  Best negacyclic codes & Best linear codes\\  \hline
  $[10,4,4]$ & $[10,4,6]$& $[10,4,6]$\\ \hline 
$[10,6,2]$ & $[10,6,4]$& $[10,6,4]$\\ \hline 
$[14,6,4]$ & $[14,6,6]$& $[14,6,6]$\\ \hline 
$[14,8,2]$ & $[14,8,5]$ & $[14,8,5]$\\ \hline 
$[20,10,6]$ & $[20,10,7]$ & $[20,10,7]$\\ \hline 
$[20,12,4]$ & $[20,12,5]$& $[20,12,6]$\\ \hline 
$[20,16,2]$ & $[20,16,3]$& $[20,16,3]$\\ \hline 
\end{tabular}
\end{center}
\end{table*}
Therefore, it is very interesting to study ternary negacyclic codes. This is the main motivation of studying 
ternary negacyclic codes in this paper. The objectives of this paper are the following: 
\begin{enumerate} 
\item Construct and analyse several families of ternary negacyclic codes. 
\item Study parameters of the duals of these ternary negacyclic codes.
\end{enumerate} 

 \subsection{The organisation of this paper}

The rest of this paper is organized as follows. In Section \ref{sec2}, we present some auxiliary results. 
In Section \ref{sec-negacodeevenlength}, we prove a general result for negacyclic codes of even length. 
In Section \ref{sec3}, we study the parameters of the first family of ternary negacyclic codes and their duals. 
In Section \ref{sec4}, we investigate the parameters of the second family of ternary negacyclic codes and their duals. In Section \ref{sec5}, we analyse the parameters of the third family of ternary negacyclic codes and their duals. In Section \ref{sec6}, we study the parameters of the fourth family of ternary negacyclic codes. These families of negacyclic codes and their duals have very good parameters. In Section \ref{sec7}, we conclude this paper and make some concluding remarks.

\section{Preliminaries}\label{sec2}

Throughout this section, let $q$ be an odd prime power, $\gf(q)$ be the finite field with $q$ elements and let $n$ be a positive integer with $\gcd(n, q)=1$. For a linear code $\C$, we use $\dim(\C)$ and $d(\C)$ to denote its dimension and minimum Hamming distance, respectively.

\subsection{Cyclotomic cosets} 
 To deal with $q$-ary negacyclic codes of length $n$, we need to define $q$-cyclotomic cosets modulo $2n$. 

Let $\Z_{2n}=\left\{0,1,2,\cdots,2n-1 \right\}$ be the ring of integers modulo $2n$. For any $i \in \Z_{2n}$, the \emph{$q$-cyclotomic coset of $i$ modulo $2n$} is defined by 
\[C^{(q,2n)}_i=\left\{i, iq, iq^2, \cdots, iq^{\ell_i-1}\right\} \bmod {2n} \subseteq \Z_{2n}, \]
where $\ell_i$ is the smallest positive integer such that $i \equiv i q^{\ell_i} \pmod{2n}$, and is the \textit{size} of the $q$-cyclotomic coset. The smallest integer in $C^{(q, 2n)}_i$ is called the \textit{coset leader} of $C^{(q, 2n)}_i$. Let $\Gamma_{(q,2n)}$ be the set of all the coset leaders. We have then $C^{(q, 2n)}_i \cap C^{(q, 2n)}_j = \emptyset$ for any two distinct elements $i$ and $j$ in  $\Gamma_{(q,2n)}$, and  
 $\bigcup_{i \in  \Gamma_{(q,2n)} } C_i^{(q,2n)}=\Z_{2n}$.

Let $m=\ord_{2 n}(q)$. Let $\alpha$ be a primitive element of $\gf(q^m)$ and let $\beta=\alpha^{(q^m-1)/2n}$. Then $\beta$ is a primitive $2n$-th root of unity in $\gf(q^m)$ and $\beta^n=-1$. The \textit{minimal polynomial} $\m_{\beta^i}(x)$ of $\beta^i$ over $\gf(q)$ is a monic polynomial of the smallest degree over $\gf(q)$ with $\beta^i$ as a zero. We have $
\m_{\beta^i}(x)=\prod_{j \in C_i^{(q,2n)}} (x-\beta^j) \in \gf(q)[x], 
$ 
which is irreducible over $\gf(q)$. It then follows that $ x^{2n}-1=\prod_{i \in  \Gamma_{(q,2n)}} \m_{\beta^i}(x)$. Define 
$$\Gamma_{(q,2n)}^{(1)}=\left\{i: i \in \Gamma_{(q,2n)}, \, i \equiv 1 ~({\rm mod}~2) \right\}.$$ Then $x^{n}+1=\prod_{i \in  \Gamma_{(q,2n)}^{(1)}} \m_{\beta^i}(x)$.

\subsection{Zeros, BCH bound and trace representation of negacyclic codes} 

Let $\C$ be a $q$-ary negacyclic code of length $n$ with generator polynomial $g(x)$. Then there is a subset $\Gamma \subseteq \Gamma_{(q, 2n)}^{(1)}$ such that $g(x)=\prod_{i\in \Gamma}\m_{\beta^i}(x)$. Let $T=\cup_{i\in \Gamma}C_i^{(q,2n)}$. The roots of unity $\mathcal{Z}(\C)=\left\{\beta^i:\ i\in T \right\}$ are called the {\it zeros} of the negacyclic code $\C$ and $$\left\{\beta^i:\ i\in \Z_{2n}\backslash T~{\rm and~}~i~{\rm is~ odd} \right\}$$ are the {\it nonzeros} of $\C$. It is easily checked that $\dim(\C)=n-|\mathcal{Z}(\C)|$. The minimum distance of the negacyclic code $\C$ has the following lower bound.

\begin{lemma}\label{lem1}\cite[Lemma 4]{KS90}[The BCH bound for negacyclic codes] 
Let $\mathcal{C}$ be a $q$-ary negacyclic code of length $n$ with zeros $\mathcal{Z}(\C)$. If there are integers $h$ and $\delta$ with $2\leq \delta \leq n$ such that $$\left\{\beta^{1+2i}: h\leq i\leq h+\delta -2 \right\}\subseteq \mathcal{Z}(\C).$$ Then $d(\C)\geq \delta$.
\end{lemma}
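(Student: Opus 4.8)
The plan is to run a Vandermonde-determinant argument in the spirit of the classical BCH bound for cyclic codes, while keeping track of the factor of two that distinguishes the negacyclic setting. First I would argue by contradiction: suppose there is a nonzero codeword $c(x)=\sum_{k=0}^{n-1}c_k x^k \in \C$ of Hamming weight $w$ with $w \leq \delta - 1$, and let $\{i_1, i_2, \ldots, i_w\} \subseteq \{0,1,\ldots,n-1\}$ be its support, so that $c_{i_\ell} \neq 0$ for each $\ell$. Since $c(x)$ lies in the ideal generated by $g(x)=\prod_{i\in\Gamma}\m_{\beta^i}(x)$, it must vanish at every zero of $\C$; in particular, writing $b = 1 + 2h$, the hypothesis gives $c(\beta^{b+2j}) = 0$ for all $j$ with $0 \leq j \leq \delta - 2$.

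Next I would expose the Vandermonde structure. For each such $j$,
\begin{equation*}
0 = c(\beta^{b+2j}) = \sum_{\ell=1}^{w} c_{i_\ell}\,\beta^{(b+2j)i_\ell} = \sum_{\ell=1}^{w} \bigl(c_{i_\ell}\,\beta^{b i_\ell}\bigr)\,\bigl(\beta^{2 i_\ell}\bigr)^{j}.
\end{equation*}
Restricting to $j = 0, 1, \ldots, w-1$ (which is legitimate because $w - 1 \leq \delta - 2$), these $w$ equations form a homogeneous linear system $M \mathbf{y} = \mathbf{0}$, where $\mathbf{y} = (c_{i_1}\beta^{b i_1}, \ldots, c_{i_w}\beta^{b i_w})^{\mathsf{T}}$ and $M$ is the $w \times w$ Vandermonde matrix with $(j,\ell)$-entry $(\beta^{2 i_\ell})^{j}$ for $0 \leq j \leq w-1$ and $1 \leq \ell \leq w$.

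The crux is to show that $M$ is invertible, i.e. that its nodes $\beta^{2 i_1}, \ldots, \beta^{2 i_w}$ are pairwise distinct. This is exactly the step where the negacyclic structure enters: because $\beta$ is a primitive $2n$-th root of unity, $\beta^{2 i_\ell} = \beta^{2 i_{\ell'}}$ holds iff $2 i_\ell \equiv 2 i_{\ell'} \pmod{2n}$, i.e. iff $i_\ell \equiv i_{\ell'} \pmod{n}$; since the positions $i_\ell$ are distinct and all lie in $\{0,1,\ldots,n-1\}$, the nodes must be distinct and $\det M = \prod_{\ell < \ell'}(\beta^{2 i_{\ell'}} - \beta^{2 i_\ell}) \neq 0$. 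Consequently $\mathbf{y} = \mathbf{0}$, and since each $\beta^{b i_\ell} \neq 0$ this forces $c_{i_\ell} = 0$ for all $\ell$, contradicting the choice of the support. Hence no nonzero codeword has weight less than $\delta$, which yields $d(\C) \geq \delta$.

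I expect the only genuinely delicate point to be the distinctness of the nodes $\beta^{2 i_\ell}$, since doubling the exponents could in principle collide two supports differing by $n$; the fact that the support is confined to $\{0,\ldots,n-1\}$ is precisely what rules this out, and this is the feature that makes the odd-power indexing $\beta^{1+2i}$ natural for negacyclic codes. Everything else is the standard reduction of a hypothetical low-weight codeword to a nonsingular Vandermonde system.
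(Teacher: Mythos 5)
Your proof is correct. Note that the paper does not prove this lemma at all --- it is quoted verbatim from Krishna and Sarwate \cite[Lemma 4]{KS90} --- so there is no internal argument to compare against; your Vandermonde argument is the standard one behind that cited result. You handle the one genuinely negacyclic-specific point correctly: the codeword vanishes at the nodes $\beta^{b+2j}$ with $b=1+2h$, and the doubled exponents $\beta^{2i_\ell}$ remain pairwise distinct precisely because $\beta$ is a primitive $2n$-th root of unity and the support indices lie in $\{0,\ldots,n-1\}$, so no two of them can differ by $n$.
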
  

The trace representation of negacyclic codes is documented below (see \cite{DY10}, \cite{SR2018}, \cite[Theorem 1]{SZW20}). 

\begin{lemma}\label{lem-003} 
Let $n$ be a positive integer such that $\gcd(n, q)=1$. Define $m=\ord_{2n}(q)$ and let $\beta \in  \gf(q^m)$ be a primitive $2n$-th root of unity. Let $\C$ be the $q$-ary negacyclic code of length $n$ with check polynomial $\prod_{j=1}^s \m_{\beta^{i_j}}(x)$, where $C_{i_{a}}^{(q, 2n)} \cap C_{i_{b}}^{(q, 2 n)}=\emptyset$ for $a\neq b$. Then $\C$ has the trace representation 
$$\left \{ \left(\sum_{j=1}^s{\rm Tr}_{q^{m_j}/q}(a_j\beta^{-ti_j})\right)_{t=0}^{n-1} \,:\,a_j\in {\rm GF}(q^{m_j}),\,1\leq j\leq s \right \},$$ 
where $m_j=|C_{i_j}^{(q, 2n)}|$ and $\tr_{q^m/q}$ denotes the trace function from $\gf(q^m)$ to $\gf(q)$.
\end{lemma}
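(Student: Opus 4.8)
The plan is to show that the displayed set is contained in $\C$ and that the parametrization $(a_1,\dots,a_s)\mapsto\big(\sum_{j}\Tr_{q^{m_j}/q}(a_j\beta^{-ti_j})\big)_{t}$ is injective; a dimension count then forces equality. First I would record the dimension. Since the cosets $C_{i_j}^{(q,2n)}$ are pairwise disjoint, the check polynomial $h$ has degree $\sum_{j=1}^s m_j$, and combining $\dim(\C)=n-|\mathcal{Z}(\C)|$ with $|\mathcal{Z}(\C)|=\deg g=n-\deg h$ gives $\dim(\C)=\sum_{j=1}^s m_j$. Hence $|\C|=q^{\sum_j m_j}$, which is exactly the number of tuples $(a_1,\dots,a_s)\in\prod_{j}\gf(q^{m_j})$.

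Next I would expand the trace. Writing $\Tr_{q^{m_j}/q}(a_j\beta^{-ti_j})=\sum_{k=0}^{m_j-1}a_j^{q^k}\beta^{-ti_jq^k}$ and using disjointness of the cosets, each codeword takes the form $c_t=\sum_{\ell\in N}b_\ell\beta^{-t\ell}$, where $N=\bigcup_{j=1}^s C_{i_j}^{(q,2n)}$ is the index set of the nonzeros and $b_{i_jq^k}=a_j^{q^k}$. To decide membership in $\C$ I would evaluate $c(x)=\sum_{t=0}^{n-1}c_tx^t$ at $\beta^u$ for odd $u$:
\[
c(\beta^u)=\sum_{\ell\in N}b_\ell\sum_{t=0}^{n-1}\beta^{t(u-\ell)}.
\]
Here is the one step special to negacyclic codes: every $\ell\in N$ indexes a root $\beta^\ell$ of $x^n+1$ and is therefore odd, so $u-\ell$ is even and $\beta^{n(u-\ell)}=(\beta^n)^{u-\ell}=(-1)^{u-\ell}=1$. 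Thus the geometric sum equals $n$ when $u\equiv\ell\pmod{2n}$ and $0$ otherwise, giving $c(\beta^u)=n\,b_u$ with the convention $b_u=0$ for $u\notin N$.

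In particular $c(\beta^u)=0$ at every zero $\beta^u$ of $\C$, that is, for every odd $u\notin N$; since $\gcd(n,q)=1$ makes $x^n+1$ squarefree, this is equivalent to $g(x)\mid c(x)$, so $c(x)\in\C$ and the displayed set lies in $\C$. For injectivity it suffices, by linearity, to show that the zero codeword forces all $a_j=0$: if $c_t=0$ for all $t$, then $n\,b_u=0$ for every $u\in N$, and since $n$ is a unit in $\gf(q)$ we get $b_u=0$, whence $a_j=b_{i_j}=0$ for each $j$. Therefore the parametrization is injective, the displayed set is a subset of $\C$ of size $q^{\sum_j m_j}=|\C|$, and the two must coincide. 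I expect the geometric-sum evaluation to be the only delicate point, since the parity argument yielding $\beta^{n(u-\ell)}=1$ is exactly the negacyclic analogue of the orthogonality of characters that makes the inverse-DFT recovery of the $b_\ell$ work; everything else is bookkeeping with cyclotomic cosets and the dimension formula.
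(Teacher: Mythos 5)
Your proposal is correct, but there is nothing in the paper to measure it against: the paper does not prove Lemma~\ref{lem-003} at all, it quotes it as a known result with pointers to \cite{DY10}, \cite{SR2018} and \cite[Theorem 1]{SZW20}. So your argument stands as a self-contained proof rather than a variant of the paper's. It is sound at every step: the dimension count $\dim(\C)=\deg h=\sum_{j=1}^s m_j$ is exactly what the disjointness hypothesis on the cosets is for; expanding the traces rewrites each candidate word as $c_t=\sum_{\ell\in N}b_\ell\beta^{-t\ell}$ with $N=\bigcup_j C_{i_j}^{(q,2n)}$; and the evaluation $c(\beta^u)=n\,b_u$ for odd $u$ is justified correctly, since every $\ell\in N$ is odd (because $\beta^\ell$ is a root of $x^n+1$, forced by $h(x)\mid x^n+1$), so $u-\ell$ is even, $(\beta^{u-\ell})^n=1$, and the geometric sum collapses to $n$ or $0$; note that without this parity step the sum could equal $\frac{-2}{\beta^{u-\ell}-1}\neq 0$, so you were right to single it out as the crux. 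Combined with squarefreeness of $x^n+1$ (its derivative $nx^{n-1}$ is coprime to it when $\gcd(n,q)=1$), this gives both the containment of the trace set in $\C$ and the injectivity of the parametrization, and the cardinality count $q^{\sum_j m_j}=|\C|$ forces equality. For contrast, the proofs in the cited literature typically proceed structurally: the semisimple ring $\gf(q)[x]/(x^n+1)$ decomposes into minimal ideals, the ideal with check polynomial $\m_{\beta^{i_j}}(x)$ is field-isomorphic to $\gf(q^{m_j})$, and the trace expression is read off from that isomorphism (or from Delsarte's theorem applied to the dual). Your route trades that ring-theoretic machinery for an explicit orthogonality computation plus counting; what it buys is transparency about exactly where the negacyclic (odd-exponent) condition enters.
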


\subsection{The duals of negacyclic codes}

Let $\C$ be a $q$-ary linear code of length $n$. Then its {\it dual code}, denoted by $\C^{\bot}$, is defined by
$$\C^{\bot}=\left\{(c_0,c_1,\ldots,c_{n-1}) \in \gf(q)^n:\  \sum_{i=0}^{n-1}c_ib_i=0, \ \forall \ (b_0,b_1,\ldots,b_{n-1})\in \C \right\}.$$
Similar to classical cyclic codes, the dual codes of negacyclic codes are also negacyclic codes. Let $f(x)=f_0+f_1x+\cdots+f_t x^t\in \gf(q)[x]$, where $f_0 f_t\neq 0$ and $t$ is a positive integer. The {\it reciprocal polynomial} of $f(x)$, denoted by $\widehat{f}(x)$, is defined by $\widehat{f}(x)=f_0^{-1}x^t f(x^{-1})$. Negacyclic codes and their duals have the following relation, which is a fundamental result.  

\begin{lemma}\cite{KS90}
Let $\C$ be a $q$-ary negacyclic code of length $n$ generated by $g(x)$. Then the dual code of $\C$ is the $q$-ary negacyclic code of length $n$ generated by $\widehat{h}(x)$, where $h(x)=(x^n+1)/g(x)$.
\end{lemma}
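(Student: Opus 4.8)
The plan is to follow the classical template for duals of cyclic codes, adapting every step to account for the sign coming from $x^n=-1$ rather than $x^n=1$. Throughout I work in the quotient ring $\gf(q)[x]/(x^n+1)$, in which the negacyclic shift is multiplication by $x$ and $x^{-1}=-x^{n-1}$ is a well-defined unit. The key preliminary observation is a correlation identity: for $\ba=(a_0,\dots,a_{n-1})$ and $\bb=(b_0,\dots,b_{n-1})$ with associated polynomials $a(x)$ and $b(x)$, the coefficient of $x^\ell$ in the ring element $a(x)b(x^{-1})$ equals, up to the sign introduced by reducing negative exponents via $x^{-j}=-x^{n-j}$, the inner product of $\ba$ with a negacyclic shift of $\bb$. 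In particular, $\ba$ is orthogonal to every negacyclic shift of $\bb$ if and only if $a(x)b(x^{-1})\equiv 0\pmod{x^n+1}$. I would prove this by expanding $a(x)b(x^{-1})=\sum_{i,j}a_ib_jx^{i-j}$ and collecting the coefficient of each $x^\ell$, tracking the factor $-1$ produced whenever $i-j<0$ is rewritten with exponent $i-j+n$.

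Next I would verify that $\widehat{h}(x)$ is a genuine negacyclic generator. Writing $R_t(f)(x)=x^tf(x^{-1})$ for the unnormalised reversal, the factorisation $g(x)h(x)=x^n+1$ gives $R_{n-k}(g)\,R_k(h)=R_n(x^n+1)=x^n+1$, where $k=\dim(\C)=\deg h$. Since $(gh)(0)=1$ forces $g(0)h(0)=1$, normalising yields $\widehat{g}(x)\widehat{h}(x)=x^n+1$. Hence $\widehat{h}(x)\mid x^n+1$, so the negacyclic code $\C'=(\widehat{h}(x))$ is well defined and $\dim(\C')=n-\deg\widehat{h}=n-k$.

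The heart of the argument is to show $\C'\subseteq\C^\bot$. Because both codes are ideals, it suffices to test the generators: I claim $g(x)\widehat{h}(x^{-1})\equiv 0\pmod{x^n+1}$. Indeed, from $\widehat{h}(x)=h(0)^{-1}x^kh(x^{-1})$ one obtains $\widehat{h}(x^{-1})=h(0)^{-1}x^{-k}h(x)$, whence $g(x)\widehat{h}(x^{-1})=h(0)^{-1}x^{-k}g(x)h(x)=h(0)^{-1}x^{-k}(x^n+1)\equiv 0$. By the correlation identity applied with $b=\widehat{h}$, and since any codeword of $\C$ has the form $m(x)g(x)$ so that $m(x)g(x)\widehat{h}(x^{-1})\equiv 0$, every codeword of $\C$ is orthogonal to all negacyclic shifts of $\widehat{h}$, i.e.\ to all of $\C'$. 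Thus $\C'\subseteq\C^\bot$.

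Finally, a dimension count closes the proof: $\dim(\C^\bot)=n-\dim(\C)=n-k=\dim(\C')$, so the inclusion $\C'\subseteq\C^\bot$ is an equality and $\C^\bot=(\widehat{h}(x))$. I expect the main obstacle to be the bookkeeping in the correlation identity: unlike the cyclic case, reducing $x^{i-j}$ modulo $x^n+1$ introduces a minus sign for negative exponents, and one must confirm that these signs are exactly compatible with the negacyclic (rather than cyclic) shifts, so that the vanishing of all coefficients really is equivalent to orthogonality against the whole code.
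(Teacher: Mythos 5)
The paper itself offers no proof of this lemma: it is quoted as a known result from the reference [KS90] (Krishna--Sarwate), so there is no internal argument to compare yours against. That said, your proof is correct and complete, and it is the standard adaptation of the cyclic-code duality argument to the ring $\gf(q)[x]/(x^n+1)$. All the key steps check out: $g(0)h(0)=1$ guarantees the reciprocals are well defined and $\widehat{g}(x)\widehat{h}(x)=x^n+1$, so $(\widehat{h}(x))$ is a genuine negacyclic code of dimension $n-k$; the computation $g(x)\widehat{h}(x^{-1})=h(0)^{-1}x^{-k}(x^n+1)\equiv 0$ is valid because $x$ is a unit in the quotient ring; and the dimension count then upgrades the inclusion to equality. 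One remark on the point you flagged as the main obstacle: the bookkeeping in the correlation identity works out even more cleanly than your cautious phrasing (``up to the sign'') suggests. Writing $\sigma$ for the negacyclic shift $(b_0,\dots,b_{n-1})\mapsto(-b_{n-1},b_0,\dots,b_{n-2})$, the coefficient of $x^\ell$ in $a(x)b(x^{-1})\bmod (x^n+1)$ is \emph{exactly} $\langle \ba,\sigma^\ell(\bb)\rangle$: the minus sign produced by reducing $x^{i-j}$ with $i<j$ via $x^{-n}=-1$ is precisely the minus sign carried by the wrapped-around coordinates of $\sigma^\ell(\bb)$. So the equivalence ``$a(x)b(x^{-1})\equiv 0$ iff $\ba$ is orthogonal to every negacyclic shift of $\bb$'' holds on the nose, and since codewords of $(\widehat{h}(x))$ are exactly the linear combinations of the shifts $\sigma^\ell(\widehat{h})$, your inclusion $\C'\subseteq\C^\bot$ follows by bilinearity. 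In short: no gap, and your write-up would serve as a self-contained proof of the cited result.
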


\subsection{Bounds of linear codes }

We recall the following two bounds on linear codes, which will be needed in the sequel.

\begin{lemma}\label{lem2}
{\rm (Sphere Packing Bound \cite{HP2003})} Let $\C$ be a $q$-ary $[n, k, d]$ code, then 
$$\sum_{i=0}^{\lfloor \frac{d-1}2\rfloor} \binom{n}{i} (q-1)^i\leq q^{n-k},$$
where $\lfloor \cdot \rfloor$ is the floor function.
\end{lemma}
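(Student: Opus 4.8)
The plan is to prove this classical bound by a volume-packing argument: I will show that closed Hamming balls of radius $t := \lfloor (d-1)/2\rfloor$ centred at the codewords of $\C$ are pairwise disjoint, count the number of vectors in each such ball, and then invoke the fact that the union of these disjoint balls must fit inside the ambient space $\gf(q)^n$, which has exactly $q^n$ vectors.

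First I would set $t = \lfloor (d-1)/2\rfloor$ and, for each codeword $\bc \in \C$, define the Hamming ball
$$
B(\bc) = \left\{ \bx \in \gf(q)^n :\ \hd(\bx, \bc) \leq t \right\},
$$
where $\hd$ is the Hamming distance. The crucial disjointness claim is that $B(\bc) \cap B(\bc') = \emptyset$ whenever $\bc, \bc'$ are distinct codewords. I would establish this by contradiction using the triangle inequality for the Hamming metric: if some $\bx$ lay in both balls, then $\hd(\bc, \bc') \leq \hd(\bc, \bx) + \hd(\bx, \bc') \leq 2t \leq d-1$, contradicting the hypothesis that the minimum distance of $\C$ equals $d$ (note $\bc \neq \bc'$ forces $\hd(\bc,\bc') \geq d$). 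The inequality $2t \leq d-1$ is exactly what the floor in the definition of $t$ guarantees.

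Next I would compute the cardinality of a single ball. Since the metric is translation invariant, $|B(\bc)|$ does not depend on $\bc$, so it suffices to count vectors of weight at most $t$. For each $i$ with $0 \leq i \leq t$, the number of vectors at distance exactly $i$ from a fixed centre is obtained by choosing the $i$ coordinates in which they differ, in $\binom{n}{i}$ ways, and assigning to each of those coordinates one of the $q-1$ nonzero alternative symbols, giving $\binom{n}{i}(q-1)^i$ such vectors. Summing over $i$ yields
$$
|B(\bc)| = \sum_{i=0}^{t} \binom{n}{i} (q-1)^i.
$$

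Finally I would assemble the pieces. Because $\C$ is a $k$-dimensional $q$-ary code it contains exactly $q^k$ codewords, and by the disjointness claim their balls partition a subset of $\gf(q)^n$. Hence
$$
q^k \cdot \sum_{i=0}^{t} \binom{n}{i} (q-1)^i \;=\; \sum_{\bc \in \C} |B(\bc)| \;\leq\; |\gf(q)^n| \;=\; q^n,
$$
and dividing through by $q^k$ gives the asserted inequality with $n-k$ in the exponent. None of the steps present a genuine obstacle, as this is a standard counting bound; the only point requiring care is verifying the chain $2t \leq d-1$ so that the disjointness argument closes cleanly, and ensuring the weight enumeration of a ball is carried out without double counting.
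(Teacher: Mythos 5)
Your proof is correct and complete: the disjointness of the radius-$\lfloor (d-1)/2\rfloor$ Hamming balls via the triangle inequality, the count $\sum_{i=0}^{t}\binom{n}{i}(q-1)^i$ for each ball, and the comparison with $|\gf(q)^n| = q^n$ together give exactly the stated inequality. The paper itself offers no proof of this lemma, citing it as a classical result from \cite{HP2003}; your volume-packing argument is precisely the standard proof found there, so there is nothing to reconcile.
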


The following lemma is the sphere packing bound for even minimum distances.

\begin{lemma}{\rm \cite{FWF2017}}\label{lem3}
Let $\C$ be a $q$-ary $[n,k,d]$ code, where $d$ is an even integer.	Then
$$\sum_{i=0}^{\frac{d-2}2} \binom{n-1}{i} (q-1)^i\leq q^{n-1-k}.$$
\end{lemma}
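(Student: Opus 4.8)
The plan is to reduce the even-distance case to the ordinary Sphere Packing Bound of Lemma \ref{lem2} by \emph{puncturing} $\C$ in a single coordinate, thereby trading one unit of length for the ``extra'' factor of $q$ that separates the two bounds. Write $t=(d-2)/2$, so that $d=2t+2$ and $\lfloor (d-1)/2 \rfloor = t$; the target inequality is exactly $\sum_{i=0}^{t}\binom{n-1}{i}(q-1)^i \leq q^{n-1-k}$.

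First I would fix any coordinate and let $\C^*$ denote the code of length $n-1$ obtained by deleting that coordinate from every codeword of $\C$. The two facts I need about $\C^*$ are standard. Since $d$ is a positive even integer we have $d\geq 2$, so $\C$ contains no nonzero codeword of weight $\leq 1$; consequently two codewords of $\C$ that agree off the punctured coordinate must coincide, the puncturing map is injective on $\C$, and $\dim(\C^*)=k$. Moreover deleting one coordinate lowers the distance between any two codewords by at most $1$, so $\C^*$ has minimum distance $d^*\geq d-1$.

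Next I would apply Lemma \ref{lem2} to the $q$-ary $[\,n-1,\,k,\,d^*\,]$ code $\C^*$, which yields
\[
\sum_{i=0}^{\lfloor (d^*-1)/2 \rfloor} \binom{n-1}{i}(q-1)^i \leq q^{(n-1)-k}.
\]
It remains only to compare the ranges of summation. Because $d$ is even, $d-1$ is odd and $d^*\geq d-1$ forces $\lfloor (d^*-1)/2 \rfloor \geq \lfloor (d-2)/2 \rfloor = t$. Hence the finite sum $\sum_{i=0}^{t}\binom{n-1}{i}(q-1)^i$ is a partial sum of a series with nonnegative terms that is bounded above by $q^{n-1-k}$, and the claim follows at once.

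There is no serious obstacle in this argument; the only point requiring care — and the place where the even hypothesis is genuinely used — is the floor identity $\lfloor (d-2)/2\rfloor = t$ together with the bookkeeping that puncturing preserves the dimension (needing $d\geq 2$) while costing at most one unit of minimum distance. An alternative I would keep in reserve, should one prefer to avoid invoking Lemma \ref{lem2} as a black box, is a direct packing argument in $\gf(q)^{n-1}$: the $q^k$ projections of the codewords have pairwise distance $\geq d-1=2t+1$, so the balls of radius $t$ centred at them are pairwise disjoint, and counting volumes gives $q^{k}\sum_{i=0}^{t}\binom{n-1}{i}(q-1)^i\leq q^{n-1}$ immediately. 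This is the same phenomenon viewed geometrically, but the puncturing route is shorter and relies only on the previously stated bound, so that is the one I would write up.
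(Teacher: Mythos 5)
Your proof is correct. Note that the paper does not actually prove Lemma \ref{lem3}: it is quoted from \cite{FWF2017} as a known result, so there is no internal proof to compare against. Your puncturing argument is the clean, standard way to derive it from the ordinary Sphere Packing Bound (Lemma \ref{lem2}): since $d\geq 2$, puncturing one coordinate is injective on $\C$, so the punctured code $\C^*$ is an $[n-1,k,d^*]$ code with $d^*\geq d-1$; applying Lemma \ref{lem2} to $\C^*$ and using the parity identity $\lfloor (d^*-1)/2\rfloor \geq \lfloor (d-2)/2\rfloor = (d-2)/2$ (this is exactly where evenness of $d$ enters) gives the stated inequality. Both of the facts you flag as needing care are handled correctly, and the alternative direct ball-packing argument in $\gf(q)^{n-1}$ that you keep in reserve is equally valid; either write-up would serve as a self-contained proof of the cited lemma.
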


A $q$-ary $[n, k, d]$ code is said to be {\it distance-optimal} if there is no $q$-ary $[n, k, d']$ code with $d'>d$. A $q$-ary $[n, k, d]$ code is said to be {\it dimension-optimal} if there is no $q$-ary $[n, k', d]$ code with $k'> k$. A $q$-ary $[n, k, d]$ code is said to be {\it length-optimal} if there is no $q$-ary $[n', k, d]$ code with $n'< n$. A linear code is said to be {\it optimal} if it is distance-optimal, or dimension-optimal or length-optimal. 


\subsection{Several equivalences of linear codes}

In this paper, we will need the following notions: 
\begin{itemize}
\item The permutation automorphism group of a linear code $\C$ denoted by $\PAut(\C)$ and 
          the permutation equivalence of two linear codes over a finite field. 
\item The monomial automorphism group of a linear code $\C$ denoted by $\MAut(\C)$ and 
          the monomial equivalence of two linear codes over a finite field. 
\item The automorphism group of a linear code $\C$ denoted by $\Aut(\C)$ and 
          the equivalence of two linear codes over a finite field.           
\end{itemize} 
For definitions of these concepts, the reader is referred to \cite[Section 2.8]{dingtang2022} or \cite[Chapter 1]{HP2003}. 
Two $q$-ary linear codes $\C_1$ and $\C_2$ are said to be \emph{scalar-equivalent\index{scalar-equivalent}} if there is an invertible diagonal matrix $D$ over $\gf(q)$ such that $\C_2=\C_1D$. 

When $n$ is an odd positive integer,  let 
\begin{align*}
\psi: \ \gf(q)[x]/(x^n+1)\rightarrow \gf(q)[x]/( x^n-1), \ c(x)\mapsto c(-x).
\end{align*}
It is easily checked that $\psi$ is a ring isomorphism. Define $\psi(\C)=\{\psi(\bc):\ \bc \in \C  \}$. Then we have the following result.

\begin{theorem}\cite{Black08}
Let $n$ be an odd positive integer. Then $\C$ is a $q$-ary negacyclic code of length $n$ if and only if $\psi(\C)$ is a $q$-ary cyclic code of length $n$. Furthermore, the negacyclic code $\C$ is scalar-equivalent to the cyclic code $\psi(\C)$.
\end{theorem}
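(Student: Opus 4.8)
The plan is to exploit the single fact recorded just before the statement, namely that the substitution map $\psi\colon c(x)\mapsto c(-x)$ is a ring isomorphism from $\gf(q)[x]/(x^n+1)$ onto $\gf(q)[x]/(x^n-1)$, together with the identification (via $\Phi$) of negacyclic and cyclic codes with ideals of the respective quotient rings. The whole argument then reduces to two elementary observations: a ring isomorphism carries ideals bijectively onto ideals, and $\psi$ acts on coordinate vectors as a diagonal scaling. I do not expect a genuine obstacle; the result is structural, and once $\psi$ is known to be a ring isomorphism everything is essentially forced.

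For the equivalence, I would first recall that $\C$ is a negacyclic code of length $n$ precisely when $\Phi(\C)$ is an ideal of $\gf(q)[x]/(x^n+1)$, while a cyclic code of length $n$ is precisely an ideal of $\gf(q)[x]/(x^n-1)$. Since $\psi$ is a ring isomorphism, the assignment $I\mapsto\psi(I)$ is a bijection between the ideals of the two quotient rings, with inverse $J\mapsto\psi^{-1}(J)$; moreover $\psi$ is $\gf(q)$-linear, so it sends $\Phi(\C)$ to a subspace of the same dimension that is an ideal if and only if $\Phi(\C)$ is. Tracing this through the identification $\Phi$ yields the desired equivalence: $\C$ is negacyclic $\iff$ $\psi(\C)$ is cyclic. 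The hypothesis that $n$ is odd is used exactly in the well-definedness of $\psi$: one checks $(-x)^n+1=-(x^n-1)$ when $n$ is odd, so the ideal $(x^n+1)$ is carried into $(x^n-1)$ and the induced map between quotients makes sense; for even $n$ the substitution would return $x^n+1$ and the statement would fail.

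For the scalar-equivalence claim, I would compute the effect of $\psi$ on the coefficient vector of a codeword. Writing $c(x)=\sum_{i=0}^{n-1}c_ix^i$, we have $\psi(c(x))=c(-x)=\sum_{i=0}^{n-1}(-1)^i c_i x^i$, and since this representative already has degree less than $n$, no reduction modulo $x^n-1$ is needed. Hence, on coordinates, $\psi$ is nothing but right multiplication by the diagonal matrix $D=\mathrm{diag}\big(1,-1,1,\ldots,(-1)^{n-1}\big)$. Because $q$ is odd we have $-1\neq 0$, so $D$ is an invertible diagonal matrix over $\gf(q)$, and $\psi(\C)=\C D$. By the definition of scalar equivalence this shows that the negacyclic code $\C$ and the cyclic code $\psi(\C)$ are scalar-equivalent, completing the proof.

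The only points requiring care are therefore the well-definedness of $\psi$ (which is where the parity of $n$ enters) and keeping straight the identification $\Phi$ between codewords and polynomials, so that ``$\psi(\C)$'' is unambiguously read as a set of coordinate vectors rather than of polynomials. Both are routine, so I would expect the write-up to be short.
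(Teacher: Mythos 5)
Your proof is correct. Note that the paper itself gives no proof of this statement (it is quoted from Blackford's paper on negacyclic duadic codes); your argument is exactly the standard one suggested by the surrounding text: the paper records that $\psi$ is a ring isomorphism immediately before the theorem, and your two observations --- that a ring isomorphism carries ideals bijectively to ideals (giving the equivalence via the identification $\Phi$), and that on coordinates $\psi$ acts as right multiplication by the invertible diagonal matrix $\mathrm{diag}\bigl(1,-1,\ldots,(-1)^{n-1}\bigr)$ (giving scalar equivalence as defined in the paper) --- are precisely what is needed, with the parity of $n$ correctly located in the well-definedness of $\psi$.
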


Table \ref{tab-sun1} shows that there are distance-optimal negacyclic codes that are not scalar-equivalent to cyclic codes.
Although a negacyclic code $\C$ of odd length $n$ is scalar-equivalent to a cyclic code $\psi(\C)$, it is still valuable 
to study $\C$, as the parameters of the cyclic code $\psi(\C)$ may still be open in the literature.

\section{A theorem about negacyclic codes of even length}\label{sec-negacodeevenlength} 

Let $n$ be an even integer and $q\equiv 1\pmod{4}$. Then there is a $\lambda\in \gf(q)^*$ such that $\lambda^2=-1$. It follows that $x^n+1=(x^{\frac{n}2}-\lambda)(x^{\frac{n}2}+\lambda)$. Let $e_1(x)=\frac{\lambda}2(x^{\frac{n}2}-\lambda)$ and $e_2(x)=-\frac{\lambda}2(x^{\frac{n}2}+\lambda)$. In the ring $R:=\gf(q)[x]/(x^n+1)$, it is easily checked that $e_1(x)+e_2(x)=1$, $e_1(x)^2=e_1(x)$, $e_2(x)^2=e_2(x)$ and $e_1(x)e_2(x)=0$. Then $R$ is the direct sum of the ideals generated by the $e_i(x)$, i.e., $$R=e_1(x)R+e_2(x)R.$$ Furthermore, $e_1(x)R\cong \gf(q)[x]/(x^{\frac{n}2}+\lambda) $ and $e_2(x)R\cong \gf(q)[x]/(x^{\frac{n}2}-\lambda) $. Let $\C$ be a $q$-ary negacyclic code of length $n$. We associate to $\C$ the following two $q$-ary linear codes:
\begin{align*}
{\rm Res}_1(\C)&=\left\{c(x)~({\rm mod}~x^{\frac{n}2}+\lambda):\ c(x)\in \C \right\},\\
{\rm Res}_2(\C)&=\left \{c(x)~({\rm mod}~x^{\frac{n}2}-\lambda): \ c(x)\in \C \right \}.	
\end{align*}
By definition, the code ${\rm Res}_1(\C)$ (resp. ${\rm Res}_2(\C)$) is a $q$-ary $(-\lambda)$-constacyclic (resp. $\lambda$-constacyclic) code of length $\frac{n}2$. Moreover, we have the following results.

\begin{theorem}\label{THM05}
Let $n$ be an even integer and $q\equiv 1~({\rm mod}~4)$.  Let $\C$ be a $q$-ary negacyclic code of length $n$ with generator polynomial $g(x)$. Then 
\begin{eqnarray*}
\C &=& e_1{\rm Res}_1(\C)+e_2{\rm Res}_2(\C) \\ 
&=& \left\{e_1(x)c_1(x)+e_2(x)c_2(x): \ c_1(x)\in {\rm Res}_1(\C), c_2(x)\in {\rm Res}_2(\C) \right\}.
\end{eqnarray*} 
Furthermore, the following hold.
\begin{enumerate}
\item The code ${\rm Res}_1(\C)$ is the $q$-ary $(-\lambda)$-constacyclic code of length $\frac{n}2$ with generator polynomial $g_1(x)=\gcd(g(x), x^{\frac{n}2}+\lambda)$.
\item The code ${\rm Res}_2(\C)$ is the $q$-ary $\lambda$-constacyclic code of length $\frac{n}2$ with generator polynomial 
$g_2(x)=\gcd(g(x), x^{\frac{n}2}-\lambda).$

\item $\dim(\C)=\dim({\rm Res}_1(\C))+\dim({\rm Res}_2(\C))$.
\item If $(x^{\frac{n}2}-\lambda) \mid g(x)$, then $d(\C)= 2\cdot d({\rm Res}_1(\C))$.
\item If $(x^{\frac{n}2}+\lambda) \mid g(x)$, then $d(\C)= 2\cdot d({\rm Res}_2(\C))$.
\item If $(x^{\frac{n}2}-\lambda) \nmid g(x)$ and $(x^{\frac{n}2}+\lambda) \nmid g(x)$, then 
$$ d(\C)= 2\cdot \min \{d({\rm Res}_1(\C)),d({\rm Res}_2(\C)) \},$$ provided that $$2\cdot \min \{d({\rm Res}_1(\C)),d({\rm Res}_2(\C)) \} \leq \max\{d({\rm Res}_1(\C)),d({\rm Res}_2(\C)) \},$$
and 
$$\max\{d({\rm Res}_1(\C)),d({\rm Res}_2(\C)) \}\leq   d(\C)\leq 2\cdot \min \{d({\rm Res}_1(\C)),d({\rm Res}_2(\C)) \},$$   
provided that $$2\cdot \min \{d({\rm Res}_1(\C)),d({\rm Res}_2(\C)) \}>\max\{d({\rm Res}_1(\C)),d({\rm Res}_2(\C)) \}.$$
\end{enumerate}	
\end{theorem}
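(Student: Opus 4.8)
The plan is to push everything through the ring decomposition $R=e_1(x)R\oplus e_2(x)R$ set up before the statement, reducing each claim to a statement about the Hamming weights of the two residue codewords $c_1\in{\rm Res}_1(\C)$ and $c_2\in{\rm Res}_2(\C)$.

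First I would record the structural facts underlying the displayed identity and parts (1)--(3). Since $\C$ is an ideal of $R$ and $R=e_1R\oplus e_2R$ with orthogonal idempotents, one has $\C=e_1\C\oplus e_2\C$. The computation $e_1(x)(x^{\frac{n}{2}}+\lambda)=0$ in $R$ shows that $e_1c$ depends only on $c\bmod(x^{\frac{n}{2}}+\lambda)$, so the projection $R\to\gf(q)[x]/(x^{\frac{n}{2}}+\lambda)$ carries $e_1\C$ bijectively onto ${\rm Res}_1(\C)$ (and similarly for index $2$); this yields the displayed direct-sum description of $\C$. For (1)--(2) I would use that the image of the principal ideal $(g)$ under the projection to $\gf(q)[x]/(x^{\frac{n}{2}}+\lambda)$ is generated by $\gcd(g,x^{\frac{n}{2}}+\lambda)$, and that, because $x^{\frac{n}{2}}-\lambda$ and $x^{\frac{n}{2}}+\lambda$ are coprime, $g=g_1g_2$ with $g_1=\gcd(g,x^{\frac{n}{2}}+\lambda)$ and $g_2=\gcd(g,x^{\frac{n}{2}}-\lambda)$. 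Part (3) is then a degree count: $\dim({\rm Res}_1(\C))+\dim({\rm Res}_2(\C))=(\frac{n}{2}-\deg g_1)+(\frac{n}{2}-\deg g_2)=n-\deg g=\dim(\C)$.

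The computational heart is a single weight identity. Writing $e_1(x)=\frac12(1+\lambda x^{\frac{n}{2}})$ and $e_2(x)=\frac12(1-\lambda x^{\frac{n}{2}})$ (legitimate since $2\neq0$ in $\gf(q)$) and taking representatives of degree $<\frac{n}{2}$ with coefficient vectors $\ba,\bb\in\gf(q)^{n/2}$, a direct expansion shows that the length-$n$ word $c=e_1c_1+e_2c_2$ has first half $\frac12(\ba+\bb)$ and second half $\frac{\lambda}{2}(\ba-\bb)$. Counting nonzero coordinates gives $\wt(c)=\wt(\ba+\bb)+\wt(\ba-\bb)$, and I would isolate this as a lemma, since the remaining parts are consequences of it.

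Parts (4)--(5) follow immediately: if $(x^{\frac{n}{2}}-\lambda)\mid g$ then $g_2=x^{\frac{n}{2}}-\lambda$, so ${\rm Res}_2(\C)=\{\bzero\}$, forcing $\bb=\bzero$ and $\wt(c)=2\wt(\ba)$; minimizing over nonzero $\ba$ gives $d(\C)=2\,d({\rm Res}_1(\C))$, and symmetrically for (5). Part (6), where both residue codes are nonzero, is the main obstacle. From the identity and the elementary inequality $\wt(u)+\wt(v)\geq\wt(u\pm v)$ I obtain $\wt(c)\geq\max\{\wt(\ba),\wt(\bb)\}$, while the boundary pairs $(\ba,\bzero)$ and $(\bzero,\bb)$ give weights $2\wt(\ba)$ and $2\wt(\bb)$. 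Writing $d_1=d({\rm Res}_1(\C))$, $d_2=d({\rm Res}_2(\C))$ and assuming $d_1\leq d_2$ without loss of generality, I would split the minimization over nonzero pairs into the three regimes $\bb=\bzero$, $\ba=\bzero$, and both nonzero, with lower bounds $2d_1$, $2d_2$, and $d_2=\max\{d_1,d_2\}$ respectively, the value $2d_1$ being attained by a minimum-weight word of ${\rm Res}_1(\C)$ paired with $\bb=\bzero$. When $2d_1\leq d_2$ all three lower bounds are at least $2d_1$, so the attained value $2d_1$ is the minimum and $d(\C)=2\min\{d_1,d_2\}$; when $2d_1>d_2$ the sharpest available lower bound is $d_2=\max\{d_1,d_2\}$ while $2d_1=2\min\{d_1,d_2\}$ remains an attainable upper bound, giving the stated interval. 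The delicate point is exactly that in this second regime the both-nonzero contribution cannot be determined from $d_1,d_2$ alone, so equality genuinely fails and only the two-sided estimate survives.
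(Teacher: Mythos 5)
Your proposal is correct and follows essentially the same route as the paper's proof: the same idempotent decomposition, the same key weight identity $\wt(c)=\wt(c_1+c_2)+\wt(c_1-c_2)$ obtained by splitting the length-$n$ word into its two halves, and the same three-case analysis (including the observation that the both-nonzero case only yields the bound $\max\{d_1,d_2\}$, so that in the regime $2\min\{d_1,d_2\}>\max\{d_1,d_2\}$ only the two-sided estimate survives). The lone difference is cosmetic: for parts (1)--(3) you invoke the abstract facts that an ideal splits under orthogonal idempotents and that the image of a principal ideal in a quotient is generated by the gcd, whereas the paper establishes the same containments by hand via a Bezout identity and a cardinality count.
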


\begin{proof}
Firstly, we prove that ${\rm Res}_i(\C)$ is the $q$-ary constacyclic code generated by $g_i(x)$. Let $\mathcal{D}$ be the $q$-ary $(-\lambda)$-constacyclic code of length $\frac{n}2$ with generator polynomial $g_1(x)$. 
Let $c_1(x)\in {\rm Res}_1(\C)$, then there is $c(x)\in \C$ such that $c_1(x)\equiv c(x)~({\rm mod}~x^{\frac{n}2}+\lambda)$. It is clear that $$\gcd(c(x),x^{\frac{n}2}+\lambda)=\gcd(c_1(x),x^{\frac{n}2}+\lambda).$$
 Then $g_1(x)$ divides $c_1(x)$. It follows that ${\rm Res}_1(\C)\subseteq \mathcal{D}$. On the other hand, let $c_2(x)\in \mathcal{D}$, then $g_1(x)$ divides $c_2(x)$. Note that $\gcd\left(\frac{x^{\frac{n}2}+\lambda}{g_1(x)},\frac{g(x)}{g_1(x)} \right)=1$, then there are $a_1(x)$ and $a_2(x)$ such that $$a_1(x)\left(\frac{x^{\frac{n}2}+\lambda}{g_1(x)}\right)+a_2(x)\left(\frac{g(x)}{g_1(x)}\right)=1.$$ It follows that 
$$c(x):=a_2(x)\left(\frac{g(x)}{g_1(x)}\right) c_2(x)=c_2(x)-a_1(x)\left(\frac{x^{\frac{n}2}+\lambda}{g_1(x)}\right)c_2(x).$$
Since $g_1(x)\mid c_2(x)$, we have $g(x)\mid c(x)$ and $c_2(x)\equiv c(x)~({\rm mod}~x^{\frac{n}2}+\lambda) $. Therefore, $c_2(x)\in {\rm Res}_1(\C)$. It follows that $\mathcal{D}\subseteq {\rm Res}_1(\C)$. Consequently, ${\rm Res}_1(\C)=\mathcal{D}$, i.e., ${\rm Res}_1(\C)$ is the $q$-ary $(-\lambda)$-constacyclic code of length $\frac{n}2$ with generator polynomial $g_1(x)$. By a similar method, we can prove that ${\rm Res}_2(\C)$ is the $q$-ary $\lambda$-constacyclic code of length $\frac{n}2$ with generator polynomial $g_2(x)$.

Secondly, we prove that $\C=e_1{\rm Res}_1(\C)+e_2{\rm Res}_2(\C)$. Since $1=e_1(x)+e_2(x)$, any $c(x)\in \C$ can be written uniquely in the form 
\begin{align*}
c(x)&=e_1(x)c(x)+e_2(x)c(x)\\
&=e_1(x) c_1(x)+e_2(x) c_2(x),	
\end{align*}
where $c_1(x)\equiv c(x)~({\rm mod}~x^{\frac{n}2}+\lambda)\in {\rm Res}_1(\C)$ and $c_2(x)\equiv c(x)~({\rm mod}~x^{\frac{n}2}-\lambda)\in {\rm Res}_2(\C)$. Therefore, $\C \subseteq e_1{\rm Res}_1(\C)+e_2{\rm Res}_2(\C)$. On the other hand, $\dim({\rm Res}_1(\C))=\frac{n}2-\deg(g_1(x))$ and $\dim({\rm Res}_2(\C))=\frac{n}2-\deg(g_2(x))$. It is easily checked that $g(x)=g_1(x)g_2(x)$. Then 
\begin{align*}
\dim(\C)&=n-\deg(g(x))\\
&=n-\deg(g_1(x))-\deg(g_2(x))\\
&=\dim({\rm Res}_1(\C))+\dim({\rm Res}_2(\C)).	
\end{align*}
Therefore, $|\C|=|{\rm Res}_1(\C)|\cdot |{\rm Res}_2(\C)|=|e_1{\rm Res}_1(\C)+e_2{\rm Res}_2(\C)|$. Consequently, $\C=e_1{\rm Res}_1(\C)+e_2{\rm Res}_2(\C)$. 

Finally, we study the minimum distance of $\C$. Let $c(x)=e_1(x)c_1(x)+e_2(x)c_2(x)\in \C$ and $c(x)\neq 0$, where $c_1(x)\in {\rm Res}_1(\C)$ and $c_2(x)\in {\rm Res}_2(\C)$ are not all $0$. Note that
\begin{align*}
c(x)=\frac{\lambda}2(c_1(x)-c_2(x)) x^{\frac{n}2}+\frac{1}2(c_1(x)+c_2(x)),	
\end{align*}
we have
\begin{align}\label{EEE-1}
\wt(c(x))=\wt(c_1(x)-c_2(x))+\wt(c_1(x)+c_2(x)).
\end{align}
There are the following three cases.
\begin{itemize}
\item If $(x^{\frac{n}2}-\lambda) \mid g(x)$, then ${\rm Res}_2(\C)=\{\0\}$, i.e., $c_2(x)\equiv 0$. It follows from (\ref{EEE-1}) that $$\wt(c(x))=2\cdot \wt(c_1(x)).$$ Therefore, $d(\C)=2\cdot d({\rm Res}_1(\C))$. 	
\item If $(x^{\frac{n}2}+\lambda) \mid g(x)$, then ${\rm Res}_1(\C)=\{\0\}$. i.e., $c_1(x)\equiv 0$. It follows from (\ref{EEE-1}) that $$\wt(c(x))=2\cdot \wt(c_2(x)).$$ Therefore, $d(\C)=2\cdot d({\rm Res}_2(\C))$. 
\item If $(x^{\frac{n}2}-\lambda) \nmid g(x)$ and $(x^{\frac{n}2}+\lambda) \nmid g(x)$, then ${\rm Res}_i(\C)\neq \{\0\}$ for $i\in \{1,2\}$. On one hand, if $c_1(x)=0$ (resp. $c_2(x)=0$), from (\ref{EEE-1}), $\wt(c(x))=2\cdot \wt(c_2(x))$ (resp. $\wt(c(x))=2\cdot \wt(c_1(x))$). For any two elements $a$ and $b$ in $\gf(q)$, we have $a-b=0$ and $a+b=0$ if and only if $a=b=0$, as $q$ is odd.
On the other hand, if $c_1(x)\neq 0$ and $c_2(x)\neq 0$, then 
\begin{align*}
\wt(c(x))\geq \max \{ \wt(c_1(x)), \wt(c_2(x))\}\geq  \max \{ d({\rm Res}_1(\C)), d({\rm Res}_2(\C))\}.
\end{align*}
Combining the results above with the discussions of the two cases above, we have the following conclusions.
\begin{itemize}
\item If $2 \cdot \min \{d({\rm Res}_1(\C)),d({\rm Res}_2(\C))\}\leq \max \{ d({\rm Res}_1(\C)), d({\rm Res}_2(\C))\}$, we have $$d(\C)=2 \cdot \min \{d({\rm Res}_1(\C)),d({\rm Res}_2(\C))\}.$$
\item If $\max \{ d({\rm Res}_1(\C)), d({\rm Res}_2(\C))\}<2 \cdot \min \{d({\rm Res}_1(\C)),d({\rm Res}_2(\C))\}$, we have $$\max \{ d({\rm Res}_1(\C)), d({\rm Res}_2(\C))\}\leq d(\C)\leq 2 \cdot \min \{d({\rm Res}_1(\C)),d({\rm Res}_2(\C))\}.$$
\end{itemize}
\end{itemize}
This completes the proof.
\end{proof}

For two $q$-ary linear codes $\C_1$, $\C_2$ of length $n$, the {\it $u+v|u-v$ construction} (see \cite{Hughes20}) is defined by
$$\C_1 \curlyvee \C_2=\{(\bc_1+\bc_2, \bc_1-\bc_2):\ \bc_1\in \C_1, \bc_2\in \C_2 \}.$$
Theorem \ref{THM05} shows that if $n$ is even and $q\equiv 1\pmod{4}$, then the $q$-ary negacyclic code $\C$ of length $n$ generated by $g(x)$ is scalar-equivalent to ${\rm Res}_1(\C) \curlyvee {\rm Res}_2(\C)$. 

Theorem \ref{THM05} can be used to study a $q$-ary negacyclic code $\C$ of even length $n$ via the study of the two associated 
constacyclic codes ${\rm Res}_1(\C)$ and ${\rm Res}_2(\C)$ of length $\frac{n}{2}$ if $q \equiv 1 \pmod{4}$. Hence, it may be very useful in some 
cases. However, it cannot be used to study ternary negacyclic codes of even length.

\section{The first family of ternary negacyclic codes and their duals}\label{sec3}

In this section, let $\rho>3$ be an odd prime, we will construct a family of ternary irreducible negacyclic codes of length $2\rho$. To settle the value of $\ord_{4\rho}(3)$, we need the following two lemmas.

\begin{lemma}\cite{IR1990}\label{lem:4}
$3$ is a quadratic residue of primes of the form $12\ell+1$ and $12\ell-1$. $3$ is a quadratic nonresidue of primes of the form $12\ell+5$ and $12\ell-5$. 
\end{lemma}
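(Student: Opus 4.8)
The plan is to reduce the statement to a direct evaluation of the Legendre symbol $\left(\frac{3}{p}\right)$ by means of the law of quadratic reciprocity. First I would observe that the four arithmetic progressions $12\ell+1$, $12\ell-1$, $12\ell+5$ and $12\ell-5$ represent precisely the residue classes $1$, $11$, $5$ and $7$ modulo $12$, that is, the four classes coprime to $12$. Since every prime $p>3$ lies in exactly one of these classes, it suffices to determine $\left(\frac{3}{p}\right)$ as a function of $p \bmod 12$, and to show it equals $+1$ in the first two cases and $-1$ in the last two.

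Next I would apply quadratic reciprocity to the distinct odd primes $3$ and $p$. Since $\frac{3-1}{2}=1$, reciprocity gives
\[
\left(\frac{3}{p}\right)\left(\frac{p}{3}\right)=(-1)^{\frac{p-1}{2}},
\]
so that $\left(\frac{3}{p}\right)=(-1)^{\frac{p-1}{2}}\left(\frac{p}{3}\right)$. The two factors on the right are governed by $p$ modulo $4$ and modulo $3$, respectively: the sign $(-1)^{\frac{p-1}{2}}$ equals $+1$ when $p\equiv 1\pmod 4$ and $-1$ when $p\equiv 3\pmod 4$, while $\left(\frac{p}{3}\right)$ equals $+1$ when $p\equiv 1\pmod 3$ and $-1$ when $p\equiv 2\pmod 3$ (the latter because $2$ is a quadratic nonresidue modulo $3$). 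By the Chinese Remainder Theorem the pair of residues of $p$ modulo $4$ and modulo $3$ is equivalent to the residue of $p$ modulo $12$, so the product of the two factors is a well-defined function of $p \bmod 12$.

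Finally I would tabulate the four cases. For $p\equiv 1\pmod{12}$ one has $p\equiv 1\pmod 4$ and $p\equiv 1\pmod 3$, giving $(+1)(+1)=+1$; for $p\equiv 11\pmod{12}$ one has $p\equiv 3\pmod 4$ and $p\equiv 2\pmod 3$, giving $(-1)(-1)=+1$; for $p\equiv 5\pmod{12}$ one has $p\equiv 1\pmod 4$ and $p\equiv 2\pmod 3$, giving $(+1)(-1)=-1$; and for $p\equiv 7\pmod{12}$ one has $p\equiv 3\pmod 4$ and $p\equiv 1\pmod 3$, giving $(-1)(+1)=-1$. Hence $\left(\frac{3}{p}\right)=+1$ exactly for $p$ of the form $12\ell\pm 1$ and $\left(\frac{3}{p}\right)=-1$ exactly for $p$ of the form $12\ell\pm 5$, which is the claimed statement. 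This argument is entirely routine: there is no genuine obstacle beyond the sign bookkeeping across the four residue classes, and the only points requiring care are the two supplementary evaluations, namely $(-1)^{\frac{p-1}{2}}$ as a function of $p \bmod 4$ and $\left(\frac{p}{3}\right)$ as a function of $p \bmod 3$.
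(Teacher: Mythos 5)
Your proof is correct. The paper does not prove this lemma at all---it is quoted from the cited reference (Ireland and Rosen)---and your argument via quadratic reciprocity, $\left(\frac{3}{p}\right)=(-1)^{\frac{p-1}{2}}\left(\frac{p}{3}\right)$, followed by the four-case check modulo $12$, is exactly the standard textbook derivation of that result; all four sign computations are accurate, and the implicit hypothesis $p>3$ (needed for reciprocity between distinct odd primes) is automatic since every prime in the classes $\pm 1,\pm 5 \pmod{12}$ is coprime to $12$.
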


\begin{lemma} \label{LLEM-1}
Let $\rho>3$ be an odd prime, then $\ord_{4\rho}(3)=\rho-1$ if and only if the one of the following conditions holds:
\begin{enumerate}
	\item $\rho \equiv \pm 5 ~({\rm mod}~12)$ and $\ord_{\rho}(3)=\rho-1$.
	\item $\rho \equiv -1  ~({\rm mod}~12)$ and $\ord_{\rho}(3)=\frac{\rho-1}2$.
\end{enumerate}	
\end{lemma}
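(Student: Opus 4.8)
The plan is to reduce the computation of $\ord_{4\rho}(3)$ to that of $\ord_{\rho}(3)$ via the Chinese Remainder Theorem, and then to translate the resulting parity conditions into congruences modulo $12$ using Lemma \ref{lem:4}. First I would note that, since $\rho>3$ is an odd prime, $\gcd(4,\rho)=1$, so $\Z_{4\rho}^*\cong \Z_4^*\times \Z_\rho^*$ and hence $\ord_{4\rho}(3)=\lcm(\ord_4(3),\ord_\rho(3))$. Because $3\equiv -1\pmod 4$ we have $\ord_4(3)=2$, so writing $d:=\ord_\rho(3)$ (which divides $\rho-1$ since $\Z_\rho^*$ is cyclic of order $\rho-1$), the whole problem reduces to deciding when $\lcm(2,d)=\rho-1$.

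Next I would split on the parity of $d$. As $\rho-1$ is even and $d\mid \rho-1$, the equation $\lcm(2,d)=\rho-1$ holds in exactly two situations: either $d$ is even, forcing $\lcm(2,d)=d$ and hence $d=\rho-1$; or $d$ is odd, forcing $\lcm(2,d)=2d$ and hence $d=\frac{\rho-1}2$ with $\frac{\rho-1}2$ odd, i.e.\ $\rho\equiv 3\pmod 4$. This yields a clean arithmetic characterisation: $\ord_{4\rho}(3)=\rho-1$ if and only if $\ord_\rho(3)=\rho-1$, or $\ord_\rho(3)=\frac{\rho-1}2$ together with $\rho\equiv 3\pmod 4$.

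To match the stated form I would then invoke the quadratic-character criterion $3^{(\rho-1)/2}\equiv \pm 1\pmod \rho$ together with Lemma \ref{lem:4}. In the first case, $d=\rho-1$ means $3$ is a primitive root modulo $\rho$, so $3$ is a quadratic nonresidue, and Lemma \ref{lem:4} forces $\rho\equiv \pm 5\pmod{12}$; this recovers condition (1). In the second case $d=\frac{\rho-1}2$ gives $3^{(\rho-1)/2}=3^{d}\equiv 1\pmod \rho$, so $3$ is a quadratic residue and Lemma \ref{lem:4} forces $\rho\equiv \pm 1\pmod{12}$; intersecting with the constraint $\rho\equiv 3\pmod 4$ eliminates $\rho\equiv 1\pmod{12}$ and leaves only $\rho\equiv -1\pmod{12}$, recovering condition (2). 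Conversely, each listed condition immediately yields $\lcm(2,d)=\rho-1$; in condition (2) one uses that $\rho\equiv -1\pmod{12}$ already implies $\rho\equiv 3\pmod 4$, hence $\frac{\rho-1}2$ is odd.

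The argument is essentially routine modular arithmetic, so there is no single difficult step; the only point requiring care is the bookkeeping of parities. In particular, I would verify carefully that the congruence $\rho\equiv -1\pmod{12}$ simultaneously places $3$ among the quadratic residues \emph{and} guarantees that $\frac{\rho-1}2$ is odd, and dually that $\ord_\rho(3)=\rho-1$ already forces the nonresidue congruences $\rho\equiv \pm 5\pmod{12}$, so that the two directions of the equivalence are mutually consistent.
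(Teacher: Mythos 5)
Your proof is correct, and it reaches the lemma by a cleaner route than the paper's. The paper proves the two implications separately via a case analysis on the parity of $\ell=\ord_{\rho}(3)$: for $(\Rightarrow)$ it manipulates divisibilities such as $4\rho\mid(3^{\ell}-1)$ directly, and for $(\Leftarrow)$ it must first cap $\ord_{4\rho}(3)$ at $\rho-1$ by invoking the primitive root theorem (no primitive root exists modulo $4\rho$, so $\ord_{4\rho}(3)\neq \phi(4\rho)=2(\rho-1)$) before pinning down the order in each case. You instead start from the CRT identity $\ord_{4\rho}(3)=\lcm(\ord_4(3),\ord_{\rho}(3))=\lcm(2,d)$ with $d=\ord_{\rho}(3)$, which collapses the whole lemma into a single parity computation: $\lcm(2,d)=\rho-1$ holds iff $d=\rho-1$, or $d=\frac{\rho-1}{2}$ with $\frac{\rho-1}{2}$ odd; Lemma \ref{lem:4} then converts this intermediate characterisation into the mod-$12$ form, exactly as the paper does at the corresponding points. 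What your route buys: both directions of the equivalence fall out of one computation, and the appeal to the primitive root theorem disappears entirely, since the $\lcm$ formula automatically bounds $\ord_{4\rho}(3)$ by $\lcm(2,\rho-1)=\rho-1$. What the paper's route buys: it works only with bare divisibility relations and never needs the group-theoretic decomposition $\Z_{4\rho}^*\cong \Z_4^*\times\Z_{\rho}^*$, at the cost of a longer two-directional argument. Your parity bookkeeping is handled correctly at the two delicate points: $\rho\equiv -1\pmod{12}$ forces $\rho\equiv 3\pmod 4$, hence $\frac{\rho-1}{2}$ odd, and in the residue case the constraint $\rho\equiv 3\pmod 4$ correctly eliminates $\rho\equiv 1\pmod{12}$ (the case the paper rules out by exhibiting the contradiction $3^{\frac{\rho-1}{2}}\equiv 1\pmod{4\rho}$).
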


\begin{proof}
($\Rightarrow$) Suppose $\ord_{4\rho}(3)=\rho-1$. Let 
$\ell=\ord_{\rho}(3)$. It is clear that $\ord_{\rho}(3)$ divides $\ord_{4\rho}(3)$, i.e., $\ell\mid(\rho-1)$. 

\begin{enumerate}
\item Suppose that $\ell$ is even. Then $4\mid(3^{\ell}-1)$. Note that $\rho\mid(3^{\ell}-1)$ and $\gcd(4,\rho)=1$, we have $4\rho \mid(3^{\ell}-1)$. It follows that $\ord_{4\rho}(3)=\rho-1$ divides $\ell$. Consequently, $\ell=\rho-1$. It then follows that $3$ is a quadratic nonresidue of $\rho$. By Lemma \ref{lem:4}, $\rho \equiv \pm 5~({\rm mod}~12)$.	
\item Suppose that $\ell$ is odd. It follows from $\ell\mid(\rho-1)$ that $\ell \mid(\frac{q-1}2)$. Consequently, $3$ is a quadratic residue of $\rho$. By Lemma \ref{lem:4}, $\rho \equiv \pm 1 ~({\rm mod}~12)$. We consider the following two cases:
\begin{enumerate}
\item $\rho \equiv 1~({\rm mod}~12)$, then $\frac{\rho-1}2$ is even. It follows that $3^{\frac{\rho-1}2}\equiv 1~({\rm mod}~4\rho)$, which contradicts the fact that $\ord_{4\rho}(3)=\rho-1$. 	
\item $\rho \equiv -1~({\rm mod}~12)$. Note that $4\rho\mid (3^{2\ell}-1)$, we have $\ord_{4\rho}(3)$ divides $2\ell$, i.e., $(\frac{\rho-1}2)\mid \ell$. Therefore, $\ell=\frac{\rho-1}2$.  
\end{enumerate}
\end{enumerate}

($\Leftarrow$) Let $\ell'=\ord_{4\rho}(3)$. Notice that $\phi(4\rho)=2(\rho-1)$, where $\phi$ is the Euler totient function. Then  $\ell'$ divides $2(\rho-1)$. According to the primitive root theorem, $\ell'\neq 2(\rho-1)$. Therefore, $\ell'\leq \rho-1$. It is clear that $\ord_{\rho}(3)\mid \ell'$.   We consider the following two cases:
\begin{enumerate}
\item Suppose that $\ord_{\rho}(3)=\rho-1$. Then $\ell'=\rho-1$.
\item Suppose that $\rho\equiv -1\pmod{12}$ and $\ord_{\rho}(3)=\frac{\rho-1}2$. Note that $\ord_{3}(\rho)$ divides $\ell'$ and $\ell'\leq \rho-1$, then $\ell'=\rho-1$ or $\frac{\rho-1}2$. In this case, $\frac{\rho-1}2$ is odd, $\gcd(3^{\frac{\rho-1}2}-1,4)=2$. It follows that $\ell'\neq \frac{\rho-1}2$. Therefore, $\ell'=\rho-1$. 
\end{enumerate}
This completes the proof.
\end{proof}

Throughout the rest of this section, let $\rho$ be an odd prime such that $\rho \equiv \pm 5 ~({\rm mod}~12)$ and $\ord_{\rho}(3)=\rho-1$ 
and let $n=2\rho$. By Lemma \ref{LLEM-1}, $\ord_{4\rho}(3)=\rho-1$. By assumption, 
$$
C_1^{(3, 4\rho)}=\{1, 3, 3^2, \cdots, 3^{\rho-2}\} \bmod{4\rho}. 
$$  
It is clear that $C_\rho^{(3, 4\rho)}=\{\rho, 3\rho\}$ and all elements in both $C_1^{(3, 4\rho)}$ and $C_\rho^{(3, 4\rho)}$ 
are odd. Let $h$ be any integer from 
$$
\{2i+1: 0 \leq i \leq 2\rho-1\} \setminus (C_1^{(3, 4\rho)} \cup C_\rho^{(3, 4\rho)}).  
$$
Note that $\gcd(h,4\rho)=1$, we have $|C_h^{(3,4\rho)}|=|C_1^{(3,4\rho)}|=\rho-1$. Then $C_1^{(3, 4\rho)}$, $C_h^{(3, 4\rho)}$ and $C_\rho^{(3, 4\rho)}$ form a partition of $\{2i+1: 0 \leq i \leq 2\rho-1\}$. 

Let $\alpha$ be a primitive element of $\gf(3^m)$, where $m=\rho-1$. Put 
$$
\beta= \alpha^{(3^{\rho-1}-1)/4\rho}. 
$$ 
By definition, $\beta^{2\rho}=-1$. 

Let $\m_{\beta^i}(x)$ denote the minimal polynomial of $\beta^i$ over $\gf(3)$. It is easily seen that 
$$
\m_{\beta^\rho}(x)=x^2+1
$$ 
and 
\begin{eqnarray*}
x^n+1 = \m_{\beta^\rho}(x) \m_{\beta}(x) \m_{\beta^h}(x). 
\end{eqnarray*}

Let $\C(\rho)$ denote the ternary negacyclic code of length $2\rho$ with check polynomial $\m_{\beta}(x)$. Then $\dim(\C(\rho))=\rho-1$. By definition, $\C(\rho)$ is irreducible. By Lemma \ref{lem-003}, the code $\C(\rho)$ has the trace representation 
\begin{equation}\label{EQ1}
\C(\rho)=\left\{\bc(a)=\left(\tr_{{3^m}/3}(a \theta^i) \right)_{i=0}^{2\rho-1}:\ a\in \gf(3^m)\right\},
\end{equation}
where $\theta=\beta^{-1}$. The dual code $\C(\rho)^{\bot}$ has the trace representation
\begin{equation*}
\C(\rho)^{\bot}=\left\{ \bc(a, b)=\left(\tr_{{3^2}/3}(a \beta^{\rho i} )+\tr_{{3^m}/3}(b \beta^{h i} ) \right)_{i=0}^{2\rho-1}:\ a \in \gf(3^2) , \ b\in \gf(3^m)\right\}.
\end{equation*}
 Associated with the irreducible negacyclic code $\C(\rho)$ is the following code over $\gf(3^2)$:
\begin{equation*}
\overline{\C}(\rho)=\left\{\overline{\bc}(a)=\left(\tr_{{3^m}/{3^2}}(a \theta^{2i})\right)_{i=0}^{\rho-1}: \ a \in \gf(3^m)  \right \}.	
\end{equation*}
Note that $\theta^2$ is a primitive $2\rho$-th root of unity and $\theta^{2\rho}=-1$. It follows from Lemma \ref{lem-003} that $\overline{\C}(\rho)$ is the $3^2$-ary negacyclic code of length $\rho$ with check polynomial $\m_{\beta^2}(x)$. It is easily verified that the dual code $\overline{\C}(\rho)^{\bot}$ has the trace representation
\begin{equation*}
\overline{\C}(\rho)^{\bot}=\left\{ \overline{\bc}(a, b)=\left(a(-1)^i+\tr_{{3^m}/{3^2}}(b \beta^{2hi})\right)_{i=0}^{\rho-1}:\ a \in \gf(3^2) , \ b\in \gf(3^m)\right\}.
\end{equation*}
We first prove the following theorem.

\begin{theorem}
	Let notation be the same as before. The $3^2$-ary negacyclic code $\overline{\C}(\rho)$ has parameters $\left[ \rho, \frac{\rho-1}2, d\geq \sqrt{\rho}+1 \right]$, and $\overline{\C}(\rho)^{\bot}$ has parameters $\left[ \rho, \frac{\rho+1}2, d\geq \sqrt{\rho} \right]$.
\end{theorem}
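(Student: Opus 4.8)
The plan is to recognise $\overline{\C}(\rho)$ and $\overline{\C}(\rho)^{\bot}$ as quadratic residue codes over $\gf(9)$ of prime length $\rho$ and then to apply the square-root bound. First I would settle the lengths and dimensions. Since $\ord_{\rho}(3)=\rho-1$ and $\gcd(2,\rho)=1$, the Chinese Remainder Theorem gives $\ord_{2\rho}(3)=\lcm(\ord_2(3),\ord_{\rho}(3))=\rho-1$; as $\rho-1$ is even, it follows that $\ord_{2\rho}(9)=\ord_{2\rho}(3^2)=(\rho-1)/2$. Hence the check polynomial $\m_{\beta^2}(x)$, being the minimal polynomial over $\gf(9)$ of the primitive $2\rho$-th root of unity $\beta^2$, has degree $|C^{(9,2\rho)}_1|=(\rho-1)/2$, so $\dim(\overline{\C}(\rho))=(\rho-1)/2$ and $\dim(\overline{\C}(\rho)^{\bot})=\rho-(\rho-1)/2=(\rho+1)/2$, as claimed.

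The core step is the identification with quadratic residue codes. Because $3$ is a primitive root modulo $\rho$, the cyclic group $\langle 9\rangle=\langle 3^2\rangle$ is the unique index-$2$ subgroup of $\Z_{\rho}^{*}$, i.e. exactly the set $Q$ of quadratic residues; consequently the $9$-cyclotomic cosets modulo $\rho$ are precisely $\{0\}$, $Q$, and the set $N$ of nonresidues. Since $\rho$ is odd, the ring isomorphism $\psi:c(x)\mapsto c(-x)$ of the theorem cited above is a scalar equivalence carrying the negacyclic code $\overline{\C}(\rho)$ to a cyclic code of length $\rho$ over $\gf(9)$. Tracking the roots through $\psi$ (which sends a root $\zeta$ of the generator to $-\zeta$), the zeros of $\overline{\C}(\rho)$ — namely $(\beta^2)^{\rho}=-1$ together with the $\gf(9)$-conjugates of one further odd power of $\beta^2$ — become $\rho$-th roots of unity whose exponent set reduces modulo $\rho$ to $\{0\}$ together with $2^{-1}Q$ or $2^{-1}N$. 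Since $2^{-1}Q$ and $2^{-1}N$ are again $Q$ or $N$, the image $\psi(\overline{\C}(\rho))$ is an even-like quadratic residue code $\overline{\cQ}$ (or $\overline{\cN}$) of dimension $(\rho-1)/2$; such codes exist because $9$ is a square, hence a quadratic residue, modulo $\rho$. As $\psi$ preserves Hamming weight, $d(\overline{\C}(\rho))=d(\overline{\cQ})$, and, duals of scalar-equivalent codes being scalar-equivalent, $\overline{\C}(\rho)^{\bot}$ is scalar-equivalent to $\overline{\cQ}^{\bot}$, the odd-like quadratic residue code $\cN$ (or $\cQ$) of dimension $(\rho+1)/2$.

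With the identification in hand, I would invoke the square-root bound for quadratic residue codes. Applied to the odd-like code of dimension $(\rho+1)/2$ it yields minimum distance at least $\sqrt{\rho}$, giving $d(\overline{\C}(\rho)^{\bot})\geq\sqrt{\rho}$. Applied to $\overline{\cQ}$, all of whose codewords have coordinate sum zero, the even-like refinement of the bound improves this by one, giving $d(\overline{\C}(\rho))=d(\overline{\cQ})\geq\sqrt{\rho}+1$. As a sanity check, for $\rho=5$ this predicts a $[5,2,\geq 4]$ code and a $[5,3,\geq 3]$ dual, both forced by the Singleton bound to be MDS, which agrees with direct computation.

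The main obstacle is twofold. The delicate bookkeeping is the tracking of the defining set through $\psi$ and through duality: one must check that the exponent set really collapses to $\{0\}\cup Q$ or $\{0\}\cup N$ modulo $\rho$, and that the scalar equivalences involved preserve the minimum distance, so that the identification with QR codes is exact rather than merely up to an unknown permutation. The genuinely technical point is justifying the extra $+1$ for the even-like code: this is the even-like case of the square-root bound, and establishing it — via the multiplier/product argument that underlies the bound, exploiting that even-like codewords satisfy $c(1)=0$ — is where the real effort lies. For the dual no refinement is needed, so the bound $\sqrt{\rho}$ drops out immediately from the standard square-root bound for the odd-like code of dimension $(\rho+1)/2$.
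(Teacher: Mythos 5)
Your dimension computation and your identification of $\overline{\C}(\rho)$ with an even-like quadratic residue code over $\gf(9)$ are correct, and your route is genuinely different in substance from the paper's: after computing $\ord_{2\rho}(3^2)=\frac{\rho-1}{2}$ (the paper does this by a divisibility argument, you by CRT --- both are fine), the paper simply quotes Theorem 26 of the companion paper [SWD22] for both distance bounds, whereas you set out to reprove that result through the classical theory of QR codes. The bookkeeping you worry about (reduction of the zero set modulo $\rho$ to $\{0\}\cup Q$ or $\{0\}\cup N$, the behaviour of $\psi$ on roots, duality commuting with scalar equivalence) is all sound.

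The genuine gap is in how you propose to obtain the two distance bounds. The square-root bound controls only the minimum \emph{odd-like} weight $d_o$ (codewords $c$ with $c(1)\neq 0$); it says nothing directly about even-like codewords. Consequently: (i) for the dual --- an odd-like QR code $\cD$ --- the bound does \emph{not} ``drop out immediately,'' since $d(\cD)=\min(d_o,d_e)$, where $d_e$ is the minimum weight of the even-like subcode of $\cD$, and that subcode is monomially equivalent (via a multiplier) to the very code you are trying to bound in the first half; the two bounds cannot be decoupled the way you describe. (ii) For the even-like code itself, your proposed tool --- ``the multiplier/product argument\dots exploiting that even-like codewords satisfy $c(1)=0$'' --- fails outright: writing $\delta$ for a primitive $\rho$-th root of unity, an even-like codeword $c(x)$ vanishes at $\delta^i$ for all $i\in Q\cup\{0\}$, while $c(x^n)$ ($n$ a nonresidue) vanishes at $\delta^i$ for all $i\in N\cup\{0\}$, so $c(x)c(x^n)\equiv 0 \pmod{x^{\rho}-1}$ identically; the product is the zero element rather than a nonzero multiple of $1+x+\cdots+x^{\rho-1}$, and no weight inequality can be extracted. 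The missing idea that makes both bounds work is the Gleason--Prange theorem: the extended code $\widehat{\cD}$ is invariant under a transitive group of monomial transformations containing $\PSL_2(\rho)$, so puncturing $\widehat{\cD}$ at any coordinate yields a code monomially equivalent to $\cD$; applied to a minimum-weight codeword of $\widehat{\cD}$ this gives $d_e\geq d_o+1$, whence $d(\cD)=d_o\geq\sqrt{\rho}$ and the even-like code has minimum distance at least $d_o+1\geq\sqrt{\rho}+1$. With that ingredient supplied (or with an explicit citation of the corresponding known theorem, which is in effect what the paper does by invoking [SWD22, Theorem 26]), your argument closes; without it, both halves of your distance claim are unsupported.
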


\begin{proof}
Suppose $\ord_{2\rho}(3^2)=\ell$. Since $4\rho \mid(3^{\rho-1}-1)$, we have $2\rho \mid(3^{\rho-1}-1)$. It follows that $\ell \mid (\frac{\rho-1}2)$. On the other hand, since $2\rho \mid(3^{2\ell}-1)$ and $4\mid (3^{2\ell}-1)$, we have $4\rho \mid(3^{2\ell}-1)$. It then follows that $\ord_{4\rho}(3)\mid 2\ell$, i.e., $(\frac{\rho-1}2) \mid \ell$. Therefore, $\ell=\frac{\rho-1}2$. According to \cite{SWD22}[Theorem 26], the irreducible negacyclic code $\overline{\C}(\rho)$ has parameters $[\rho, \frac{\rho-1}2, d\geq \sqrt{\rho}+1]$, and $\overline{\C}(\rho)^{\bot}$ has parameters $[\rho, \frac{\rho+1}2, d\geq \sqrt{\rho}]$. This completes the proof. 
\end{proof}

\begin{theorem}\label{thm-am1}
Let notation be the same as before. Then the ternary negacyclic code $\C(\rho)$ has parameters $\left[2\rho, \rho-1, d\right]$, where $d\geq d(\overline{\C}(\rho))\geq \sqrt{\rho}+1$.
\end{theorem}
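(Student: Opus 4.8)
The plan is to dispatch the length and dimension immediately and then reduce the distance bound to the already-established inequality $d(\overline{\C}(\rho))\geq\sqrt{\rho}+1$ via a weight-nonincreasing correspondence between the two codes. The length is $2\rho$ by construction, and since the check polynomial $\m_{\beta}(x)$ has degree $|C_1^{(3,4\rho)}|=\rho-1$ we have $\dim(\C(\rho))=\rho-1$. The observation driving the rest is that $\C(\rho)$ and $\overline{\C}(\rho)$ are parametrised by the \emph{same} set $a\in\gf(3^m)$ through their trace representations, and that both $a\mapsto\bc(a)$ and $a\mapsto\overline{\bc}(a)$ are bijections onto their respective codes by the cardinality count $|\gf(3^m)|=3^{\rho-1}=9^{(\rho-1)/2}$. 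Hence it suffices to prove $\wt(\bc(a))\geq\wt(\overline{\bc}(a))$ for every $a$: since $\bc(a)\neq 0$ forces $a\neq 0$, hence $\overline{\bc}(a)\neq 0$ and $\wt(\overline{\bc}(a))\geq d(\overline{\C}(\rho))$, this yields $d(\C(\rho))\geq d(\overline{\C}(\rho))$.

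The core is a coordinate-block comparison. For $j\in\{0,\ldots,\rho-1\}$, write $\overline{c}_j=\tr_{3^m/3^2}(a\theta^{2j})$ for the $j$-th $\gf(3^2)$-coordinate of $\overline{\bc}(a)$, and pair it with the two ternary coordinates $c_{2j}=\tr_{3^m/3}(a\theta^{2j})$ and $c_{(2j+\rho)\bmod 2\rho}$, the latter being $\pm\tr_{3^m/3}(\theta^{\rho}a\theta^{2j})$ by $\theta^{2\rho}=-1$. Transitivity of the trace, $\tr_{3^m/3}(\xi w)=\tr_{3^2/3}\!\big(\xi\,\tr_{3^m/3^2}(w)\big)$ for $\xi\in\gf(3^2)$, together with non-degeneracy of $\tr_{3^2/3}$, gives $\overline{c}_j=0$ if and only if $\tr_{3^m/3}(\xi a\theta^{2j})=0$ for all $\xi\in\gf(3^2)$. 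The decisive point is that $\{1,\theta^{\rho}\}$ is a $\gf(3)$-basis of $\gf(3^2)$: indeed $\m_{\beta^{\rho}}(x)=x^2+1$ is irreducible over $\gf(3)$, so $\theta^{\rho}=\beta^{-\rho}\in\gf(3^2)\setminus\gf(3)$. By $\gf(3)$-linearity of $\xi\mapsto\tr_{3^m/3}(\xi a\theta^{2j})$, the condition over all $\xi$ collapses to $\xi=1$ and $\xi=\theta^{\rho}$, so $\overline{c}_j=0$ if and only if $c_{2j}=0$ and $c_{(2j+\rho)\bmod 2\rho}=0$. Equivalently, every nonzero $\overline{c}_j$ forces at least one of these two ternary coordinates to be nonzero.

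To globalise this I would verify that the pairs $P_j=\{2j,\,(2j+\rho)\bmod 2\rho\}$, for $j=0,\ldots,\rho-1$, partition $\{0,1,\ldots,2\rho-1\}$: each $P_j$ carries one even index $2j$ and one odd index $2j+\rho$ (as $\rho$ is odd), the even parts run through all even residues and the odd parts through all odd residues (both since $j\mapsto 2j$ is injective modulo $2\rho$ on this range). Summing over these disjoint blocks, and using that every block with $\overline{c}_j\neq 0$ contributes at least one nonzero ternary coordinate, gives $\wt(\bc(a))\geq\wt(\overline{\bc}(a))$, whence $d(\C(\rho))\geq d(\overline{\C}(\rho))\geq\sqrt{\rho}+1$.

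The step I expect to be the main obstacle is identifying the correct companion coordinate for $\overline{c}_j$. The naive pairing of consecutive coordinates $c_{2j},c_{2j+1}$ fails, because $\theta\notin\gf(3^2)$ and so $c_{2j+1}$ is not a $\gf(3^2)$-probe of $\overline{c}_j$; the two conditions $c_{2j}=c_{2j+1}=0$ do not detect $\overline{c}_j=0$. The resolution hinges entirely on the special fact $\theta^{\rho}\in\gf(3^2)\setminus\gf(3)$ coming from $\m_{\beta^{\rho}}(x)=x^2+1$, which is precisely what makes $c_{2j}$ and $c_{2j+\rho}$ span the two $\gf(3)$-coordinates of $\overline{c}_j$. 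Once this pairing is in hand, checking that the blocks $P_j$ tile all $2\rho$ positions is a routine counting verification.
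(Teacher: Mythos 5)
Your proposal is correct, and its skeleton is the same as the paper's: the same CRT-type pairing of coordinates $\{2j,\,(2j+\rho)\bmod 2\rho\}$, the same pivotal fact that $\theta^{\rho}$ is a root of $x^{2}+1$ and hence $\{1,\theta^{\rho}\}$ is a $\gf(3)$-basis of $\gf(3^2)$, and the same reduction $d(\C(\rho))\geq d(\overline{\C}(\rho))$. The only point of divergence is the mechanism for the per-block statement ``$\overline{c}_j=0$ iff $c_{2j}=c_{(2j+\rho)\bmod 2\rho}=0$'': the paper counts common zeros with additive character sums, writing the indicator of zero as $\tfrac{1}{9}\sum_{(x_1,x_2)\in\gf(3)^2}\zeta_3^{(\cdot)}$ and collapsing the double sum over $\gf(3)^2$ into a sum over $\gf(3^2)$ via $\{x_1+x_2\theta^{\rho}\}=\gf(3^2)$, whereas you obtain the same equivalence directly from the transitivity and non-degeneracy of the trace together with $\gf(3)$-linearity in $\xi$. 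These are two renderings of the same orthogonality fact, and your version is arguably the cleaner and more elementary of the two; it also makes transparent the stronger conclusion (implicit in the paper's computation) that the supports match block-by-block, so that $\bc(a)=0$ iff $\overline{\bc}(a)=0$, which lets you dispense with the bijectivity count if desired. One small remark: your cardinality argument for injectivity of $a\mapsto\overline{\bc}(a)$ quietly relies on $\dim(\overline{\C}(\rho))=\frac{\rho-1}{2}$, which is supplied by the theorem preceding this one in the paper, so it is legitimate but worth citing explicitly.
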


\begin{proof} 
It is clear that $\dim(\C(\rho))=\ord_{4\rho}(3)=\rho-1$. Now we prove that $d(\C(\rho))\geq d(\overline{\C}(\rho))$. 
Since $\gcd(2,\rho)=1$, we have $\left\{2i+\rho j: \ 0\leq i\leq \rho-1, \ 0\leq j\leq 1 \right \}$ is a complete set of residues modulo $2\rho$. It then follows that
\begin{eqnarray*}
\left\{0,1,\cdots, 2\rho-1 \right\} &=& \left\{2i:~0\leq i\leq \rho-1 \right\} \cup \left\{2i+\rho:~0\leq i\leq \frac{\rho-1}2 \right\} \\
&& \cup  \left\{2i+\rho-2\rho:~\frac{\rho+1}2\leq i\leq \rho-1 \right\}.
\end{eqnarray*}
For any codeword $ \bc(a)= \left(\tr_{{3^m}/3}(a \theta^i ) \right)_{i=0}^{2\rho-1} \in \C(\rho)$, it is easily checked that $\bc(a)$ is permutation-equivalent to
\begin{align*}
& \left( \left(\tr_{{3^m}/3}(a \theta^{2i} ), \tr_{{3^m}/3}(a \theta^{2i+\rho} )\right)_{i=0}^{\frac{\rho-1}2}\| \left(\tr_{{3^m}/3}(a \theta^{2i} ), \tr_{{3^m}/3}(a \theta^{2i+\rho -2\rho} )\right)_{i=\frac{\rho+1}2}^{\rho-1}\right) \\
 =&\left( \left(\tr_{{3^m}/3}(a \theta^{2i} ), \tr_{{3^m}/3}(a \theta^{2i+\rho} )\right)_{i=0}^{\frac{\rho-1}2}\| \left(\tr_{{3^m}/3}(a \theta^{2i} ), -\tr_{{3^m}/3}(a \theta^{2i+\rho} )\right)_{i=\frac{\rho+1}2}^{\rho-1}\right),
\end{align*}
where $\|$ denotes the concatenation of vectors. It then follows that the code $\C(\rho)$ is monomial-equivalent to 
\begin{equation}\label{EQ5}
\widetilde{\C}(\rho)=\left\{\widetilde{\bc}(a)=\left((\tr_{3^m/3}(a \theta^{2i}),\tr_{3^m/3}(a  \theta^{2i+\rho}) ) \right)_{i=0}^{\rho-1}:\ a\in \gf(3^m)\right\}.
\end{equation}
For any $a\in \gf(3^m)^*$, it follows from (\ref{EQ5}) that 
\begin{align} \label{EQ6}
\wt(\widetilde{\bc}(a))&\geq \rho-\left| \left\{ 0\leq i\leq \rho-1:\ \tr_{3^m/3}(a \theta^{2i})=\tr_{3^m/3}(a \theta^{2i+\rho})=0\right\}\right|	 \notag \\ \notag
&=\rho-\frac{1}{9}\sum_{i=0}^{\rho-1} \sum_{x_1\in \gf(3)} \zeta_3^{x_1\tr_{3^m/3}(a \theta^{2i})}  \sum_{x_2\in \gf(3)} \zeta_3^{x_2\tr_{3^m/3}(a \theta^{2i+\rho})} \\
&=\rho-\frac{1}{9}\sum_{i=0}^{\rho-1} \sum_{(x_1,x_2)\in \gf(3)^2} \zeta_3^{\tr_{3^m/3}( (x_1+x_2\theta^{\rho}) a\theta^{2i})},  
\end{align}
where $\zeta_3=e^{2\pi \sqrt{-1}/3}$. Note that $\ord(\theta^{\rho})=4$, we have $\theta^{\rho}\in \gf(3^2)\backslash \gf(3) $. It is easy to verify that 
\begin{equation}\label{EQ7}
	\left\{x_1+x_2\theta^{\rho}: \ x_1, \ x_2\in \gf(3) \right \}=\gf(3^2).
\end{equation}
It follows from (\ref{EQ6}) and (\ref{EQ7}) that
\begin{align*}
\wt(\widetilde{\bc}(a))&\geq \rho-\frac{1}{9}\sum_{i=0}^{\rho-1} \sum_{ x\in \gf(3^2)} \zeta_3^{\tr_{3^m/3}(x a\theta^{2i})}\notag \\
&=	 \rho-\frac{1}{9}\sum_{i=0}^{\rho-1} \sum_{ x\in \gf(3^2)} \zeta_3^{\tr_{3^2/3}( x\tr_{3^m/{3^2}} (a\theta^{2i}))}\notag \\
&=\left|\left\{ 0\leq i\leq \rho-1:\ \tr_{3^m/{3^2}}(a \theta^{2i})\neq 0\right\}\right| \notag \\
&=\wt(\overline{\bc}(a) )\geq d(\overline{\C}(\rho)).
\end{align*}
Therefore, $d(\C(\rho))\geq d(\overline{\C}(\rho))$. The desired result follows.
\end{proof}

\begin{theorem}\label{thm-am2}
Let notation be the same as before. Then the ternary negacyclic code $\C(\rho)^{\bot}$ has parameters $\left[2\rho, \rho+1, d^{\bot}\right]$, where $d^{\bot}\geq d(\overline{\C}(\rho)^{\bot})\geq \sqrt{\rho}$.
\end{theorem}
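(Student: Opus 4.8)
The plan is to prove Theorem \ref{thm-am2} by transcribing, essentially verbatim, the argument of Theorem \ref{thm-am1} to the dual codes: reduce the minimum distance of the ternary code $\C(\rho)^{\bot}$ of length $2\rho$ to that of the $3^2$-ary code $\overline{\C}(\rho)^{\bot}$ of length $\rho$, whose distance has already been bounded below by $\sqrt{\rho}$ in the first theorem of this section. The dimension is immediate from duality: $\dim(\C(\rho)^{\bot}) = 2\rho - \dim(\C(\rho)) = \rho+1$, and the trace representation of $\C(\rho)^{\bot}$ recorded before shows that the parametrization by $(a,b)\in\gf(3^2)\times\gf(3^m)$ is faithful, so it suffices to show $\wt(\bc(a,b))\ge d(\overline{\C}(\rho)^{\bot})$ for every $(a,b)\neq(\0,\0)$.

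First I would apply to a codeword $\bc(a,b)\in\C(\rho)^{\bot}$ the \emph{same} positional monomial equivalence used in Theorem \ref{thm-am1}, which regroups the $2\rho$ coordinates into $\rho$ pairs indexed by $i\in\{0,\dots,\rho-1\}$, the $i$-th pair being the coordinates at positions $2i$ and $2i+\rho\pmod{2\rho}$, with the wrap-around sign $\beta^{-2\rho}=-1$ absorbed into the monomial scaling for $i\ge\frac{\rho+1}{2}$. Using $\beta^{2\rho i}=(-1)^i$ together with the trace transitivity $\tr_{3^m/3}=\tr_{3^2/3}\circ\tr_{3^m/3^2}$ (valid since $m=\rho-1$ is even), the $i$-th pair becomes $(\widetilde{c}_{2i},\widetilde{c}_{2i+\rho})$ with $\widetilde{c}_{2i}=\tr_{3^2/3}(a(-1)^i)+\tr_{3^m/3}(b\beta^{2hi})$ and $\widetilde{c}_{2i+\rho}=\tr_{3^2/3}(\delta a(-1)^i)+\tr_{3^m/3}(\epsilon b\beta^{2hi})$, where $\delta=\beta^{\rho^2}$ and $\epsilon=\beta^{h\rho}$ are both elements of order $4$ in $\gf(3^2)$.

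The key step is to control the two order-$4$ elements $\delta,\epsilon$. Since $\rho$ and $h$ are odd, $\delta/\epsilon=\beta^{\rho(\rho-h)}=(\beta^{2\rho})^{(\rho-h)/2}=\pm1$, so $\delta=\pm\epsilon$; in the minus case $\delta=-\epsilon=\bar\epsilon$, because an order-$4$ element satisfies $\epsilon^{-1}=-\epsilon$. I then bound $\wt(\bc(a,b))\ge \rho-|\{i:\widetilde{c}_{2i}=\widetilde{c}_{2i+\rho}=0\}|$ and evaluate the count by the character-sum identity $\frac19\sum_i\sum_{(x_1,x_2)\in\gf(3)^2}\zeta_3^{x_1\widetilde{c}_{2i}+x_2\widetilde{c}_{2i+\rho}}$, exactly as in (\ref{EQ6}). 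Setting $y=x_1+x_2\epsilon$, which ranges over all of $\gf(3^2)$ since $\epsilon\notin\gf(3)$, the exponent collapses to $\tr_{3^2/3}\!\big(y\,\overline{\bc}(a',b)_i\big)$, where $a'=a$ when $\delta=\epsilon$ and $a'=\bar a$ when $\delta=-\epsilon$ (here one uses $\tr_{3^2/3}(\bar y z)=\tr_{3^2/3}(y\bar z)$ and $(-1)^i\in\gf(3)$). The inner sum over $y$ equals $9$ or $0$ according as $\overline{\bc}(a',b)_i=0$ or not, so the count equals $|\{i:\overline{\bc}(a',b)_i=0\}|$ and hence $\wt(\bc(a,b))\ge\wt(\overline{\bc}(a',b))$.

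Finally, $(a,b)\neq(\0,\0)$ forces $(a',b)\neq(\0,\0)$, and by the faithfulness of the parametrization of $\overline{\C}(\rho)^{\bot}$ (its $\gf(3^2)$-dimension is $\frac{\rho+1}{2}$) the vector $\overline{\bc}(a',b)$ is a nonzero codeword of $\overline{\C}(\rho)^{\bot}$; thus $\wt(\bc(a,b))\ge\wt(\overline{\bc}(a',b))\ge d(\overline{\C}(\rho)^{\bot})\ge\sqrt{\rho}$, giving $d^{\bot}\ge\sqrt{\rho}$. I expect the main obstacle to be precisely the bookkeeping with $\delta$ and $\epsilon$: one must verify $\delta=\pm\epsilon$ and, in the minus case, absorb the conjugation so that the character sum still reduces to a \emph{genuine} codeword $\overline{\bc}(\bar a,b)$ of $\overline{\C}(\rho)^{\bot}$ rather than to some unrelated vector. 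Everything else is a routine repetition of the computation in Theorem \ref{thm-am1}.
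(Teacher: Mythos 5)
Your proposal is correct and follows essentially the same route as the paper: the same monomial regrouping of the $2\rho$ coordinates into $\rho$ pairs, the same character-sum reduction to $\overline{\C}(\rho)^{\bot}$, and the same two-case analysis — your dichotomy $\delta=\pm\epsilon$ is exactly the paper's case split $\rho\equiv \pm h \pmod 4$ (where $\beta^{h\rho}=\beta^{\rho^2}$ or $\beta^{3\rho^2}$), with the conjugate $a^3=\bar a$ replacing $a$ in the minus case via the same Frobenius manipulation. The only difference is cosmetic: you are slightly more explicit than the paper about the faithfulness of the parametrization guaranteeing $\overline{\bc}(a',b)\neq \0$.
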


\begin{proof}
The dimension of the code $\C(\rho)^{\bot}$ follows from $\dim(\C(\rho))=\rho-1$. Now we prove that $d(\C(\rho)^{\bot})\geq d(\overline{\C}(\rho)^{\bot})$. It is similarly verified that the code $\C(\rho)^{\bot}$ is monomial-equivalent to 
\begin{align}\label{EQ9}
\left\{\widetilde{\bc}(a, b)=\left( \left(\tr_{{3^2}/3}(a \beta^{\rho 2i} )+\tr_{{3^m}/3}(b \beta^{h2i} ), \tr_{{3^2}/3}(a \beta^{\rho (2i+\rho)} )+\tr_{{3^m}/3}(b \beta^{h(2i+\rho)} ) \right) \right)_{i=0}^{\rho-1}: \right.\notag\\
\left. a\in \gf(3^2), \ b \in \gf(3^m)\right\}.
\end{align}
For any $(a, b) \in (\gf(3^2)\times \gf(3^m))\backslash (0,0)$, it follows from (\ref{EQ9}) that 
\begin{align} \label{EQ10}
\wt(\widetilde{\bc}(a,b))&\geq \rho-\left| \left \{ 0\leq i\leq \rho-1:\ \tr_{{3^2}/3}(a \beta^{\rho 2i} )+\tr_{{3^m}/3}(b \beta^{h2i} ) =0, \right. \right. \notag \\
&\ \ \ \ \  \  \ \ \ \ \ \ \ \left. \left. \tr_{{3^2}/3}(a \beta^{\rho (2i+\rho)} )+\tr_{{3^m}/3}(b \beta^{h(2i+\rho)} )=0 \right \} \right|	 \notag \\ \notag
&=\rho-\frac{1}{9}\sum_{i=0}^{\rho-1} \sum_{x_1\in \gf(3)} \zeta_3^{x_1(\tr_{{3^2}/3}(a \beta^{\rho 2i} )+\tr_{{3^m}/3}(b \beta^{h2i} )) }  \sum_{x_2\in \gf(3)} \zeta_3^{x_2(\tr_{{3^2}/3}(a \beta^{\rho (2i+\rho)} )+\tr_{{3^m}/3}(b \beta^{h(2i+\rho)} ))} \\
&=\rho-\frac{1}{9}\sum_{i=0}^{\rho-1} \sum_{(x_1,x_2)\in \gf(3)^2} \zeta_3^{\tr_{{3^2}/3}((x_1+x_2\beta^{\rho^2}) a(-1)^{i})+\tr_{{3^m}/3}( (x_1+x_2\beta^{h \rho}) b\beta^{h2i})},
\end{align}
where $\zeta_3=e^{2\pi \sqrt{-1}/3}$. Note that $\rho$ and $h$ are odd, we have $\rho\equiv h ~({\rm mod}~4)$ or $\rho\equiv -h~({\rm mod}~4)$. If $\rho\equiv h ~({\rm mod}~4)$, then $\beta^{h \rho}=\beta^{\rho^2}=(-1)^{\frac{\rho-1}2}\beta^{\rho}$. It follows from (\ref{EQ10}) that
\begin{align}\label{EQ11}
\wt(\widetilde{\bc}(a,b))&\geq \rho-\frac{1}{9}\sum_{i=0}^{\rho-1} \sum_{ x\in \gf(3^2)} \zeta_3^{\tr_{{3^2}/3}(x a(-1)^{i})+\tr_{{3^m}/3}( x b\beta^{h2i})}\notag \\
&=	 \rho-\frac{1}{9}\sum_{i=0}^{\rho-1} \sum_{ x\in \gf(3^2)} \zeta_3^{\tr_{{3^2}/3}( x (a (-1)^{i}+\tr_{{3^m}/{3^2}}(b \beta^{h2i} )  ) )}\notag \\
&=\left| \left \{ 0\leq i\leq \rho-1:\ a (-1)^{i}+\tr_{{3^m}/{3^2}}(b \beta^{h2i}) \neq 0 \right \}\ \right| \notag \\
&=\wt(\overline{\bc}(a, b) )\geq d(\overline{\C}(\rho)^{\bot}).
\end{align}
 If $\rho\equiv -h \pmod{4}$, then $\beta^{h \rho}=\beta^{3\rho^2}=(-1)^{\frac{\rho+1}2}\beta^{\rho}$. Note that 
\begin{align*}
\tr_{{3^2}/3}((x_1+x_2\beta^{\rho^2}) a (-1)^{i} )&=\tr_{{3^2}/3}((x_1^3+x_2^3\beta^{3\rho^2}) a^3 (-1)^{i} )\\
&=\tr_{{3^2}/3}((x_1+x_2\beta^{3\rho^2}) a^3 (-1)^{i} ).	
\end{align*}
It then follows from (\ref{EQ10}) that
\begin{align}\label{EQ12}
\wt(\widetilde{\bc}(a, b))&\geq \rho-\frac{1}{9}\sum_{i=0}^{\rho-1} \sum_{ x\in \gf(3^2)} \zeta_3^{\tr_{3^2/3}(x a^3(-1)^{i})+\tr_{3^m/3}( x b\beta^{h2i})}\notag \\
&=\wt(\overline{\bc}(a^3, b))\geq d(\overline{\C}(\rho)^{\bot}).
\end{align}
Combining Equations (\ref{EQ11}) and (\ref{EQ12}), we deduce that $d(\C(\rho))\geq d(\overline{\C}(\rho)^{\bot})$. The desired result follows.
\end{proof}

The lower bounds on $d(\C(\rho))$ and on $d(\C^\perp)$ documented in Theorems \ref{thm-am1} and \ref{thm-am2} are close 
to the square-root bound. 
The experimental data in Table \ref{tab-ding1} shows that the minimum distances of the codes $\C(\rho)$ (resp. $\C(\rho)^{\bot}$) and $\overline{\C}(\rho)$ (resp. $\overline{\C}(\rho)^{\bot}$) are very close to each other.

\begin{table*}
	\begin{center}
	\renewcommand\arraystretch{1.3}
\caption{The codes $\overline{\C}(\rho)$, $\C(\rho)$ and their dual}\label{tab-ding1}
\begin{tabular}{ccccc} \hline
 $\rho$ &  $\overline{\C}(\rho)$ &  $\overline{\C}(\rho)^{\bot}$ & $\C(\rho)$ &  $\C(\rho)^{\bot}$\\  \hline
 $5$ & $[5,2,4]$& $[5,3,3]$ &$[10,4,6]$ & $[10,6,4]$\\ \hline 
 $7$ & $[7,3,5]$& $[7,4,4]$ &$[14,6,6]$ & $[14,8,5]$\\ \hline 
 $17$ & $[17,8,8]$& $[17,9,7]$ &$[34,16,12]$ & $[34,18,10]$ \\ \hline 
 $19$ & $[19,9,10]$& $[19,10,9]$ & $[38,18,10]$ & $[38,20,9]$ \\ \hline 
 $29$ & $[29,14,12]$& $[29,15,11]$& $[58,28,18]$ & $[58,30,16]$ \\ \hline 
 $31$ & $[31,15,12]$& $[31,16,11]$ & $[62,30,14]$& $[62,32,12]$ \\ \hline 
 $43$ & $[43,21,16]$& $[43,22,15]$ & $[86,42,18]$&  $[86,44,16]$\\  \hline
\end{tabular}
\end{center}
\end{table*}

Note that the length of the code $\C(\rho)$ is even, the code $\C(\rho)$ is not known to be scalar-equivalent or permutation-equivalent to a ternary cyclic code. The negacyclic code $\C(\rho)$ and its dual are exceptionally good in general and are much better than the best cyclic codes with the same length and dimension. The authors are not aware of any family of ternary linear codes that outperforms this class of negacyclic codes $\C(\rho)$. The example codes below justify this claim. 

\begin{example}
Let $\rho=5$ and let $\beta$ be the primitive $4\rho$-th root of unity with $\beta^4+\beta^3+2\beta+1=0$. Then the negacylic code $\C(\rho)$ has parameters $[10,4,6]$, and $\C(\rho)^{\bot}$ has parameters $[10,6,4]$. These two negacyclic codes are distance-optimal \cite{Grassl}. The best ternary cyclic code of length $10$ and dimension $4$ has minimum distance $4$, and the best ternary cyclic code of length $10$ and dimension $6$ has minimum distance $2$ \cite{dingtang2014}.
\end{example}

\begin{example}
Let $\rho=7$ and let $\beta$ be the primitive $4\rho$-th root of unity with $\beta^6+2\beta^5+2\beta^3+2\beta+1=0$. Then the negacylic code $\C(\rho)$ has parameters $[14,6,6]$, and $\C(\rho)^{\bot}$ has parameters $[14,8,5]$. These two negacyclic codes are distance-optimal  \cite{Grassl}. The best ternary cyclic code of length $14$ and dimension $6$ has minimum distance $4$, and the best ternary cyclic code of length $14$ and dimension $8$ has minimum distance $2$ \cite{dingtang2014}.	
\end{example}

\begin{example}
Let $\rho=17$ and let $\beta$ be the primitive $4\rho$-th root of unity with $\beta^{16}+2\beta^{15}+2\beta^{12}+2\beta^{11}+2\beta^{10}+2\beta^6+\beta^5+2\beta^4+\beta+1=0$. Then the negacylic code $\C(\rho)$ has parameters $[34,16,12]$ and $\C(\rho)^{\bot}$ has parameters $[34,18,10]$. These two negacyclic codes are distance-optimal \cite{Grassl}. The best ternary cyclic code of length $34$ and dimension $16$ has minimum distance $4$, and the best ternary cyclic code of length $34$ and dimension $18$ has minimum distance $2$ \cite{dingtang2014}.	 
\end{example}

\begin{example}
Let $\rho=19$ and let $\beta$ be the primitive $4\rho$-th root of unity with 
$
\beta^{18} + 2\beta^{17} + 2\beta^{14} + \beta^{13} + 2\beta^{11} + \beta^{10} + 
2\beta^9 + \beta^8 + 2\beta^7 + \beta^5 + 2\beta^4 + 2\beta + 1=0$. 
Then the negacylic code $\C(\rho)$ has parameters $[38,18,10]$. The best ternary code known of length $38$ and dimension $18$ has minimum distance $12$ \cite{Grassl}. The best ternary cyclic code of length $38$ and dimension $18$ has minimum distance $4$ \cite{dingtang2014}. The dual code $\C(\rho)^{\bot}$ has parameters $[38,20,9]$. The best ternary code known of length $38$ and dimension $20$ has minimum distance $10$ \cite{Grassl}. The best ternary cyclic code of length $38$ and dimension $20$ has minimum distance $2$ \cite{dingtang2014}. 
\end{example}

\begin{example}
Let $\rho=29$ and let $\beta$ be the primitive $4\rho$-th root of unity with $\beta^{28}+2\beta^{27}+2\beta^{25}+\beta^{23}+2\beta^{22}+\beta^{21}+2\beta^{19}+2\beta^{17}+\beta^{15}+2\beta^{14}+2\beta^{13}+\beta^{11}+\beta^9+2\beta^7+2\beta^6+2\beta^5+\beta^3+\beta+1=0$. Then the negacylic code $\C(\rho)$ has parameters $[58,28,18]$, and has the best parameters known \cite{Grassl}. The best ternary cyclic code of length $58$ and dimension $28$ has minimum distance $4$ \cite{dingtang2014}. The dual code $\C(\rho)^{\bot}$ has parameters $[58,30,16]$, and has the best parameters known \cite{Grassl}. The best ternary cyclic code of length $58$ and dimension $30$ has minimum distance $2$ \cite{dingtang2014}. 
\end{example}

\begin{example}
Let $\rho=31$ and let $\beta$ be the primitive $4\rho$-th root of unity with $\beta^{30}+\beta^{29}+2\beta^{27}+\beta^{26}+2\beta^{25}+2\beta^{24}+\beta^{23}+2\beta^{22}+\beta^{21}+\beta^{20}+2\beta^{19}+\beta^{18}+\beta^{17}+2\beta^{16}+\beta^{15}+2\beta^{14}+\beta^{13}+\beta^{12}+2\beta^{11}+\beta^{10}+\beta^9+2\beta^8+\beta^7+2\beta^6+2\beta^{5}+\beta^4+2\beta^3+\beta+1=0$. Then the negacylic code $\C(\rho)$ has parameters $[62,30,14]$. The best ternary code known of length $62$ and dimension $30$ has minimum distance $18$ \cite{Grassl}. The best ternary cyclic code of length $62$ and dimension $30$ has minimum distance $4$ \cite{dingtang2014}. The dual code $\C(\rho)^{\bot}$ has parameters $[62,32,12]$. The best ternary code known of length $62$ and dimension $32$ has minimum distance $16$ \cite{Grassl}. The best ternary cyclic code of length $62$ and dimension $32$ has minimum distance $2$ \cite{dingtang2014}. 
\end{example}

\section{The second family of ternary negacyclic codes \& their duals}\label{sec4}

 Let $\ell \geq 2$ be an integer, and let $n$ be a positive divisor of $3^{\ell}+1$ and $2n> 3^{\lceil \ell/2\rceil}+1$. Let $\beta$ be a primitive $2n$-th root of unity, then $\beta^n=-1$. Let $\m_{\beta}(x)$ be the minimum polynomial of $\beta$ over $\gf(3)$. Let $\C(\ell)$ denote the ternary negacyclic code of length $n$ with check polynomial $\m_{\beta}(x)$. To settle the dimension of the negacyclic code $\C(\ell)$, we need the following lemma. 

\begin{lemma}\label{lem18}
Let notation be the same as before. Then $\ord_{2n}(3)=2\ell$.	
\end{lemma}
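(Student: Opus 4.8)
The plan is to compute $d:=\ord_{2n}(3)$ directly, since by the setup preceding the lemma $d=\deg\m_\beta(x)=\dim(\C(\ell))$, and to pin it to $2\ell$ by first proving $d\mid 2\ell$ and then excluding every proper divisor of $2\ell$ with the help of the size hypothesis $2n>3^{\lceil \ell/2\rceil}+1$.

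First I would prove the upper bound $d\mid 2\ell$, i.e. $3^{2\ell}\equiv 1\pmod{2n}$. Writing $3^\ell+1=tn$ (possible since $n\mid 3^\ell+1$) gives $3^{2\ell}-1=(3^\ell-1)(3^\ell+1)=tn(3^\ell-1)$, and because $3^\ell-1$ is even we get $2n\mid 3^{2\ell}-1$ with no case analysis needed. This step is routine.

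Next I would extract the arithmetic structure of $d$. Since $2n>3^{\lceil\ell/2\rceil}+1\ge 4$ we have $n\ge 3$, so $3^\ell\equiv -1\pmod n$ forces $3^\ell\not\equiv 1\pmod{2n}$ and hence $d\nmid\ell$. Comparing $2$-adic valuations, the conditions $d\mid 2\ell$ and $d\nmid\ell$ force $d=2k$ for some $k\mid\ell$ with $\ell/k$ odd; in particular, if $\ell$ is a power of $2$ then $k=\ell$ and $d=2\ell$ is immediate, so I may assume $\ell$ has an odd prime factor. If $d\neq 2\ell$ then $k<\ell$, so $\ell/k$ is an odd integer at least $3$ and therefore $k\le \ell/3$.

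The crux, and the step I expect to be the main obstacle, is turning a proper divisor into a contradiction with the size hypothesis. Reducing $3^{2k}\equiv 1\pmod{2n}$ modulo $n$ shows $\ord_n(3)\mid 2k$, while $\ord_n(3)\nmid k$ (otherwise $\ord_n(3)\mid k\mid\ell$, contradicting $\ord_n(3)\nmid\ell$); hence $3^k$ has order exactly $2$ modulo $n$. As $\ell/k$ is odd, $3^\ell\equiv(3^k)^{\ell/k}\equiv 3^k\pmod n$, and comparing with $3^\ell\equiv -1\pmod n$ yields $3^k\equiv -1\pmod n$, that is $n\mid 3^k+1$. Consequently $n\le 3^k+1\le 3^{\ell/3}+1$, so $2n\le 2\cdot 3^{\ell/3}+2$, contradicting the hypothesis $2n>3^{\lceil\ell/2\rceil}+1$ as soon as $3^{\lceil\ell/2\rceil}\ge 2\cdot 3^{\ell/3}+1$. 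This inequality holds for every relevant $\ell$ (and when $3\nmid\ell$ one even has the sharper bound $k\le\ell/5$), the few small values being checked by hand, while the power-of-two case was already settled. Therefore $d=2\ell$, as claimed.
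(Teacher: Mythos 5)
Your proof is correct, and it differs from the paper's in the step where the proper divisors of $2\ell$ are excluded, so a comparison is worthwhile. Both arguments share the same skeleton: first $\ord_{2n}(3)\mid 2\ell$ (your factorization $3^{2\ell}-1=(3^{\ell}-1)(3^{\ell}+1)$ is exactly what the paper treats as ``clear''), then the hypothesis $2n>3^{\lceil \ell/2\rceil}+1$ forbids every proper divisor. The paper does the exclusion by a magnitude-based case split: a proper divisor of $2\ell$ is either $\ell$, or $2\ell/3$, or at most $\ell/2$; the case $\ord_{2n}(3)\le \ell/2$ falls to the crude bound $2n\le 3^{\ell/2}-1$, while the cases $\ord_{2n}(3)=\ell$ and $\ord_{2n}(3)=2\ell/3$ are handled by the gcd identities $\gcd\bigl(\tfrac{3^{\ell}-1}{2},3^{\ell}+1\bigr)\le 2$ and $\gcd\bigl(\tfrac{3^{2\ell/3}-1}{2},3^{\ell}+1\bigr)=3^{\ell/3}+1$. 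You instead prove the structural fact that $d:=\ord_{2n}(3)$ cannot divide $\ell$ (since $3^{\ell}\equiv -1\pmod{n}$ and $n\ge 3$), deduce from the $2$-adic comparison that $d=2k$ with $k\mid\ell$ and $\ell/k$ odd, and then show that any proper such divisor forces $3^{k}\equiv 3^{\ell}\equiv-1\pmod{n}$, i.e.\ $n\mid 3^{k}+1\le 3^{\ell/3}+1$ --- one uniform contradiction in place of three cases. Your order-two computation is in effect a self-contained proof of the one nontrivial gcd fact the paper quotes, so your route is more elementary (no appeal to formulas for $\gcd(3^{a}-1,3^{b}+1)$) and slightly more informative (it shows any smaller order would make $n$ divide $3^{k}+1$ for some $k\le\ell/3$); the paper's version is shorter on the page precisely because those identities are invoked rather than derived. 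Two small remarks: your intermediate step that $\ord_{n}(3)\nmid k$ is not actually needed, since $3^{2k}\equiv 1\pmod{n}$ and the oddness of $\ell/k$ already give $3^{k}\equiv 3^{\ell}\pmod{n}$; and the closing inequality $3^{\lceil\ell/2\rceil}\ge 2\cdot 3^{\ell/3}+1$ is indeed routine, holding for all $\ell\ge 6$ because $\lceil\ell/2\rceil\ge\ell/3+1$, with only $\ell\in\{3,5\}$ left to check by hand, as you indicate.
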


\begin{proof}  
It is clear that $\ord_{2n}(3)\mid 2\ell$.
\begin{enumerate}
\item If $\ord_{2n}(3) \leq  \frac{\ell}2$, we have $2n\leq 3^{\ell/2}-1$, a contradiction. 
\item If $\ord_{2n}(3) = \frac{2\ell}3$, then $3\mid \ell$, and $n$ divides $ \gcd\left(\frac{3^{2\ell/3}-1}2, 3^{\ell}+1 \right)$. Note that $$\gcd\left(\frac{3^{2\ell/3}-1}2, 3^{\ell}+1 \right)=3^{\ell/3}+1,$$
we have $n\leq 3^{\ell/3}+1$, a contradiction.
\item If $\ord_{2n}(3) =\ell $, we have $n$ divides $\gcd\left(\frac{3^{\ell}-1}2, 3^{\ell}+1 \right)$. Note that $$\gcd\left(\frac{3^{\ell}-1}2, 2\right)=\gcd(\ell,2),$$ 
we have $n\leq 2$, a contradiction.
\end{enumerate}
This completes the proof.	
\end{proof}

\begin{theorem}\label{thm19}
Let $n=\frac{3^{\ell}+1}2$, where $\ell \geq 2$ is an integer. Then the negacyclic code $\C(\ell)$ has parameters $[n, 2\ell, d\geq \frac{3^{\ell-1}+1}2]$, and the code $\C(\ell)^{\bot}$ has parameters $[n, n-2\ell,5]$.
\end{theorem}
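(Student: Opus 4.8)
The plan is to handle the dimension, the two BCH-type lower bounds, and the tight upper bound on $d(\C(\ell)^\bot)$ separately, with the last being the crux. Everything is governed by the coset $C_1^{(3,2n)}$, so I would pin that down first. Since $2n=3^\ell+1$ we have $3^\ell\equiv -1\pmod{2n}$, hence $C_1^{(3,2n)}=\{3^j\bmod 2n:0\le j\le 2\ell-1\}=\{\pm 3^j\bmod 2n:0\le j\le \ell-1\}$, a set of odd residues closed under negation. By Lemma \ref{lem18} its size is $2\ell=\ord_{2n}(3)$, so $\deg\m_\beta=2\ell$, $\dim(\C(\ell))=2\ell$, and therefore $\dim(\C(\ell)^\bot)=n-2\ell$. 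Closure under negation makes $\m_\beta$ self-reciprocal, so $\C(\ell)^\bot$ is the negacyclic code whose zeros are again $\{\beta^j:j\in C_1^{(3,2n)}\}$; this identification is what lets me feed the dual into the BCH bound.

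For $d(\C(\ell))\ge\frac{3^{\ell-1}+1}{2}$ I would apply the BCH bound (Lemma \ref{lem1}) to the longest run of consecutive odd exponents that are \emph{zeros} of $\C(\ell)$, i.e.\ odd residues outside $C_1^{(3,2n)}$. The decisive gap lies between the largest positive power $3^{\ell-1}$ and the smallest negative power $2n-3^{\ell-1}=2\cdot 3^{\ell-1}+1$: the odd integers $3^{\ell-1}+2,3^{\ell-1}+4,\dots,2\cdot 3^{\ell-1}-1$ all avoid $C_1^{(3,2n)}$, and there are exactly $\frac{3^{\ell-1}-1}{2}$ of them. A run of length $\delta-1=\frac{3^{\ell-1}-1}{2}$ in Lemma \ref{lem1} gives $d\ge\delta=\frac{3^{\ell-1}+1}{2}$. (A Weil bound on the associated Kloosterman sums would sharpen this, but the BCH bound already yields the stated estimate and is cleaner.)

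For $d(\C(\ell)^\bot)\ge 5$ I would use the BCH bound on the dual, whose zeros are indexed by $C_1^{(3,2n)}$. Because $2n-3,\,2n-1,\,1,\,3$ all lie in $C_1^{(3,2n)}$ (they are $-3,-1,1,3$), these four exponents form a run of consecutive odd residues passing through $0$, so $\delta-1=4$ and $d(\C(\ell)^\bot)\ge 5$.

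The hard part is the matching upper bound $d(\C(\ell)^\bot)\le 5$: the BCH run is only four long, and Singleton/Griesmer are too weak to forbid $d=6$. Here I would invoke the sphere-packing bounds. Writing $N=n-1=\frac{3^\ell-1}{2}$, the even-distance bound (Lemma \ref{lem3}) applied to a hypothetical distance $6$ demands $\sum_{i=0}^{2}\binom{N}{i}2^{i}\le 3^{2\ell-1}$; the left side collapses to $1+2N^2$, and $1+2N^2-3^{2\ell-1}=\frac{(3^\ell-3)^2}{6}>0$ for $\ell\ge 2$, a contradiction. Since the left-hand sides of Lemmas \ref{lem3} and \ref{lem2} only increase with the distance, the same inequalities (Lemma \ref{lem3} for even distances, Lemma \ref{lem2} for odd distances $\ge 7$) exclude every $d\ge 6$; with the BCH bound this forces $d(\C(\ell)^\bot)=5$. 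I expect the delicate points to be the precise run-counting inside $C_1^{(3,2n)}$ for the two BCH arguments and the simplification $\sum_{i=0}^2\binom{N}{i}2^i=1+2N^2$ that renders the sphere-packing contradiction transparent; the remainder is bookkeeping.
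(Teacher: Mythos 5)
Your proposal is correct and follows essentially the same route as the paper: dimension from Lemma \ref{lem18}, the BCH bound (Lemma \ref{lem1}) applied to the run of odd residues strictly between $3^{\ell-1}$ and $2n-3^{\ell-1}$ to get $d(\C(\ell))\geq \frac{3^{\ell-1}+1}{2}$, the run $\{-3,-1,1,3\}\subseteq C_1^{(3,2n)}$ to get $d(\C(\ell)^{\bot})\geq 5$, and the sphere-packing bound of Lemma \ref{lem3} to force $d(\C(\ell)^{\bot})\leq 5$. If anything, you are slightly more careful than the paper, whose citation of Lemma \ref{lem3} alone strictly covers only even distances; your additional appeal to Lemma \ref{lem2} to exclude odd distances $d\geq 7$ closes that small gap.
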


\begin{proof}
	By Lemma \ref{lem18}, $\dim(\C(\ell))=\ord_{2n}(3)=2\ell$. Consequently, $\dim(\C(\ell)^{\bot})=n-2\ell$. It is easily verified that $C_1^{(3,2n)}=\left \{1,3,\cdots,3^{\ell-1}, 2n-3^{\ell-1}, 2n-3^{\ell-2},\cdots,2n-1 \right \}$. Let $$H=\left\{2i+1:\frac{3^{\ell-1}+1}2\leq i\leq 3^{\ell-1}-1 \right\}.$$
Then $H\subseteq \{1+2i:\ 0\leq i\leq n-1 \}\backslash C_1^{(3,2n)}$, and $\beta^i$ is a zero of $\C(\ell)$ for each $i\in H$. The desired lower bound then follows from Lemma \ref{lem1}. Notice that $-3,-1, 1, 3 \in C_1^{(3,2n)}$, from Lemma \ref{lem1}, $d(\C(\ell)^{\bot})\geq 5$. It follows from Lemma \ref{lem3} that $d(\C(\ell)^{\bot})\leq 5$. Therefore, $d(\C(\ell)^{\bot})=5$. This completes the proof.
\end{proof}

The code $\C(\ell)^{\bot}$ is a ternary negacyclic BCH code with designed distance $2$, and its parameters were studied in \cite{WSD22}. Theorem \ref{thm19} gives a lower bound of the minimum distance of the code $\C(\ell)$. When $\ell$ is even, $n=\frac{3^{\ell}+1}2$ is odd, then the ternary negacyclic code $\C(\ell)$ of length $n$ is scalar-equivalent to a ternary cyclic code of length $n$. In other words, Theorem \ref{thm19} can produce ternary cyclic codes with good parameters.

\begin{example}
	Let $\ell=3$, then $n=\frac{3^{\ell}+1}2=14$. Let $\beta$ be the primitive $2n$-th root of unity with $\beta^{6}+2\beta^5+2\beta^3+2\beta+1=0$. Then the negacylic code $\C(\ell)$ has parameters $[14,6,6]$ and $\C(\ell)^{\bot}$ has parameters $[14,8,5]$. These two negacyclic codes are distance-optimal \cite{Grassl}. The best ternary cyclic code of length $14$ and dimension $6$ has minimum distance $4$, and the best ternary cyclic code of length $14$ and dimension $8$ has minimum distance $2$ \cite{dingtang2014}.	 
\end{example}

\begin{example}
	Let $\ell=4$, then $n=\frac{3^{\ell}+1}2=41$. Let $\beta$ be the primitive $2n$-th root of unity with $\beta^{8}+2\beta^6+2\beta^5+2\beta^3+2\beta^2+1=0$. Then the negacylic code $\C(\ell)$ has parameters $[41,8,22]$, and has the best parameters known \cite{Grassl}. The dual code $\C(\ell)^{\bot}$ has parameters $[41,33,5]$, and is distance-optimal.
\end{example}
 
 \begin{example}
	Let $\ell=5$, then $n=\frac{3^{\ell}+1}2=122$. Let $\beta$ be the primitive $2n$-th root of unity with $\beta^{10}+2\beta^9+2\beta^8+2\beta^7+\beta^6+\beta^5+\beta^4+2\beta^3+2\beta^2+2\beta+1=0$. Then the negacylic code $\C(\ell)$ has parameters $[122,10,71]$. The best ternary code known of length $122$ and dimension $10$ has minimum distance $72$ \cite{Grassl}. The dual code $\C(\ell)^{\bot}$ has parameters $[122,112,5]$, and is distance-optimal.
\end{example}

\begin{theorem}\label{thm25}
Let $n=\frac{3^{\ell}+1}{4}$, where $\ell \geq 5$ is an odd integer. Then the negacyclic code $\C(\ell)$ has parameters $[n, 2\ell, d\geq \frac{3^{\ell-1}-1}4]$, and the code $\C(\ell)^{\bot}$ has parameters $[n, n-2\ell,5\leq d^{\bot}\leq 6]$.
\end{theorem}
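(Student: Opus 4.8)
The plan is to follow the template of the proof of Theorem \ref{thm19}, the one essential change being that the cyclotomic coset $C_1^{(3,2n)}$ now has a different internal geometry because $2n=\frac{3^\ell+1}{2}$ is comparatively small. First I would verify that the standing hypotheses of this section are met. Since $\ell$ is odd, $3^\ell\equiv 3\pmod 4$, so $4\mid 3^\ell+1$; hence $n=\frac{3^\ell+1}{4}$ is an integer dividing $3^\ell+1$ with $\gcd(n,3)=1$, and a short estimate gives $2n=\frac{3^\ell+1}{2}>3^{\lceil \ell/2\rceil}+1$ for $\ell\ge 5$. Lemma \ref{lem18} then yields $\ord_{2n}(3)=2\ell$, so that $\dim(\C(\ell))=2\ell$ and $\dim(\C(\ell)^{\bot})=n-2\ell$, as claimed.

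Next I would pin down $C_1^{(3,2n)}$. Because $4n=3^\ell+1$ we have $3^\ell\equiv -1\pmod{2n}$, hence $3^{\ell+j}\equiv -3^j$ and
$$C_1^{(3,2n)}=\{1,3,\dots,3^{\ell-1}\}\cup\{2n-1,2n-3,\dots,2n-3^{\ell-1}\},$$
a disjoint union of $2\ell$ odd residues (the count $2\ell$ being guaranteed by $\ord_{2n}(3)=2\ell$). In contrast to Theorem \ref{thm19}, here one has $2n-3^{\ell-1}<3^{\ell-1}$, so the relevant block of consecutive odd zeros of $\C(\ell)$ lies in the open interval $(2n-3^{\ell-1},3^{\ell-1})$. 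The elementary inequality $3^{\ell-2}+3^{\ell-1}<2n$ shows that no power $3^j$ and no reflected power $2n-3^j$ falls in this interval, so every odd integer strictly between $2n-3^{\ell-1}$ and $3^{\ell-1}$ is a zero of $\C(\ell)$; there are exactly $\frac{3^{\ell-1}-5}{4}$ of them. The BCH bound (Lemma \ref{lem1}) then delivers $d(\C(\ell))\ge \frac{3^{\ell-1}-1}{4}$.

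For the dual I would use that $-1\equiv 3^\ell\in C_1^{(3,2n)}$, so $C_1^{(3,2n)}$ is closed under negation and the exponent set of the zeros of $\C(\ell)^{\bot}$ is again exactly $C_1^{(3,2n)}$. Since $\{2n-3,2n-1,1,3\}\subseteq C_1^{(3,2n)}$ is a block of four consecutive odd values, Lemma \ref{lem1} gives $d(\C(\ell)^{\bot})\ge 5$. For the upper bound I would argue by contradiction: if $d(\C(\ell)^{\bot})\ge 7$, then the sphere packing bound (Lemma \ref{lem2}) applied to the $[n,n-2\ell,d^{\bot}]$ code forces $\sum_{i=0}^{3}\binom{n}{i}2^i\le 3^{2\ell}$, whereas already $8\binom{n}{3}>3^{2\ell}$ once $3^\ell>48$, i.e.\ for every $\ell\ge 5$. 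This contradiction gives $d(\C(\ell)^{\bot})\le 6$, establishing $5\le d^{\bot}\le 6$.

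The one step needing genuine care, as opposed to a routine transcription of Theorem \ref{thm19}, is the gap analysis for $\C(\ell)$: halving $n$ again pushes $2n-3^{\ell-1}$ below $3^{\ell-1}$ (indeed into the interval $(3^{\ell-2},3^{\ell-1})$), so the reflected powers interleave with the powers of $3$, and the block of consecutive odd zeros is no longer the clean middle interval of the previous theorem but the interval trapped between the largest power $3^{\ell-1}$ and its reflection $2n-3^{\ell-1}$. Confirming that this interval avoids $C_1^{(3,2n)}$ and that its length is exactly $\frac{3^{\ell-1}-5}{4}$ is where the arithmetic must be handled precisely to land on the stated bound $\frac{3^{\ell-1}-1}{4}$. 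Note also that, unlike in Theorem \ref{thm19}, Lemma \ref{lem3} no longer rules out the even value $6$ (the redundancy being only $2\ell$ against a length $\approx 3^\ell/4$), so one genuinely cannot sharpen the dual bound to a single value by these elementary means; this is the reason the conclusion is the range $5\le d^{\bot}\le 6$ rather than an equality.
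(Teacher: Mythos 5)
Your proposal is correct and follows essentially the same route as the paper's own proof: the block of consecutive odd zeros you identify in the interval $\left(2n-3^{\ell-1},\, 3^{\ell-1}\right)$ is exactly the paper's set $H=\left\{2i+1:\ \frac{3^{\ell-1}+3}{4}\leq i\leq \frac{3^{\ell-1}-3}{2}\right\}$ (same $\frac{3^{\ell-1}-5}{4}$ elements, same BCH conclusion via Lemma \ref{lem1}), and the dual bounds come from the identical pairing of the BCH bound applied to $-3,-1,1,3\in C_1^{(3,2n)}$ with the sphere packing bound of Lemma \ref{lem2}. Your additional remarks (verifying the standing hypothesis $2n>3^{\lceil \ell/2\rceil}+1$, spelling out the sphere-packing computation, and noting that Lemma \ref{lem3} cannot exclude $d^{\bot}=6$ here, unlike in Theorem \ref{thm19}) only make explicit what the paper leaves implicit.
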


\begin{proof}
	By Lemma \ref{lem18}, $\dim(\C(\ell))=\ord_{2n}(3)=2\ell$. Consequently, $\dim(\C(\ell)^{\bot})=n-2\ell$. It is easily verified that $C_1^{(3,2n)}=\left\{1,3,\cdots,3^{\ell-2},2n-3^{\ell-1}, 3^{\ell-1}, 2n-3^{\ell-2},\cdots,2n-1\right\}$. Let $$H=\left\{2i+1:\frac{3^{\ell-1}+3}4\leq i\leq \frac{3^{\ell-1}-3      }{2} \right\}.$$
Then $H\subseteq \{1+2i:\ 0\leq i\leq n-1\}\backslash C_1^{(3,2n)}$, and $\beta^i$ is a zero of $\C(\ell)$ for each $i\in H$. The desired lower bound then follows from Lemma \ref{lem1}. Notice that $-3,-1, 1, 3 \in C_1^{(3,2n)}$, from Lemma \ref{lem1}, $d(\C(\ell)^{\bot})\geq 5$. It follows from Lemma \ref{lem2} that $d(\C(\ell)^{\bot})\leq 6$. This completes the proof.
\end{proof}

Note that $3^{\ell}+1\equiv 4~({\rm mod}~8)$ for $\ell$ being odd, then $n=\frac{3^{\ell}+1}{4}$ is odd. In this case, the ternary negacyclic code $\C(\ell)$ of length $n$ is scalar-equivalent to a ternary cyclic code of length $n$. Therefore, Theorem \ref{thm25} can produce ternary cyclic codes with good parameters.

 \begin{example}
	Let $\ell=5$, then $n=\frac{3^{\ell}+1}4=61$. Let $\beta$ be the primitive $2n$-th root of unity with $\beta^{10}+\beta^8+\beta^7+\beta^5+\beta^3+\beta^2+1=0$. Then the negacylic code $\C(\ell)$ has parameters $[61,10,31]$. The best ternary code known of length $61$ and dimension $10$ has minimum distance $32$ \cite{Grassl}. The dual code $\C(\ell)^{\bot}$ has parameters $[61,51,5]$ and has the best parameters known \cite{Grassl}. 
	\end{example}

\begin{theorem}\label{thm-a171}
Let $n=3^{\ell}+1$, where $\ell \geq 2$ is an integer. Then the negacyclic code $\C(\ell)$ has parameters $[n, 2\ell, d\geq \frac{3^{\ell}+3}2]$, and the code $\C(\ell)^{\bot}$ has parameters $[n, n-2\ell, d^{\bot}]$, where 
\begin{align*}
	d^{\bot}=\begin{cases}
	3~&\ell~{\rm is~odd},\\
	4~&\ell~{\rm is~even}.
\end{cases}
\end{align*}
\end{theorem}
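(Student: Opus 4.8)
The plan is to establish, in order, the dimension, the lower bound on $d(\C(\ell))$, and finally the exact value of $d^{\bot}$, the last being where the real work lies. The dimension is immediate: $n=3^{\ell}+1$ divides $3^{\ell}+1$ and $2n>3^{\lceil \ell/2\rceil}+1$, so the hypotheses of Lemma \ref{lem18} hold and $\ord_{2n}(3)=2\ell$; hence $\deg \m_{\beta}(x)=2\ell$, $\dim(\C(\ell))=2\ell$ and $\dim(\C(\ell)^{\bot})=n-2\ell$. For the distance of $\C(\ell)$ I would first pin down the coset $C_1^{(3,2n)}$. Since $3^{\ell}=n-1$ and $\beta^{n}=-1$, one checks $3^{\ell+i}\equiv n-3^{i}\pmod{2n}$ for $0\le i\le \ell-1$, so
$$ C_1^{(3,2n)}=\{3^{i}:0\le i\le \ell-1\}\cup\{n-3^{i}:0\le i\le \ell-1\}, $$
every element of which is at most $n-1$. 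Consequently the odd integers $n+1,n+3,\dots,2n-1$ form a run of $\frac n2=\frac{3^{\ell}+1}{2}$ consecutive odd exponents lying outside $C_1^{(3,2n)}$, hence indexing zeros of $\C(\ell)$; the BCH bound (Lemma \ref{lem1}) with $\delta-1=\frac n2$ gives $d(\C(\ell))\ge \frac n2+1=\frac{3^{\ell}+3}{2}$.

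The dual is the crux. Dualising, $\C(\ell)^{\bot}$ is the negacyclic code with defining set $-C_1^{(3,2n)}=\{n+3^{i}\}\cup\{2n-3^{i}\}$, so $n+1,n+3$ are consecutive odd zeros and Lemma \ref{lem1} already yields $d^{\bot}\ge 3$. Writing $\gamma=\beta^{-1}$ (a root of the irreducible generator $\widehat{\m_{\beta}}$ of $\C(\ell)^{\bot}$), membership in $\C(\ell)^{\bot}$ is the single condition $\sum_{t=0}^{n-1}c_t\gamma^{t}=0$. After normalising I would reduce the existence of a weight-$3$ codeword to the existence of two $2n$-th roots of unity $\zeta,\eta\notin\{\pm1\}$ with $\zeta+\eta=-1$, equivalently three distinct $2n$-th roots of unity $1,\zeta,\eta$ summing to $0$ (a routine check confirms such a relation occupies three distinct coordinates).

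Applying the Frobenius $x\mapsto x^{3^{\ell}}=x^{\,n-1}$ to $\zeta+\eta=-1$ and setting $\epsilon_1=\zeta^{n},\epsilon_2=\eta^{n}\in\{\pm1\}$ produces the companion relation $\epsilon_1\zeta^{-1}+\epsilon_2\eta^{-1}=-1$. A short case analysis on $(\epsilon_1,\epsilon_2)$ then forces $\zeta,\eta\in\gf(3^2)$ and shows every genuine solution requires a primitive $8$th root of unity among the $2n$-th roots of unity: when $\epsilon_1=\epsilon_2=-1$ the pair $\{\zeta,\eta\}$ is the root set of $t^{2}+t-1$ over $\gf(3)$ (discriminant a nonsquare), whose roots have order $8$, whereas $\epsilon_1=\epsilon_2=1$ collapses to $\zeta=\eta=1\notin$ the allowed set via $t^{2}+t+1=(t-1)^{2}$, and the mixed-sign case forces an order-$4$ element $\zeta$ with $\zeta^{n}=1$, which is impossible whenever $4\nmid n$. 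I expect this non-existence analysis for even $\ell$ to be the main obstacle: the BCH bound gives only $d^{\bot}\ge 3$ for both parities, so the extra unit for even $\ell$ must come from genuinely excluding every weight-$3$ word, which only the Frobenius/field-theoretic computation delivers. The decisive arithmetic is that a primitive $8$th root of unity lies among the $2n$-th roots of unity if and only if $8\mid 2n$, i.e. $4\mid 3^{\ell}+1$, i.e. $\ell$ is odd.

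To finish I would assemble the two regimes. For $\ell$ odd, the roots of $t^{2}+t-1$ lie in $\gf(3^2)^{*}$, have order $8$, and (since $8\mid 2n$) are $2n$-th roots of unity, giving the explicit relation $1+\zeta+\eta=0$ and hence a weight-$3$ codeword, so $d^{\bot}=3$. For $\ell$ even, no codeword of weight $\le 3$ exists, so $d^{\bot}\ge 4$; meanwhile the sphere-packing bound (Lemma \ref{lem2}) rules out $d^{\bot}=5$, since that would demand $1+2n+4\binom n2=1+2n^{2}\le 3^{2\ell}=(n-1)^{2}$, which is plainly false. Hence $d^{\bot}=4$, completing the proof.
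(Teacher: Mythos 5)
Your proposal is correct and follows essentially the same route as the paper: dimension via Lemma \ref{lem18}, the same computation of $C_1^{(3,2n)}$ and the same run of consecutive odd zeros for the BCH bound on $\C(\ell)$, the BCH/sphere-packing sandwich $3\le d^{\bot}\le 4$, an explicit weight-three codeword built from a primitive $8$th root of unity (available among the $2n$-th roots of unity precisely when $4\mid n$, i.e.\ $\ell$ odd), and a Frobenius-based non-existence argument for weight-three words when $\ell$ is even. The differences are only presentational: where you apply $x\mapsto x^{3^{\ell}}$ to $1+\zeta+\eta=0$ and split into cases on $(\zeta^{n},\eta^{n})$, the paper raises the hypothetical relation to the $(3^{\ell}+1)$-th power and factors the resulting condition; and where you use the roots of $t^{2}+t-1$, the paper uses the minimal polynomial of $\beta^{-n/4}$ to exhibit the codeword $x^{n/2}+a_{1}x^{n/4}+a_{0}$, the same mechanism in both cases.
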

\begin{proof}
	By Lemma \ref{lem18}, $\dim(\C(\ell))=\ord_{2n}(3)=2\ell$. Consequently, $\dim(\C(\ell)^{\bot})=n-2\ell$. It is easily verified that $3^{\ell+i}\equiv 3^{\ell}-(3^i-1)~({\rm mod}~2n)$ for any $1\leq i\leq \ell-1$. Therefore,
	$$C_1^{(3,2n)}=\left\{1,3,\cdots,3^{\ell-1},3^{\ell}-(3^{\ell-1}-1),\cdots, 3^{\ell}-2, 3^{\ell}\right\}.$$ Let $$H=\left\{2i+1:\frac{3^{\ell}+1}2\leq i\leq 3^{\ell} \right\}.$$
Then $H\subseteq \{1+2i:\ 0\leq i\leq n-1\}\backslash C_1^{(3,2n)}$, and $\beta^i$ is a zero of $\C(\ell)$ for each $i\in H$. The desired lower bound then follows from Lemma \ref{lem1}.

Now we consider the minimum distance of $\C(\ell)^{\bot}$. It follows from Lemma \ref{lem2} that $d(\C(\ell)^{\bot})\leq 4$. Note that $1,3\in C_1^{(3,2n)}$, by Lemma \ref{lem1}, $d(\C(\ell)^{\bot})\geq 3$. We now consider the following two cases.
\begin{enumerate}
\item If $m$ is odd, then $4\mid n$. Let $\eta=\beta^{-\frac{n}{4}}$, then $\eta \in \gf(3^2)\backslash \gf(3)$. Consequently, there exist $a_0,a_1\in \gf(3)$ such that $\m_{\eta}(x)=x^2+a_1x+a_0$. It follows that $x^{\frac{n}2}+a_1x^{\frac{n}4}+a_0\in \C(\ell)^{\bot}$. Therefore, $d(\C(\ell)^{\bot})=3$.
\item If $m$ is even, then $\frac{n}2$ is odd. If $d(\C(\ell)^{\bot})=3$, then there exist $a_0,a_1\in \gf(3)^*$ and $1\leq i_0<i_1\leq n-1$ such that 
\begin{equation}\label{EQ14}
	a_1 \beta^{-i_1}+a_0 \beta^{-i_0}=1.
\end{equation}
Raising both sides of (\ref{EQ14}) to the $(3^m+1)$-th power, we have
\begin{equation}\label{EQ-15}
	(-1)^{i_1}+a_1a_0 [ (-1)^{i_1}\beta^{i_1-i_0}+(-1)^{i_0}\beta^{i_0-i_1}]+(-1)^{i_0}=1.
\end{equation}
It follows that $(-1)^{i_1}\beta^{i_1-i_0}+(-1)^{i_0}\beta^{i_0-i_1}\in \gf(3)$, which is equivalent to 
\begin{align}
	&[(-1)^{i_1}\beta^{i_1-i_0}+(-1)^{i_0}\beta^{i_0-i_1}]^3=(-1)^{i_1}\beta^{i_1-i_0}+(-1)^{i_0}\beta^{i_0-i_1} \notag \\
	\Leftrightarrow & (-1)^{i_0} \beta^{3(i_0-i_1)}( \beta^{2(i_1-i_0)}-1)[(-1)^{(i_1-i_0)}\beta^{4(i_1-i_0)}-1]=0.  \label{EQ-16}         
\end{align}
Note that $(-1)^{i_0}\beta^{3(i_0-i_1)}\neq 0$, it follows from (\ref{EQ-16}) that $\beta^{2(i_1-i_0)}=1$ or $\beta^{4(i_1-i_0)}=(-1)^{i_1-i_0}$.
\begin{enumerate}
\item If $\beta^{2(i_1-i_0)}=1$, we have $n \mid (i_1-i_0)$, which contradicts the fact that $0<i_1-i_0<n$.
\item If $\beta^{4(i_1-i_0)}=(-1)^{i_1-i_0}$, we have $n\mid 4(i_1-i_0)$. Since $\frac{n}2$ is odd, $\frac{n}2\mid (i_1-i_0)$. Notice that $0<i_1-i_0<n$, then $i_1-i_0=\frac{n}2$. Consequently, $\beta^{4(i_1-i_0)}=\beta^{2n}=1= (-1)^{i_1-i_0}=-1$, a contradiction.
\end{enumerate}
Therefore, $d(\C(\ell)^{\bot})=4$. 
\end{enumerate}
This completes the proof.
\end{proof}

\begin{example}\label{exam-a171}
Let $\ell=2$, then $n=3^{\ell}+1=10$. Let $\beta$ be the primitive $2n$-th root of unity with $\beta^4+\beta^3+2\beta+1=0$. Then the negacylic code $\C(\ell)$ has parameters $[10,4,6]$ and $\C(\ell)^{\bot}$ has parameters $[10,6,4]$. These two negacyclic codes 
are distance-optimal \cite{Grassl}. The best ternary cyclic code of length $10$ and dimension $4$ has minimum distance $4$, and the best ternary cyclic code of length $10$ and dimension $6$ has minimum distance $2$ \cite{dingtang2014}.	 
	
\end{example}

\begin{example}\label{exam-a172}
Let $\ell=3$, then $n=3^{\ell}+1=28$. Let $\beta$ be the primitive $2n$-th root of unity with $\beta^6+2\beta^5+2\beta+2=0$. Then the negacylic code $\C(\ell)$ has parameters $[28,6,15]$ and is distance-optimal \cite{Grassl}. The code $\C(\ell)^{\bot}$ has parameters $[28,22,3]$ and is distance-almost-optimal.	The best ternary cyclic code of length $28$ and dimension $6$ has minimum distance $12$, and the best ternary cyclic code of length $28$ and dimension $22$ has minimum distance $2$ \cite{dingtang2014}.	 
\end{example}

\begin{example}\label{exam-a173}
Let $\ell=4$, then $n=3^{\ell}+1=82$. Let $\beta$ be a primitive $2n$-th root of unity with $\beta^8+2\beta^7+2\beta^6+\beta^5+\beta^4+2\beta^3+2\beta^2+\beta+1=0$. Then the negacylic code $\C(\ell)$ has parameters $[82,8,48]$. The best ternary code known of length $82$ and dimension $8$ has minimum distance $49$ \cite{Grassl}. The code $\C(\ell)^{\bot}$ has parameters $[82,74,4]$ and is distance-optimal.	 
\end{example}

\section{The third family of ternary negacyclic codes and their duals}\label{sec5}

Let $m \geq 3$ be an integer, and let $n$ be a positive divisor of $\frac{3^{m}-1}{2}$ and $n>\frac{3^{ \lfloor m/2\rfloor}+1}2$. Let $\beta$ be a primitive $2n$-th root of unity, then $\beta^n=-1$. Let $\m_{\beta}(x)$ be the minimum polynomial of $\beta$ over $\gf(3)$. Let $\C(m)$ denote the ternary negacyclic code of length $n$ with check polynomial ${\rm lcm}(\m_{\beta}(x),\m_{\beta^{2n-1}}(x))$, where lcm denotes the least common multiple of the minimal polynomials. To settle the dimension of the negacyclic code $\C(m)$, we need the following lemma. 

\begin{lemma}\label{lem26}
Let notation be the same as before. Then $\ord_{2n}(3)=m$ and 
$${\rm lcm}(\m_{\beta}(x),\m_{\beta^{2n-1}}(x))=\m_{\beta}(x)\m_{\beta^{2n-1}}(x).$$ 
\end{lemma}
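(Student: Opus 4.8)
The plan is to handle the two assertions in turn, both by leveraging the lower bound $n > \frac{3^{\lfloor m/2\rfloor}+1}{2}$. For the order, I would first observe that $n \mid \frac{3^m-1}{2}$ gives $2n \mid 3^m-1$, so $3^m \equiv 1 \pmod{2n}$ and hence $\ord_{2n}(3) \mid m$. To upgrade this to equality I would argue by contradiction: if $d := \ord_{2n}(3)$ were a proper divisor of $m$, then $d \le m/p \le \lfloor m/2\rfloor$, where $p$ is the least prime factor of $m$. Since $3^d \equiv 1 \pmod{2n}$ forces $2n \le 3^d - 1 \le 3^{\lfloor m/2\rfloor}-1$, we would get $n \le \frac{3^{\lfloor m/2\rfloor}-1}{2} < \frac{3^{\lfloor m/2\rfloor}+1}{2}$, contradicting the hypothesis. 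This yields $\ord_{2n}(3) = m$.

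For the second assertion, I would reduce the lcm statement to one about cyclotomic cosets. Since $\m_{\beta}(x)$ and $\m_{\beta^{2n-1}}(x)$ are monic irreducible polynomials whose root sets are indexed by $C_1^{(3,2n)}$ and $C_{2n-1}^{(3,2n)}$ respectively, they are either identical or coprime; and $\lcm(\m_{\beta}(x),\m_{\beta^{2n-1}}(x)) = \m_{\beta}(x)\m_{\beta^{2n-1}}(x)$ holds exactly when they are distinct, i.e. when $C_1^{(3,2n)} \ne C_{2n-1}^{(3,2n)}$. As $2n-1 \equiv -1 \pmod{2n}$, the task becomes showing that $-1 \notin C_1^{(3,2n)}$, that is, $3^j \not\equiv -1 \pmod{2n}$ for every $j$.

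The core (and hardest) step is this non-membership. I would suppose $3^j \equiv -1 \pmod{2n}$ for some $j$ and extract structural consequences: squaring gives $3^{2j} \equiv 1 \pmod{2n}$, so $m \mid 2j$, while $3^j \equiv -1 \not\equiv 1 \pmod{2n}$ (valid since $2n > 2$) gives $m \nmid j$. Writing $2j = mt$, one sees that $t$ must be odd, which forces $m$ to be even and $j \equiv m/2 \pmod m$, whence $3^{m/2} \equiv -1 \pmod{2n}$ and thus $2n \mid 3^{m/2}+1$. This gives $n \le \frac{3^{m/2}+1}{2} = \frac{3^{\lfloor m/2\rfloor}+1}{2}$, contradicting the strict inequality in the hypothesis; when $m$ is odd the argument already collapses at the pair $m \mid 2j,\ m\nmid j$. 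I expect this even-$m$ case to be the delicate point: it is precisely here that the $+1$ appearing in the bound (rather than $-1$) and the strictness of the inequality are essential, whereas the order computation in the first part only required the weaker version with $-1$. Concluding $-1 \notin C_1^{(3,2n)}$ separates the two cosets and finishes the proof.
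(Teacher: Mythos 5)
Your proposal is correct, and while its overall skeleton matches the paper's (prove $\ord_{2n}(3)\mid m$, rule out proper divisors via the size hypothesis; reduce the lcm claim to $2n-1\notin C_1^{(3,2n)}$ and derive a contradiction from $3^j\equiv-1\pmod{2n}$), the key step of the second part is argued by a genuinely different route. For the order, the difference is cosmetic: the paper splits into $m$ even (giving $\ord_{2n}(3)\le m/2$) and $m$ odd (giving $\ord_{2n}(3)\le m/3\le\frac{m-1}{2}$), whereas your least-prime-factor observation handles both cases at once; either way a proper divisor of $m$ is at most $\lfloor m/2\rfloor$ and the hypothesis $n>\frac{3^{\lfloor m/2\rfloor}+1}{2}$ gives the contradiction. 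For the coset-separation step, the paper assumes $3^\ell\equiv-1\pmod{2n}$, deduces $2n\mid\gcd(3^\ell+1,3^m-1)$, and invokes the standard identity that this gcd equals $2$ when $m/\gcd(\ell,m)$ is odd and $3^{\gcd(\ell,m)}+1$ when it is even; since $2n>2$ this forces $m$ even and $2n\le 3^{m/2}+1$, contradicting the hypothesis. You instead exploit the already-established equality $\ord_{2n}(3)=m$: squaring gives $m\mid 2j$ while $m\nmid j$, and the parity analysis of $2j/m$ forces $m$ even with $3^{m/2}\equiv-1\pmod{2n}$, hence the same bound $2n\le 3^{m/2}+1$ and the same contradiction. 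Your route is more self-contained --- it needs no external gcd identity, only elementary order arithmetic --- at the cost of making the second assertion logically depend on the first; the paper's version is shorter given the cited number-theoretic fact and establishes the coset claim independently of the order computation.
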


\begin{proof}  
It is clear that $\ord_{2n}(3)$ divides $m$.
\begin{enumerate}
\item If $\ord_{2n}(3)<m$ and $m$ is even, we have $2n\leq 3^{m/2}-1$, a contradiction. 
\item If $\ord_{2n}(3)<m$ and $m$ is odd, then $\ord_{2n}(3)\leq \frac{m}{3}\leq \frac{m-1}2$. Consequently, $2n\leq 3^{(m-1)/2}-1$, a contradiction.
\end{enumerate}
Therefore, $\ord_{2n}(3)=m$. It is easy to see that ${\rm lcm}(\m_{\beta}(x),\m_{\beta^{2n-1}}(x))=\m_{\beta}(x)\m_{\beta^{2n-1}}(x)$ if and only if $2n-1\notin C_1^{(3,2n)}$. Suppose $2n-1\in C_1^{(3,2n)}$, then there is $0\leq \ell\leq m-1$ such that $$3^{\ell}\equiv -1~({\rm mod}~2n).$$
It follows that $2n\mid \gcd(3^{\ell}+1,3^m-1)$. Note that 
\begin{align*}
	\gcd(3^{\ell}+1, 3^m-1)=\begin{cases}
	2~ &{\rm if}~\frac{m}{\gcd(\ell,m)}~{\rm is~odd},\\
	3^{\gcd(\ell, m)}+1~&{\rm if}~\frac{m}{\gcd(\ell,m)}~{\rm is~even}.
\end{cases}
\end{align*}
Since $2n>2$, from $2n\mid \gcd(3^{\ell}+1,3^m-1)$, we have $m$ is even and $\gcd(\ell, m)\leq \frac{m}2$. Consequently, $2n\leq 3^{m/2}+1$, a contradiction. This completes the proof.
\end{proof}

\begin{theorem}\label{thm32}
Let $n=\frac{3^{m}-1}2$, where $m \geq 3$ is an integer. Then the negacyclic code $\C(m)$ has parameters $[n, 2m, d\geq \frac{3^{m-1}-1}2]$, and the code $\C(m)^{\bot}$ has parameters $[n, n-2m,5]$.
\end{theorem}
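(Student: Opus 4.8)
The plan is to read off the dimension from the degree of the check polynomial and then treat the two minimum distances separately: the BCH bound (Lemma \ref{lem1}) for the two lower bounds, and the sphere-packing bounds (Lemmas \ref{lem2} and \ref{lem3}) for the exact value $5$ on the dual. First I would record the two relevant cyclotomic cosets. Since $2n=3^m-1$ we have $3^m\equiv 1\pmod{2n}$, so
$$C_1^{(3,2n)}=\{1,3,3^2,\dots,3^{m-1}\},\qquad C_{2n-1}^{(3,2n)}=\{2n-1,\,2n-3,\dots,2n-3^{m-1}\},$$
each of size $m$ by Lemma \ref{lem26}. Because Lemma \ref{lem26} also guarantees these cosets are disjoint (equivalently $2n-1\notin C_1^{(3,2n)}$), the check polynomial $\m_\beta(x)\m_{\beta^{2n-1}}(x)$ has degree $2m$; hence $\dim(\C(m))=2m$ and $\dim(\C(m)^{\bot})=n-2m$.

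For the lower bound on $d(\C(m))$ I would exploit the large gap between the two cosets: every odd $j$ with $3^{m-1}<j<2n-3^{m-1}$ lies outside $C_1^{(3,2n)}\cup C_{2n-1}^{(3,2n)}$ and so indexes a zero of $\C(m)$. Writing these zeros as $\beta^{1+2i}$, the index $i$ runs consecutively over $\frac{3^{m-1}+1}{2}\le i\le 3^{m-1}-2$, which is a run of $\frac{3^{m-1}-3}{2}$ consecutive zeros. Applying Lemma \ref{lem1} with $\delta-1=\frac{3^{m-1}-3}{2}$ then yields $d(\C(m))\ge \frac{3^{m-1}-1}{2}$.

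For the dual code the key observation is that $C_1^{(3,2n)}\cup C_{2n-1}^{(3,2n)}$ is closed under negation, so $\m_\beta(x)\m_{\beta^{2n-1}}(x)$ is self-reciprocal and the defining set of $\C(m)^{\bot}$ coincides with $C_1^{(3,2n)}\cup C_{2n-1}^{(3,2n)}$. In particular the exponents $2n-3,\,2n-1,\,1,\,3$ all belong to it, i.e. $\beta^{1+2i}$ for the four consecutive indices $n-2\le i\le n+1$ are zeros of $\C(m)^{\bot}$; Lemma \ref{lem1} then gives $d(\C(m)^{\bot})\ge 5$.

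The main obstacle is the matching upper bound $d(\C(m)^{\bot})\le 5$, where the even sphere-packing bound (Lemma \ref{lem3}) must be used in its contrapositive form. If a code with these parameters had minimum distance at least $6$, then Lemma \ref{lem3} applied with target even distance $d=6$ would force $\sum_{i=0}^{2}\binom{n-1}{i}2^{i}\le 3^{2m-1}$. A short computation rewrites the left-hand side as $\frac{3^{2m}-6\cdot 3^m+11}{2}$, so the required inequality reduces to $3^{2m}-18\cdot 3^m+33>0$; viewing this as a quadratic in $t=3^m$, its larger root lies below $27$, hence the inequality holds for all $m\ge 3$. Thus the sphere-packing inequality is violated, forcing $d(\C(m)^{\bot})<6$, and combined with the BCH lower bound this yields $d(\C(m)^{\bot})=5$. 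The one delicate point I would state carefully is the logical direction: Lemma \ref{lem3} must be invoked as a \emph{necessary condition for} $d^{\bot}\ge 6$ rather than for $d^{\bot}=6$, so that its failure rules out every distance $\ge 6$ simultaneously.
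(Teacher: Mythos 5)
Your proposal is correct and follows essentially the same route as the paper: the dimension from Lemma \ref{lem26}, the BCH bound (Lemma \ref{lem1}) applied to the run of odd exponents lying strictly between the two cyclotomic cosets to get $d(\C(m))\geq \frac{3^{m-1}-1}{2}$, the four exponents $-3,-1,1,3$ in the self-reciprocal defining set to get $d(\C(m)^{\bot})\geq 5$, and Lemma \ref{lem3} to cap the dual distance at $5$. You are in fact more careful than the paper on the final step: the paper merely asserts that Lemma \ref{lem3} gives $d(\C(m)^{\bot})\leq 5$, whereas you carry out the sphere-packing computation explicitly and correctly flag that the inequality must be read as a necessary condition for $d^{\bot}\geq 6$ (which one can justify by puncturing a coordinate to obtain an $[n-1,\,n-2m,\,\geq 5]$ code and applying Lemma \ref{lem2} to it, thereby also covering odd values $d^{\bot}\geq 7$).
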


\begin{proof}
It is easily checked that $|C_1^{(3,2n)}|=|C_{2n-1}^{(3,2n)}|$. By Lemma \ref{lem26}, $|C_1^{(3,2n)}|=|C_{2n-1}^{(3,2n)}|=m$. It follows that $\dim(\C(m))=2m$ and $\dim(\C(m)^{\bot})=n-2m$. On one hand, 
\begin{align*}
C_1^{(3,2n)}\cup C_{2n-1}^{(3,2n)} =\left\{1, 3, \cdots,3^{m-1} \right\} \cup \left\{2n-1, 2n-3, \cdots, 2n-3^{m-1}\right\}.
\end{align*}
Let
$$ H=\left\{2i+1:\frac{3^{m-1}+1}2\leq i\leq 3^{m-1}-2 \right\},$$
then $H\subseteq \{1+2i:\ 0\leq i\leq n-1\}\backslash (C_1^{(3,2n)}\cup C_{2n-1}^{(3,2n)})$, and $\beta^i$ is a zero of $\C(m)$ for each $i\in H$. The desired lower bound then follows from Lemma \ref{lem1}. On the other hand, $-3,-1, 1, 3 \in C_1^{(3,2n)}\cup C_{2n-1}^{(3,2n)}$, from Lemma \ref{lem1}, $d(\C(m)^{\bot})\geq 5$. It follows from Lemma \ref{lem3} that $d(\C(m)^{\bot})\leq 5$. This completes the proof.	
\end{proof}

The negacyclic code $\C(m)^{\bot}$ was studied in \cite{DDR11}. Theorem \ref{thm32} gives a lower bound on the minimum distance of the code $\C(m)$.  When $m$ is odd, $n=\frac{3^{m}-1}2$ is odd, then the ternary negacyclic code $\C(m)$ of length $n$ is scalar-equivalent to a ternary cyclic code of length $n$. In other words, Theorem \ref{thm32} can produce ternary cyclic codes with good parameters.

\begin{example}\label{exam-a176}
	Let $m=3$, then $n=\frac{3^m-1}{2}=13$. Let $\beta$ be the primitive $2n$-th root of unity with $\beta^3+2\beta+1=0$. Then the negacyclic code $\C(m)$ has parameters $[13,6,6]$ and $\C(m)^{\bot}$ has parameters $[13,7,5]$. These two negacyclic codes 
	are distance-optimal \cite{Grassl}.
\end{example}

\begin{example}\label{exam-a177}
	Let $m=4$, then $n=\frac{3^m-1}{2}=40$. Let $\beta$ be the primitive $2n$-th root of unity with $\beta^4+2\beta^3+2=0$. Then the negacyclic code $\C(m)$ has parameters $[40,8,21]$ and has the best parameters known \cite{Grassl}. The code $\C(m)^{\bot}$ has parameters $[40,32,5]$, and is distance-optimal. The best ternary cyclic code of length $40$ and dimension $8$ has minimum distance $20$, and the best ternary cyclic code of length $40$ and dimension $32$ has minimum distance $4$ \cite{dingtang2014}.	 
\end{example}

\begin{example}
	Let $m=5$, then $n=\frac{3^m-1}{2}=121$. Let $\beta$ be the primitive $2n$-th root of unity with $\beta^5+2\beta+1=0$. Then the negacyclic code $\C(m)$ has parameters $[121,10,71]$. The best ternary code known of length $121$ and dimension $10$ has minimum distance $72$ \cite{Grassl}. The code $\C(m)^{\bot}$ has parameters $[121,111,5]$, and is distance-optimal.
\end{example}

\begin{theorem}\label{thm36}
Let $n=\frac{3^{m}-1}{4}$, where $m \geq 4$ is an even integer. Then the negacyclic code $\C(m)$ has parameters $[n, 2m, d\geq \frac{3^{m-1}+1}4]$, and the code $\C(m)^{\bot}$ has parameters $[n, n-2m,5\leq d^{\bot}\leq 6]$, and $d^{\bot}=5$ if $m\equiv 0~({\rm mod}~4)$.
\end{theorem}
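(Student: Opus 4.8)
The plan is to follow the template of the proof of Theorem \ref{thm32}: first pin down the two relevant cyclotomic cosets, read off the two dimensions, and then squeeze the minimum distances between a BCH lower bound and a sphere-packing upper bound. First I would invoke Lemma \ref{lem26} (whose hypotheses $n\mid \frac{3^m-1}{2}$ and $n>\frac{3^{m/2}+1}{2}$ are immediate for $n=\frac{3^m-1}{4}$ with $m\geq 4$ even) to get $\ord_{2n}(3)=m$ and $C_1^{(3,2n)}\cap C_{2n-1}^{(3,2n)}=\emptyset$, whence $\dim(\C(m))=|C_1^{(3,2n)}|+|C_{2n-1}^{(3,2n)}|=2m$ and $\dim(\C(m)^{\bot})=n-2m$. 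Since every power $3^i$ with $0\leq i\leq m-1$ is smaller than $2n=\frac{3^m-1}{2}$, the cosets are explicitly $C_1^{(3,2n)}=\{3^i:0\leq i\leq m-1\}$ and $C_{2n-1}^{(3,2n)}=\{2n-3^i:0\leq i\leq m-1\}$; in particular $2n-3^{m-1}=\frac{3^{m-1}-1}{2}$.

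For the lower bound on $d(\C(m))$ I would exhibit a long run of consecutive odd nonzeros of the check polynomial. The two largest nonzeros in the lower half are $3^{m-1}$ and $\frac{3^{m-1}-1}{2}=2n-3^{m-1}$, and a short computation shows that the nearest coset element below $\frac{3^{m-1}-1}{2}$ is $3^{m-2}<\frac{3^{m-1}-1}{2}$, while the nearest one above it (other than $3^{m-1}$) is $2n-3^{m-2}>3^{m-1}$. Hence every odd integer strictly between $\frac{3^{m-1}-1}{2}$ and $3^{m-1}$ indexes a zero of $\C(m)$, i.e.\ taking
$$H=\left\{2i+1:\ \tfrac{3^{m-1}+1}{4}\leq i\leq \tfrac{3^{m-1}-3}{2}\right\}$$
(note $4\mid 3^{m-1}+1$ because $m-1$ is odd) yields $|H|=\frac{3^{m-1}-3}{4}$ consecutive odd zeros. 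Lemma \ref{lem1} then gives $d(\C(m))\geq |H|+1=\frac{3^{m-1}+1}{4}$.

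Next I would bound the dual. By the negacyclic duality the zeros of $\C(m)^{\bot}$ are $\{\beta^{i}:i\in -(C_1^{(3,2n)}\cup C_{2n-1}^{(3,2n)})\}$, and since $-C_1^{(3,2n)}=C_{2n-1}^{(3,2n)}$ this zero set is again indexed by $C_1^{(3,2n)}\cup C_{2n-1}^{(3,2n)}$. In particular $\beta^{-3},\beta^{-1},\beta^{1},\beta^{3}$ are all zeros, a block of four consecutive odd powers, so Lemma \ref{lem1} gives $d(\C(m)^{\bot})\geq 5$. For the upper bound I would feed $d^{\bot}=7$ into the Sphere Packing Bound (Lemma \ref{lem2}): the single term $8\binom{n}{3}$ already grows like $\frac{3^{3m}}{48}$, which exceeds $q^{n-k}=3^{2m}$ for every $m\geq 4$, so no $[n,n-2m,7]$ code exists and $d(\C(m)^{\bot})\leq 6$. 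Note that these bounds leave a genuine gap $5\leq d^{\bot}\leq 6$, since the even-distance bound Lemma \ref{lem3} does not exclude $d^{\bot}=6$ here.

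The crux is the last assertion, $d^{\bot}=5$ when $m\equiv 0\pmod 4$, where I must produce an explicit weight-$5$ codeword to close that gap. The enabling arithmetic fact is that $\ord_5(3)=4$, so $5\mid 3^m-1$ exactly when $4\mid m$, and then $5\mid n$. Assuming $4\mid m$, set $\xi=\beta^{n/5}$; since $\beta$ has order $2n$ and $5\mid n$, $\xi$ is a primitive $10$th root of unity with $\xi^{5}=\beta^{n}=-1$. I would then take
$$c(x)=x^{4n/5}-x^{3n/5}+x^{2n/5}-x^{n/5}+1\in \gf(3)[x]/(x^n+1),$$
the $10$th cyclotomic polynomial evaluated at $x^{n/5}$, which has weight $5$ and distinct exponents $0<\frac{n}{5}<\cdots<\frac{4n}{5}<n$. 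Because $x^4-x^3+x^2-x+1=\frac{x^5+1}{x+1}$ vanishes at every primitive $10$th root of unity, both $c(\beta)=0$ and $c(\beta^{-1})=0$ (as $\xi$ and $\xi^{-1}$ are primitive $10$th roots), and Frobenius then forces $c$ to vanish on all of $\{\beta^{i}:i\in C_1^{(3,2n)}\cup C_{2n-1}^{(3,2n)}\}$. Hence $c\in \C(m)^{\bot}$ has weight $5$, giving $d^{\bot}\leq 5$ and therefore $d^{\bot}=5$. I expect this cyclotomic construction to be the main obstacle: the bounds above cannot by themselves distinguish $5$ from $6$, and only the explicit codeword does, while the supporting bookkeeping ($5\mid n\Leftrightarrow 4\mid m$, distinctness and range of the five exponents) is where the care is needed.
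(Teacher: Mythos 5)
Your proposal is correct and follows essentially the same route as the paper's own proof: Lemma \ref{lem26} for the dimensions, the same set $H$ with the BCH bound (Lemma \ref{lem1}) for $d\geq \frac{3^{m-1}+1}{4}$, the consecutive zeros $\beta^{-3},\beta^{-1},\beta,\beta^{3}$ plus the sphere-packing bound (Lemma \ref{lem2}) for $5\leq d^{\bot}\leq 6$, and a weight-$5$ codeword supported on $\{0,\frac{n}{5},\frac{2n}{5},\frac{3n}{5},\frac{4n}{5}\}$ built from a primitive $10$th root of unity $\beta^{n/5}$ when $4\mid m$. The only (cosmetic) difference is that you write this codeword explicitly as $\Phi_{10}(x^{n/5})=x^{4n/5}-x^{3n/5}+x^{2n/5}-x^{n/5}+1$, whereas the paper uses the unspecified coefficients of $\m_{\beta^{n/5}}(x)$ --- these coincide, since $\Phi_{10}$ is irreducible over $\gf(3)$ ($\ord_{10}(3)=4$) and hence equals that minimal polynomial.
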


\begin{proof}
By Lemma \ref{lem26}, $\deg({\rm lcm}(\m_{\beta}(x), \m_{\beta^{2n-1}}(x)) )=2m$. Then $\dim(\C(m))=2m$ and $$\dim(\C(m)^{\bot})=n-2m.$$ 
It is easily checked that
$$C_1^{(3,2n)}\cup C_{2n-1}^{(3,2n)}=\left\{1,3, \cdots,3^{m-1} \right\}\cup \left\{2n-1, 2n-3,\cdots, 2n-3^{m-1} \right\} .$$
Let
$$ H=\left\{2i+1:\frac{3^{m-1}+1}4\leq i\leq \frac{3^{m-1}-3}2 \right\},$$
then $H\subseteq \{1+2i: \ 0\leq i\leq n-1\}\backslash (C_1^{(3,2n)}\cup C_{2n-1}^{(3,2n)})$, and $\beta^i$ is a zero of $\C(m)$ for each $i\in H$. The desired lower bound then follows from Lemma \ref{lem1}.

Notice that $-3,-1,1,3\in C_1^{(3,2n)}\cup C_{2n-1}^{(3,2n)}$, by Lemma \ref{lem1}, $d(\C(m)^{\bot})\geq 5$. It follows from Lemma \ref{lem2} that $d(\C(m)^{\bot})\leq 6$. If $m\equiv 0~({\rm mod}~4)$, $5$ divides $n$. Let $\eta=\beta^{\frac{n}5}$, then $\ord(\eta)=10$ and $\eta \in \gf(3^4)\backslash \gf(3^2)$. Then $\m_{\eta}(x)=\m_{\eta^{-1}}(x)$ and $\deg(\m_{\eta}(x))=4$. It follows that there exist $a_0,a_1,a_2,a_3\in \gf(3)$ such that $$\eta^4+c_3\eta^3+c_2\eta^2+c_1\eta+c_0=\eta^{-4}+c_3\eta^{-3}+c_2\eta^{-2}+c_1\eta^{-1}+c_0=0.$$
Therefore, $c(x)=x^{\frac{4n}{5}}+c_3x^{\frac{3n}{5}}+c_2 x^{\frac{2n}{5}}+c_1x^{\frac{n}{5}}+c_0\in \C(m)^{\bot}$, it deduces that $d(\C(m)^{\bot})\leq 5$. The desired conclusion then follows.
\end{proof}

\begin{example}\label{exam-a178}
	Let $m=4$, then $n=\frac{3^m-1}{4}=20$. Let $\beta$ be the primitive $2n$-th root of unity with $\beta^4+2\beta^3+\beta^2+1=0$. Then the negacyclic code $\C(m)$ has parameters $[20,8,8]$. The best ternary code known of length $20$ and dimension $8$ has minimum distance $9$ \cite{Grassl}. The code $\C(m)^{\bot}$ has parameters $[20,12,5]$. The best ternary code known of length $20$ and dimension $12$ has minimum distance $6$ \cite{Grassl}. The best ternary cyclic code of length $20$ and dimension $8$ has minimum distance $8$, and the best ternary cyclic code of length $20$ and dimension $12$ has minimum distance $4$ \cite{dingtang2014}. 
\end{example}

\begin{example}\label{exam-a179}
	Let $m=6$, then $n=\frac{3^m-1}{4}=182$. Let $\beta$ be the primitive $2n$-th root of unity with $\beta^6+\beta^5+2\beta^3+1=0$. Then the negacyclic code $\C(m)$ has parameters $[182,12,104]$. The best ternary code known of length $182$ and dimension $12$ has minimum distance $105$ \cite{Grassl}. The code $\C(m)^{\bot}$ has parameters $[182,170,5]$, and has the best parameters known \cite{Grassl}.  
\end{example}

\section{The fourth family of ternary negacyclic codes \& their duals}\label{sec6} 

Let $m\geq 2$ be an integer. Let $n=\frac{3^m-1}2$ and $N=2n=3^m-1$. For any integer $i$, let $i \bmod N$ denote the unique integer $s$ such that $0 \leq s \leq N-1$ and $i-s$ is divisible by $N$ throughout this section.  
For any $i$ with $0\leq i\leq N-1$, we have the following $3$-adic expansion $i=\sum_{j=0}^{m-1}i_j 3^j$, where $0\leq i_j\leq 2$. The $3$-weight $\wt_3(i)$ of $i$ is defined by $\wt_3(i)=\sum_{j=0}^{m-1}i_j$. It is easy to see that $\wt_3(i)$ is a constant on each cyclotomic coset $C_j^{(3, N)}$ and $\wt_3(i)\equiv i~({\rm mod}~2)$. Let $\Gamma_{(3,N)}$ be the set of $3$-cyclotomic coset leaders modulo $N$ and let $\Gamma_{(3,N)}^{(1)}=\{i: \ i\in \Gamma_{(3,N)}, \ i\equiv 1~({\rm mod}~2) \}$. Let $\beta$ be a primitive element of $\gf(3^m)$, then
\begin{align*}
x^n+1&=\prod_{i\in \Gamma_{(3,N)}^{(1)} }\m_{\beta^i}(x) \notag\\
&=\prod_{i\in \Gamma_{(3,N)}^{(1)}, \atop \wt_3(i)\equiv 1~({\rm mod}~4)} \m_{\beta^i}(x) \prod_{i\in \Gamma_{(3,N)}^{(1)}, \atop \wt_3(i)\equiv 3~({\rm mod}~4)}\m_{\beta^i}(x).
\end{align*}
For each $j\in \{1,3\}$, define 
$$g_{(j,m)}(x)=\prod_{i\in \Gamma_{(3,N)}^{(1)}, \atop \wt_3(i)\equiv j~({\rm mod}~4)} \m_{\beta^i}(x).$$
Then $g_{(j,m)}(x)$ is a polynomial over $\gf(3)$. 
Let $\C_{(j, m)}$ be the ternary negacyclic code of length $\frac{3^m-1}2$ with generator polynomial $g_{(j, m)}(x)$.

For any positive integer $m$, let
$$S_j(m)=\left \{ (i_0,i_1,\ldots, i_{m-1})\in \{0,1,2 \}^{m}:\ i_0+i_1+\cdots+i_{m-1}\equiv j~({\rm mod}~4)\right\},  $$
where $j\in \{0,1,2,3\}$. It is easy to verify that 
\begin{align*}
 \bigcup_{i\in \Gamma_{(3,N)}^{(1)}, \atop \wt_3(i)\equiv j~({\rm mod}~4) } C_i^{(3,N)}=\left\{1\leq i\leq N-1: \ \wt_3(i)\equiv j~({\rm mod}~4) \right\},
\end{align*}
and $\deg( g_{(j, m)}(x))=\arrowvert S_j(m)\arrowvert$ for $j=\{1,3\}$. To settle the dimension of the negacyclic code $\C_{(j, m)}$, we need the following lemma.

\begin{lemma}\label{Lemm38}
	Let notation be the same as before. The following hold.
	\begin{enumerate}
	\item If $m\geq 2$ is even, then $\arrowvert S_0(m)\arrowvert=  \frac{3^m+1+2(-1)^{\frac{m}2}}{4}$, $\arrowvert S_1(m)\arrowvert=\arrowvert S_3(m)\arrowvert=\frac{3^m-1}{4}$,  and  
	$$\arrowvert S_2(m)\arrowvert=\frac{3^m+1+2(-1)^{\frac{m-2}2}}{4}.$$ 
	\item If $m\geq 3$ is odd, then $\arrowvert S_0(m)\arrowvert=\arrowvert S_2(m)\arrowvert=\frac{3^m+1}{4}$, $\arrowvert S_1(m)\arrowvert=\frac{3^m-1+2(-1)^{\frac{m-1}2}}{4}$ and 
	$$\arrowvert S_3(m)\arrowvert=\frac{3^m-1+2(-1)^{\frac{m+1}2}}{4}.$$
	\end{enumerate}
	
\end{lemma}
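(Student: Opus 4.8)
The plan is to compute $|S_j(m)|$ by a generating function together with a roots-of-unity filter. Since each coordinate $i_k$ ranges over $\{0,1,2\}$ and contributes its value to the sum $i_0+\cdots+i_{m-1}$, the full generating polynomial is $(1+x+x^2)^m$, and $|S_j(m)|$ is exactly the sum of the coefficients of those monomials $x^t$ with $t\equiv j \pmod 4$. To isolate these coefficients I would use the four fourth roots of unity $1,\mathbf{i},-1,-\mathbf{i}$ in $\bC$ (where $\mathbf{i}^2=-1$) and the standard filter identity
\begin{equation*}
|S_j(m)| = \frac{1}{4}\sum_{t=0}^{3} \mathbf{i}^{-jt}\, f(\mathbf{i}^t)^m, \qquad f(x)=1+x+x^2 .
\end{equation*}

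The next step is the four evaluations of $f$, which are immediate: $f(1)=3$, $f(\mathbf{i})=1+\mathbf{i}-1=\mathbf{i}$, $f(-1)=1$, and $f(-\mathbf{i})=1-\mathbf{i}-1=-\mathbf{i}$. Substituting and using $-\mathbf{i}=\mathbf{i}^{-1}$ together with $\mathbf{i}^{-3j}=\mathbf{i}^{j}$, the filter collapses to
\begin{equation*}
|S_j(m)| = \frac{1}{4}\left[\, 3^m + (-1)^j + \mathbf{i}^{\,m-j} + \mathbf{i}^{\,j-m}\,\right] = \frac{1}{4}\left[\, 3^m + (-1)^j + 2\cos\!\Big(\tfrac{(m-j)\pi}{2}\Big)\,\right].
\end{equation*}
This single closed form already encodes all eight target identities; what remains is purely to read off the cosine term.

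Finally I would split according to the parity of $m$ and the residue $j\in\{0,1,2,3\}$. The key observation is that $\cos\big((m-j)\pi/2\big)$ vanishes whenever $m-j$ is odd and equals $(-1)^{(m-j)/2}$ whenever $m-j$ is even. For $m$ even this kills the cosine for $j\in\{1,3\}$, giving $|S_1(m)|=|S_3(m)|=(3^m-1)/4$, while for $j=0,2$ it produces $(-1)^{m/2}$ and $(-1)^{(m-2)/2}$ respectively; for $m$ odd the even and odd residues swap roles. The only spot needing a little care is matching the exponents to the form stated in the lemma, for instance for $j=3$ with $m$ odd one rewrites $(-1)^{(m-3)/2}=(-1)^{(m+1)/2}$, but this is elementary sign bookkeeping.

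As a remark, the same result is obtainable without complex numbers by setting $a_j=|S_j(m)|$ and using the transfer recurrence $a_j(m+1)=a_j(m)+a_{j-1}(m)+a_{j-2}(m)$ (indices read mod $4$), which records the three choices $0,1,2$ for the last coordinate, and then verifying the closed forms by induction on $m$; the generating-function route, however, avoids carrying four coupled sequences and is the version I would write up. The argument presents no genuine obstacle, only the routine evaluation of $f$ at the fourth roots of unity and the sign bookkeeping just described.
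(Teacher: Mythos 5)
Your proposal is correct, and it takes a genuinely different route from the paper. The paper argues combinatorially in three steps: first it computes $\arrowvert S_1(m)\arrowvert+\arrowvert S_3(m)\arrowvert=\frac{3^m-1}{2}$ and $\arrowvert S_0(m)\arrowvert+\arrowvert S_2(m)\arrowvert=\frac{3^m+1}{2}$ via a binomial-coefficient identity; then it uses the complement bijection $i\mapsto i^c=(2-i_0,\ldots,2-i_{m-1})$ to get $\arrowvert S_1(m)\arrowvert=\arrowvert S_3(m)\arrowvert$ for $m$ even and $\arrowvert S_0(m)\arrowvert=\arrowvert S_2(m)\arrowvert$ for $m$ odd; finally it pins down the remaining unknown ($\arrowvert S_0(m)\arrowvert$ or $\arrowvert S_1(m)\arrowvert$) through the recursion $\arrowvert S_0(m)\arrowvert=\frac{5\cdot 3^{m-2}+1}{2}-\arrowvert S_0(m-2)\arrowvert$ and iterates it down to the base case. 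Your roots-of-unity filter replaces all of this with the single identity
\begin{equation*}
\arrowvert S_j(m)\arrowvert=\frac{1}{4}\left[\,3^m+(-1)^j+\mathbf{i}^{\,m-j}+\mathbf{i}^{\,j-m}\,\right],
\end{equation*}
which works because $1+x+x^2$ happens to evaluate to fourth roots of unity at $\pm\mathbf{i}$ and to $1$ at $-1$; I checked all eight resulting cases (including the rewrite $(-1)^{(m-3)/2}=(-1)^{(m+1)/2}$ for $j=3$, $m$ odd) and they match the lemma. What your approach buys is uniformity and brevity: one formula covers both parities of $m$ and all four residues, with no coupled recursion, no bijection, and no case-by-case iteration. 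What the paper's approach buys is that it stays entirely within elementary counting over $\{0,1,2\}^m$ (no complex numbers), and its intermediate facts, in particular the complement bijection, are transparent structural statements about the sets $S_j(m)$ themselves. Your closing remark about the transfer recurrence $a_j(m+1)=a_j(m)+a_{j-1}(m)+a_{j-2}(m)$ is in fact the same mechanism the paper uses, restricted to the residues it still needs, so your write-up and the paper meet at that point even though the main arguments diverge.
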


\begin{proof}
It is easy to verify that 
\begin{align*}
\arrowvert S_1(m)\arrowvert+\arrowvert S_3(m)\arrowvert&=\arrowvert \{ (i_0,i_1,\ldots, i_{m-1})\in \{0,1,2 \}^{m}:\ i_0+i_1+\cdots+i _{m-1}\equiv 1~({\rm mod}~2)\} \arrowvert \\
&=\arrowvert \{ (i_0,i_1,\ldots, i_{m-1})\in \{0,1,2 \}^{m}:\ |\{0 \leq j \leq m-1: i_j=1\}| \mbox{ is odd} \} \arrowvert \\
&=\sum_{i {\rm ~ is ~odd}}\binom{m}{i} 2^{m-i}.
\end{align*}
Clearly,
\begin{align*}
(2+1)^m&=\sum_{i {\rm ~ is ~odd}}\binom{m}{i} 2^{m-i}+\sum_{i {\rm ~ is ~even}}\binom{m}{i} 2^{m-i},\\
(2-1)^m&=\sum_{i {\rm ~ is ~odd}}(-1)^i\binom{m}{i} 2^{m-i}+\sum_{i {\rm ~ is ~even}}(-1)^i\binom{m}{i} 2^{m-i}.
\end{align*}
Therefore,
\begin{equation}\label{EQ15}
\arrowvert S_1(m)\arrowvert+\arrowvert S_3(m)\arrowvert=\sum_{i {\rm ~ is ~odd}}\binom{m}{i} 2^{m-i}=\frac{3^m-1}2.
\end{equation}
Consequently, 
\begin{align}\label{EQ16}
\arrowvert S_0(m)\arrowvert+\arrowvert S_2(m)\arrowvert&=\arrowvert \{ (i_0,i_1,\ldots, i_{m-1})\in \{0,1,2 \}^{m}:\ i_0+i_1+\cdots+i_{m-1}\equiv 0~({\rm mod}~2)\} \arrowvert \notag \\
&= \frac{3^m+1}2.
\end{align}

For any $i=(i_0,i_1,\ldots, i_{m-1})\in \{0,1,2 \}^{m}$, define $i^c=(2-i_0,2-i_1,\ldots, 2-i_{m-1})$. 
We consider the following cases.  

\begin{enumerate}
\item Suppose that $m\geq 2$ is even. It is easy to verify that $i\in S_1(m)$ if and only if $i^c\in S_3(m)$. Then $$\psi_1:\  S_1(m)	\longrightarrow  S_3(m),\ i \longmapsto i^c$$ is a bijection. Therefore, $\arrowvert S_1(m)\arrowvert=\arrowvert S_3(m)\arrowvert$. It follows from (\ref{EQ15}) that 
$$\arrowvert S_1(m)\arrowvert=\arrowvert S_3(m)\arrowvert=\frac{3^m-1}{4}.$$
Clearly, $\arrowvert S_0(2)\arrowvert=2$. Suppose $m\geq 4$, then
\begin{align*}
	\arrowvert S_0(m)\arrowvert &=\arrowvert S_0(m-1)\arrowvert +\arrowvert S_2(m-1)\arrowvert +\arrowvert S_3(m-1)\arrowvert  \notag\\
	&=\frac{3^{m-1}+1}2+ \arrowvert S_3(m-1)\arrowvert \notag \\
	&=\frac{3^{m-1}+1}2+\arrowvert S_1(m-2)\arrowvert+ \arrowvert S_2(m-2)\arrowvert+ \arrowvert S_3(m-2)\arrowvert  \notag \\ 
	&=\frac{3^{m-1}+1}2+\frac{3^{m-2}-1}2+\frac{3^{m-2}+1}2-\arrowvert S_0(m-2)\arrowvert  \notag \\
	&=\frac{5\cdot 3^{m-2}+1}2-\arrowvert S_0(m-2)\arrowvert.
\end{align*}
Using recursion, we deduce that  
\begin{align}
	\arrowvert S_0(m)\arrowvert &=\sum_{i=1}^{\frac{m-2}2}(-1)^{i-1}\left(\frac{5\cdot 3^{m-2i}+1}2 \right)+(-1)^{\frac{m-2}2}\arrowvert S_0(2)\arrowvert \notag \\
	&= \frac{3^m+1+2(-1)^{\frac{m}2}}{4}.\label{EQ17}
\end{align}
It follows from (\ref{EQ16}) and (\ref{EQ17}) that 
$$\arrowvert S_2(m)\arrowvert=\frac{3^m+1+2(-1)^{\frac{m-2}2}}{4}. $$
\item Suppose that $m\geq 3$ is odd. It is easy to check that $i\in S_0(m)$ if and only if $i^c\in S_2(m)$. Then $$\psi_2:\  S_0(m)	\longrightarrow  S_2(m),\ i \longmapsto i^c$$ is a bijection. Therefore, $\arrowvert S_0(m) \arrowvert=\arrowvert S_2(m)\arrowvert$. It follows from (\ref{EQ16}) that 
$$\arrowvert S_0(m)\arrowvert=\arrowvert S_2(m)\arrowvert=\frac{3^m+1}{4}.$$
Note that 
\begin{align}
\arrowvert S_1(m)\arrowvert &=\arrowvert S_0(m-1)\arrowvert+\arrowvert S_1(m-1)\arrowvert  +\arrowvert S_3(m-1)\arrowvert  \notag\\
	&=\arrowvert S_0(m-1)\arrowvert+\frac{3^{m-1}-1}2 \notag\\
	&=\frac{3^{m-1}+1+2(-1)^{\frac{m-1}2}}{4}+\frac{3^{m-1}-1}2 \notag\\
	&=\frac{3^m-1+2(-1)^{\frac{m-1}2}}{4}\label{EQ18}
\end{align}
It follows from (\ref{EQ15}) and (\ref{EQ18}) that
$$\arrowvert S_3(m)\arrowvert=\frac{3^m-1+2(-1)^{\frac{m+1}2}}{4}.$$
\end{enumerate}
This completes the proof.
\end{proof} 

To estimate the minimum distance of the negacyclic code $\C_{{(j, m)}}$, we need the following lemmas.

\begin{lemma}\label{lemm39}
Let $2\leq s\leq m$ be a positive integr. For any $0\leq i\leq 3^{s}-1$, we have $$\wt_3(3^{s}-1-i)=2s-\wt_3(i).$$	
\end{lemma}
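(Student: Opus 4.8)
The plan is to compute the base-$3$ digits of $3^s-1-i$ directly from those of $i$ and observe that the subtraction involves no borrows. First I would write the $3$-adic expansion $i=\sum_{j=0}^{s-1} i_j 3^j$ with each $0\leq i_j\leq 2$, which is legitimate since $0\leq i\leq 3^s-1$ guarantees that all digits of index $\geq s$ vanish. The crucial elementary fact is that $3^s-1$ has the all-twos expansion, namely
\[
3^s-1=\sum_{j=0}^{s-1} 2\cdot 3^j.
\]

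Subtracting termwise then gives
\[
3^s-1-i=\sum_{j=0}^{s-1} (2-i_j)\, 3^j.
\]
The key step is to note that because $0\leq i_j\leq 2$, each coefficient satisfies $0\leq 2-i_j\leq 2$, so the right-hand side is already a valid $3$-adic expansion of $3^s-1-i$: no digit exceeds $2$ and no borrowing occurs. Hence the digits of $3^s-1-i$ are exactly $2-i_j$ for $0\leq j\leq s-1$.

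Reading off the $3$-weight then finishes the argument, since
\[
\wt_3(3^s-1-i)=\sum_{j=0}^{s-1}(2-i_j)=2s-\sum_{j=0}^{s-1} i_j=2s-\wt_3(i).
\]
There is no real obstacle here; the only point that requires a word of justification is the absence of borrows in the subtraction, which is immediate from $0\leq 2-i_j\leq 2$. I would therefore keep the proof to these three displays and an explicit remark that $\sum_{j=0}^{s-1}(2-i_j)3^j$ is a genuine base-$3$ representation, so that the identification of $\wt_3(3^s-1-i)$ with $\sum_{j}(2-i_j)$ is valid.
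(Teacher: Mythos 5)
Your proof is correct and is essentially identical to the paper's: both write $i=\sum_{j=0}^{s-1} i_j 3^j$ and observe that $3^s-1-i=\sum_{j=0}^{s-1}(2-i_j)3^j$ is itself a valid base-$3$ expansion, from which the weight identity follows by summing digits. Your explicit remark that no borrows occur (since $0\leq 2-i_j\leq 2$) is a small clarification the paper leaves implicit, but the argument is the same.
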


\begin{proof}
Let the $3$-adic expansion of $i$ be $i=\sum_{j=0}^{s-1}i_j3^j$, then 
$$3^{s}-1-i=\sum_{j=0}^{s-1}(2-i_j)3^j.$$ 
It follows that $\wt_3(3^{s}-i)=2s-\wt_3(i)$. This completes the proof.
\end{proof}

\begin{lemma}\label{lemm40}
Let $m\geq 5$ and $m \equiv 1~({\rm mod}~4)$. Let $N=3^m-1$ and $v=\frac{3^{m}-1}{2}+3^{\frac{m-1}2}-1$. Then $\gcd(v,N)=1$ and 
$$\left \{v(1+2i)\bmod{N}: \ 0\leq i\leq \frac{3^{\frac{m-1}2}-1}{4}+2 \right \} \subseteq S_1(m).$$	
\end{lemma}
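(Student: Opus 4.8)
The plan is to reduce each element $v(1+2i)\bmod N$ to an explicit closed form whose base‑$3$ digits can be read off, and then to compute its $3$-weight directly. Throughout write $t=\frac{m-1}{2}$; since $m\equiv 1\pmod 4$ and $m\ge 5$, $t$ is even and $m=2t+1$.

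For the coprimality claim I would start from the identity $2v = N + 2(3^t-1)$, which follows at once from $v=\frac{N}{2}+3^t-1$ and $N=3^m-1=3^{2t+1}-1$. If $d=\gcd(v,N)$ then $d\mid(2v-N)=2(3^t-1)$, so $d\mid\gcd\!\big(N,2(3^t-1)\big)$. Using the standard identity $\gcd(3^a-1,3^b-1)=3^{\gcd(a,b)}-1$ together with $\gcd(t,m)=\gcd(t,2t+1)=1$ gives $\gcd(3^t-1,N)=2$; a short $2$-adic valuation check ($m$ odd forces $v_2(N)=1$) then yields $\gcd\!\big(N,2(3^t-1)\big)=2$, so $d\mid 2$. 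Finally $v$ is odd, since $\wt_3(v)=m$ is odd and $\wt_3(i)\equiv i\pmod 2$, whence $d=1$.

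For the main inclusion the key reduction is that for odd $w=1+2i$ one has $vw\equiv \frac{N}{2}+(3^t-1)w\pmod N$, because $\frac{N}{2}w\equiv\frac{N}{2}\pmod N$ when $w$ is odd. A size estimate shows that over the stated range of $i$ the right‑hand side already lies in $[0,N)$, so it equals $vw\bmod N$ exactly. I would then use the decomposition $\frac{N}{2}=\frac{3^t-1}{2}+3^t\cdot\frac{3^t-1}{2}+3^{2t}$ to rewrite
\[
vw\bmod N \;=\; L + 3^t H + 3^{2t},\qquad L=\tfrac{3^t-1}{2}-w,\quad H=\tfrac{3^t-1}{2}+w,
\]
so that $L+H=3^t-1$. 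In the generic range $0\le i\le \frac{3^t-1}{4}-1$ (equivalently $L\ge 1$) one checks $0\le L<3^t$ and $0\le H<3^t$, so the three summands occupy disjoint blocks of base‑$3$ positions and $\wt_3(vw\bmod N)=\wt_3(L)+\wt_3(H)+1$. Since $H=3^t-1-L$, Lemma~\ref{lemm39} (with $s=t$) gives $\wt_3(H)=2t-\wt_3(L)$, whence the weight equals $2t+1=m\equiv 1\pmod 4$, i.e.\ $vw\bmod N\in S_1(m)$.

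The remaining and genuinely fiddly part is the three boundary values $i=\frac{3^t-1}{4}+r'$ with $r'\in\{0,1,2\}$, where $L<0$ and the clean block structure breaks because $H$ overflows past $3^t$. Here I would substitute $w=\frac{3^t-1}{2}+r$ with $r=2r'+1\in\{1,3,5\}$ and simplify $\frac{N}{2}+(3^t-1)w$ directly to $2\cdot 3^{2t}+(r-1)3^t-r$; a case‑by‑case base‑$3$ expansion of this number (for instance $2\cdot3^{2t}-1=1\underbrace{2\cdots2}_{2t}$ when $r=1$) gives $3$-weight $4t+1$ for $r=1$ and $2t+1$ for $r\in\{3,5\}$, both $\equiv 1\pmod 4$ because $t$ is even. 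This exhausts all $\frac{3^t-1}{4}+3$ admissible values of $i$ and completes the verification. I expect the borrow/overflow bookkeeping in these three boundary cases, and the careful no‑carry argument separating the blocks $L$, $3^tH$, $3^{2t}$ in the generic case, to be the only delicate steps.
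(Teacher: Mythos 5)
Your proposal is correct and takes essentially the same route as the paper's own proof: the same key congruence $v(1+2i)\equiv \frac{N}{2}+(3^{\frac{m-1}{2}}-1)(1+2i)\pmod{N}$, the same block decomposition $L+3^{\frac{m-1}{2}}H+3^{m-1}$ with Lemma \ref{lemm39} giving weight $m$ in the generic range, and the same three boundary cases (your uniform closed form $2\cdot 3^{m-1}+(r-1)3^{\frac{m-1}{2}}-r$ for $r\in\{1,3,5\}$ just packages the paper's cases 2--4, with matching weights $2m-1$, $m$, $m$). The only differences are cosmetic: your coprimality argument via $d\mid 2v-N$ instead of reducing $v$ modulo $\frac{N}{2}$, and your explicit check that the computed representative lies in $[0,N)$, which the paper leaves implicit.
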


\begin{proof}
Since $m\equiv 1~({\rm mod}~4)$,  $\frac{N}2$ and $v$ are odd. It is easily checked that 
\begin{align*}
\gcd\left(v, N\right)&= \gcd\left( v,  \frac{N}2 \right)\\
&= \gcd\left( 3^{\frac{m-1}2}-1, \frac{N}2 \right)\\
&=\frac{1}2 \cdot \gcd\left( 3^{\frac{m-1}2}-1, 3^{m}-1\right)=1.
\end{align*}
We now prove the second conclusion. It is easily checked that 
\begin{align}\label{EEQQ17}
v(1+2i)&\equiv \frac{3^m-1}2+(3^{\frac{m-1}2}-1)(1+2i)~({\rm mod}~N) \notag\\
&\equiv 3^{m-1}+3^{\frac{m-1}2}\left(\frac{3^{\frac{m-1}2}-1}2+1+2i \right)+\frac{3^{\frac{m-1}2}-1}2-(1+2i)~({\rm mod}~N).
\end{align}
The rest of the proof can be divided into the following cases.
\begin{enumerate}
\item Suppose that $0\leq i\leq \frac{3^{\frac{m-1}2}-1}{4}-1$, then $1\leq 1+2i\leq \frac{3^{\frac{m-1}2}-1}2-1$. It follows from (\ref{EEQQ17}) that
\begin{equation}\label{EEQQ18}
	\wt_3(v(1+2i)\bmod{N})=1+\wt_3\left(\frac{3^{\frac{m-1}2}-1}2+1+2i \right)+\wt_3\left(\frac{3^{\frac{m-1}2}-1}2-1-2i \right).
\end{equation}
By Lemma \ref{lemm39},
\begin{equation}\label{EEQQ19}
	\wt_3\left(\frac{3^{\frac{m-1}2}-1}2+1+2i \right)=2\left(\frac{m-1}2\right)-\wt_3\left(\frac{3^{\frac{m-1}2}-1}2-1-2i \right).
\end{equation}
From (\ref{EEQQ18}) and (\ref{EEQQ19}), 
\begin{equation}\label{EEQQ20}
	\wt_3(v(1+2i)\bmod{N})=m.
\end{equation}
\item Suppose that $ i=\frac{3^{\frac{m-1}2}-1}{4}$, then $1+2i=\frac{3^{\frac{m-1}2}-1}2+1$. It follows from (\ref{EEQQ17}) that
$$v(1+2i)\equiv 3^{m-1}+2\sum_{j=0}^{m-2}3^j~({\rm mod}~N). $$
Therefore, 
\begin{equation}\label{EEQQ21}
	\wt_3(v(1+2i)\bmod{N})=2m-1.
\end{equation}
\item Suppose that $i=\frac{3^{\frac{m-1}2}-1}{4}+1$, then $1+2i=\frac{3^{\frac{m-1}2}-1}2+3$. It follows from (\ref{EEQQ17}) that
$$v(1+2i)\equiv 2\cdot3^{m-1}+3^{\frac{m-1}2}+2\sum_{j=1}^{\frac{m-3}2}3^j ~({\rm mod}~N). $$
Therefore, 
\begin{equation}\label{EEQQ22}
	\wt_3(v(1+2i)\bmod{N})=m.
\end{equation}

\item Suppose that $i=\frac{3^{\frac{m-1}2}-1}{4}+2$, then $1+2i=\frac{3^{\frac{m-1}2}-1}2+5$. It follows from (\ref{EEQQ17}) that
$$v(1+2i)\equiv 2\cdot3^{m-1}+3^{\frac{m+1}2}+2\sum_{j=2}^{\frac{m-3}2}3^j+3+1 ~({\rm mod}~N). $$
Therefore, 
\begin{equation}\label{EEQQ23}
	\wt_3(v(1+2i)\bmod{N})=m.
\end{equation}
\end{enumerate}
It follows from Equations (\ref{EEQQ20}), (\ref{EEQQ21}), (\ref{EEQQ22}) and (\ref{EEQQ23}) that $\wt_3(v(1+2i)\bmod {N} )\equiv 1~({\rm mod}~4)$ for $0\leq i\leq \frac{3^{\frac{m-1}2}-1}{4}+2$. This completes the proof.
\end{proof}

\begin{lemma}\label{lemm41}
Let $m\geq 5$ and $m \equiv 1~({\rm mod}~4)$. Let $N=3^m-1$ and $v=\frac{3^{m}-1}{2}-3^{\frac{m-1}2}-1$. Then $\gcd(v,N)=1$ and 
$$\left \{v(1+2i) \bmod{N} : \ 0\leq i\leq \frac{3^{\frac{m-1}2}-1}{4} \right \} \subseteq S_3(m).$$	
\end{lemma}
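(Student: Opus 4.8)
The plan is to mirror the proof of Lemma~\ref{lemm40}, since the only changes are the sign of the $3^{\frac{m-1}2}$ term in $v$ and the target class $S_3(m)$ in place of $S_1(m)$. First I would dispose of the coprimality claim. Because $m\equiv 1\pmod 4$, both $\frac N2=\frac{3^m-1}2$ and $v$ are odd, so $\gcd(v,N)=\gcd(v,\frac N2)$. Reducing $v$ modulo $\frac N2$ gives $v\equiv -(3^{\frac{m-1}2}+1)$, whence $\gcd(v,\frac N2)=\gcd(3^{\frac{m-1}2}+1,\frac N2)$. Since $\gcd(\frac{m-1}2,m)=1$ and $m$ is odd, the gcd formula used in the proof of Lemma~\ref{lem26} yields $\gcd(3^{\frac{m-1}2}+1,3^m-1)=2$; as $\frac N2$ is odd, this forces $\gcd(v,N)=1$.

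For the weight claim I would first establish the analogue of the key congruence in Lemma~\ref{lemm40}, namely
$$v(1+2i)\equiv 3^{m-1}+3^{\frac{m-1}2}\left(\frac{3^{\frac{m-1}2}-1}2-(1+2i)\right)+\left(\frac{3^{\frac{m-1}2}-1}2-(1+2i)\right)\pmod N,$$
obtained from the identity $\frac{3^m-1}2=3^{m-1}+3^{\frac{m-1}2}\cdot\frac{3^{\frac{m-1}2}-1}2+\frac{3^{\frac{m-1}2}-1}2$ together with $(3^{\frac{m-1}2}+1)(1+2i)=3^{\frac{m-1}2}(1+2i)+(1+2i)$. Writing $t=\frac{m-1}2$, the argument then splits into two cases according to the range of $i$.

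For $0\le i\le \frac{3^{t}-1}4-1$ the quantity $w:=\frac{3^{t}-1}2-(1+2i)$ satisfies $0<w<3^{t}$, so the three summands $3^{2t}$, $3^{t}w$ and $w$ occupy pairwise disjoint blocks of ternary digits (position $2t$, positions $t$ through $2t-1$, and positions $0$ through $t-1$). No carries cross the block boundaries, hence $\wt_3(v(1+2i)\bmod N)=1+2\wt_3(w)$. Here $t$ is even, so $\frac{3^{t}-1}2$ is even while $1+2i$ is odd, making $w$ odd; using $\wt_3(w)\equiv w\pmod 2$ this forces $\wt_3(w)$ to be odd, and therefore $1+2\wt_3(w)\equiv 3\pmod 4$. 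For the endpoint $i=\frac{3^{t}-1}4$ one has $1+2i=\frac{3^{t}+1}2$, so $w=-1$ and the congruence collapses to $v(1+2i)\equiv 3^{2t}-3^{t}-1\pmod N$; writing $3^{2t}-3^{t}-1=(3^{2t}-1)-3^{t}$ shows its ternary digits equal $2$ at every position in $\{0,\dots,2t-1\}$ except for a single $1$ at position $t$, giving $\wt_3=2\cdot 2t-1=4t-1\equiv 3\pmod 4$. In either case the weight lies in the class $3\pmod 4$, so $v(1+2i)\bmod N\in S_3(m)$.

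The routine part is the block-decomposition bookkeeping. The one genuinely delicate point is that, unlike in Lemma~\ref{lemm40} where the generic weight was the constant $m$, here the weight $1+2\wt_3(w)$ is \emph{not} constant, so the conclusion rests entirely on the parity observation that $w$ is odd (which is where the hypothesis $m\equiv 1\pmod 4$ enters). Verifying that no carries cross the block boundaries, so that the additive weight formula is valid, and confirming that the endpoint expansion is exactly as claimed, will be the main things to check carefully.
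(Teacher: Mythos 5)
Your proposal is correct and follows essentially the same route as the paper's proof: the same reduction $\gcd(v,N)=\gcd(3^{\frac{m-1}2}+1,\frac{N}{2})=1$, the same key congruence $v(1+2i)\equiv 3^{m-1}+3^{\frac{m-1}2}w+w \pmod N$ with $w=\frac{3^{\frac{m-1}2}-1}2-(1+2i)$, the same case split, and the same parity observation that $w$ is odd (forcing weight $1+2\wt_3(w)\equiv 3\pmod 4$), with the endpoint weight $2m-3=4t-1$ agreeing with the paper's explicit expansion.
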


\begin{proof}
 	Since $m\equiv 1~({\rm mod}~4)$,  $\frac{N}2$ and $v$ are odd. It is easily checked that 
\begin{align*}
\gcd\left(v, N\right)&= \gcd\left( v,  \frac{N}2 \right)\\
&= \gcd\left( 3^{\frac{m-1}2}+1, \frac{N}2 \right)\\
&=\frac{1}2 \cdot \gcd\left( 3^{\frac{m-1}2}+1, 3^{m}-1\right)=1.
\end{align*}
We now prove the second conclusion. It is easily checked that 
\begin{align}\label{EEQQ24}
v(1+2i)&\equiv \frac{3^m-1}2-(3^{\frac{m-1}2}+1)(1+2i)~({\rm mod}~N) \notag\\
&\equiv 3^{m-1}+3^{\frac{m-1}2}\left(\frac{3^{\frac{m-1}2}-1}2-1-2i \right)+\frac{3^{\frac{m-1}2}-1}2-1-2i ~({\rm mod}~N).
\end{align}
The rest of the proof can be divided into the following cases.
\begin{enumerate}
\item Suppose that $0\leq i\leq \frac{3^{\frac{m-1}2}-1}4-1$, then $1\leq 1+2i\leq \frac{3^{\frac{m-1}2}-1}2-1$. It follows from (\ref{EEQQ24}) that
\begin{equation}\label{EEQQ25}
	\wt_3(v(1+2i)\bmod{N})=1+2\cdot \wt_3\left(\frac{3^{\frac{m-1}2}-1}2-1-2i \right).
\end{equation}
Note that $\frac{3^{\frac{m-1}2}-1}2-1-2i$ is odd, we have $\wt_3\left(\frac{3^{\frac{m-1}2}-1}2-1-2i \right)\equiv 1~({\rm mod}~2)$. It then follows from (\ref{EEQQ25}) that
\begin{equation}\label{EEQQ26}
	\wt_3(v(1+2i)\bmod{N})\equiv 3~({\rm mod}~4).
\end{equation}
\item Suppose that $i=\frac{3^{\frac{m-1}2}-1}4$, then $1+2i=\frac{3^{\frac{m-1}2}-1}2+1$. It follows from (\ref{EEQQ24}) that
$$v(1+2i)\equiv 2\sum_{\frac{m+1}2}^{m-2}3^i+3^{\frac{m-1}2}+2\sum_{j=0}^{\frac{m-3}2}3^j~({\rm mod}~N).$$
Therefore, 
\begin{equation}\label{EEQQ27}
	\wt_3(v(1+2i)\bmod{N})=2m-3\equiv 3~({\rm mod}~4).
\end{equation}
\end{enumerate}
It follows from Equations (\ref{EEQQ26}) and (\ref{EEQQ27}) that $\wt_3(v(1+2i)\bmod {N} )\equiv 3~({\rm mod}~4)$ for $0\leq i\leq \frac{3^{\frac{m-1}2}-1}{4}$. This completes the proof.
\end{proof}

\begin{lemma}\label{lemm42}
Let $m\geq 7$ and $m \equiv 3 ~({\rm mod}~4)$. Let $N=3^m-1$ and $v=\frac{3^{m}-1}{2}-3^{\frac{m-1}2}-1$. Then $\gcd(v,N)=1$ and 
$$\left \{v(1+2i)\bmod{N}: \ 0\leq i\leq \frac{3^{(m-1)/2}+1}{4}+1 \right \} \subseteq S_1(m).$$	
\end{lemma}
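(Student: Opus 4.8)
The plan is to follow the template of Lemmas~\ref{lemm40} and~\ref{lemm41}, exploiting the fact that the integer $v$ here is \emph{identical} to the one in Lemma~\ref{lemm41}; only the residue of $m$ modulo $4$ and the target set ($S_1(m)$ instead of $S_3(m)$) differ. Throughout I write $k=\frac{m-1}{2}$, which is now \emph{odd} since $m\equiv 3\pmod 4$. First I would dispose of $\gcd(v,N)=1$ by copying the computation in Lemma~\ref{lemm41} verbatim: as $m$ is odd, both $\frac N2$ and $v$ are odd, so $\gcd(v,N)=\gcd(v,\tfrac N2)=\gcd(3^{k}+1,\tfrac N2)$, and this divides $\gcd(3^{k}+1,3^m-1)=2$ (using $\gcd(k,m)=1$ together with $m$ odd, as in the proof of Lemma~\ref{lem26}), which forces the value $1$ because $\frac N2$ is odd. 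This step requires no new idea.

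The substance is the $3$-weight computation. Exactly as in the derivation of~(\ref{EEQQ24}), and because $v$ is unchanged, I would write
\begin{equation*}
v(1+2i)\equiv 3^{m-1}+3^{k}A+A \pmod N,\qquad A:=\frac{3^{k}-1}{2}-1-2i .
\end{equation*}
The decisive difference from Lemma~\ref{lemm41} is a \emph{parity switch}: there $k$ was even so $\frac{3^{k}-1}{2}$ was even and $A$ was odd, giving $\wt_3\equiv 3\pmod 4$; here $k$ is odd so $\frac{3^{k}-1}{2}$ is odd and hence $A$ is \emph{even}. Consequently, whenever $A\ge 0$ (which holds precisely for $0\le i\le \frac{3^{k}-3}{4}$) we have $0\le A<3^{k}$, so the three summands $3^{m-1}$, $3^{k}A$ and $A$ occupy disjoint windows of base-$3$ digits, no carries occur, and
\begin{equation*}
\wt_3\big(v(1+2i)\bmod N\big)=1+2\,\wt_3(A)\equiv 1\pmod 4 ,
\end{equation*}
using $\wt_3(A)\equiv A\equiv 0\pmod 2$; when $A=0$ the value is $3^{m-1}$ with weight $1$. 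This is exactly the mechanism that places the value in $S_1(m)$ rather than $S_3(m)$.

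It remains to handle the two values $i=\frac{3^{k}+1}{4}$ and $i=\frac{3^{k}+1}{4}+1$ at the top of the stated range, which give $A=-2$ and $A=-4$; here the clean block decomposition fails and the expression must be reduced to a genuine base-$3$ form. For $A=-2$ one gets $v(1+2i)\equiv 3^{2k}-2\cdot3^{k}-2\pmod N$, and for $A=-4$ one gets $3^{2k}-4\cdot3^{k}-4$; both representatives lie in the interval $(0,N)$, so no reduction modulo $N$ is needed. I would carry out the base-$3$ borrowing explicitly (in the spirit of the boundary case of Lemma~\ref{lemm41}, and using Lemma~\ref{lemm39} where convenient), obtaining for instance
\begin{equation*}
3^{2k}-2\cdot3^{k}-2=\sum_{j=k+1}^{2k-1}2\cdot3^{j}+\sum_{j=1}^{k-1}2\cdot3^{j}+1 ,
\end{equation*}
and an analogous digit pattern for $3^{2k}-4\cdot3^{k}-4$; in each case the digit sum equals $4k-3=2m-5\equiv 1\pmod 4$. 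Combining the three cases shows $\wt_3(v(1+2i)\bmod N)\equiv 1\pmod 4$, i.e.\ $v(1+2i)\bmod N\in S_1(m)$, for every $i$ in the range. I expect the main obstacle to be precisely these negative-$A$ boundary cases: the borrowing across the middle digit position has to be tracked carefully to confirm that the final digit sum is $4k-3$ (and not, say, affected by an extra carry), whereas everything else is a faithful transcription of Lemma~\ref{lemm41} with the single parity bit flipped.
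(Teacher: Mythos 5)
Your proposal is correct and follows essentially the same route as the paper: the gcd claim is imported from Lemma~\ref{lemm41}, the identity $v(1+2i)\equiv 3^{m-1}+3^{(m-1)/2}A+A \pmod N$ with $A=\frac{3^{(m-1)/2}-1}{2}-1-2i$ is reused, the range $A\geq 0$ is handled by the no-carry decomposition plus the parity flip (now $A$ is even, giving weight $1+2\wt_3(A)\equiv 1\pmod 4$), and the two boundary values $A=-2,-4$ are resolved by explicit base-$3$ expansions. Your asserted digit sum $4k-3=2m-5\equiv 1\pmod 4$ is correct in both boundary cases (the paper's expansions confirm it), so the only part you left as ``analogous'' — the borrowing for $3^{2k}-4\cdot 3^k-4$ — works out exactly as you predicted.
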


\begin{proof}
	By Lemma \ref{lemm41}, we have $\gcd(v,N)=1$. We now prove the second conclusion by distinguishing the following cases. 
	\begin{enumerate}
	\item Suppose that $0\leq i\leq \frac{3^{\frac{m-1}2}+1}{4}-1$, then $1\leq 1+2i\leq \frac{3^{\frac{m-1}2}-1}2$. It follows from (\ref{EEQQ24}) that
\begin{equation}\label{EEQQ28}
	\wt_3(v(1+2i)\bmod{N})=1+2\cdot \wt_3\left(\frac{3^{\frac{m-1}2}-1}2-1-2i \right).
\end{equation}
Note that $\frac{3^{\frac{m-1}2}-1}2-1-2i$ is even, we have $\wt_3\left(\frac{3^{\frac{m-1}2}-1}2-1-2i \right)\equiv 0~({\rm mod}~2)$. It then follows from (\ref{EEQQ28}) that
\begin{equation}\label{EEQQ29}
	\wt_3(v(1+2i)\bmod{N})\equiv 1~({\rm mod}~4).
\end{equation}
\item Suppose that $i=\frac{3^{\frac{m-1}2}+1}{4}$, then $1+2i=\frac{3^{\frac{m-1}2}-1}2+2$. It follows from (\ref{EEQQ24}) that
$$v(1+2i)\equiv 2\sum_{j=\frac{m+1}2}^{m-2}3^j+2\sum_{j=1}^{\frac{m-3}2}+1~({\rm mod}~N).$$
Therefore, 
\begin{equation}\label{EEQQ30}
\wt_3(v(1+2i)\bmod{N})=2m-5\equiv 1~({\rm mod}~4).	
\end{equation}
\item Suppose that $i=\frac{3^{\frac{m-1}2}+1}{4}+1$, then $1+2i=\frac{3^{\frac{m-1}2}-1}2+4$. It follows from (\ref{EEQQ24}) that
$$v(1+2i)\equiv 2\sum_{j=\frac{m+3}2}^{m-2}3^j+3^{\frac{m+1}2}+3^{\frac{m-1}2}+2\sum_{j=2}^{\frac{m-3}2}+3+2~({\rm mod}~N).$$
Therefore, 
\begin{equation}\label{EEQQ31}
\wt_3(v(1+2i)\bmod{N})=2m-5\equiv 1~({\rm mod}~4).	
\end{equation}
	\end{enumerate}
It follows from Equations (\ref{EEQQ29}), (\ref{EEQQ30}) and (\ref{EEQQ31}) that $\wt_3(v(1+2i)\bmod{N} )\equiv 1~({\rm mod}~4)$ for $0\leq i\leq \frac{3^{\frac{m-1}2}+1}{4}+1$. This completes the proof.	
\end{proof}

\begin{lemma}\label{lemm43}
Let $m\geq 7$ and $m \equiv 3 ~({\rm mod}~4)$. Let $N=3^m-1$ and $v=\frac{3^m-1}2+3^{\frac{m-1}2}-1$. Then $\gcd(v,N)=1$ and 
$$\left \{v(1+2i) \bmod{N}: \ 0\leq i\leq \frac{3^{\frac{m-1}2}+1}{4} \right \} \subseteq S_3(m).$$	
\end{lemma}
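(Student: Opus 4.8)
The plan is to follow the proof of Lemma~\ref{lemm40} almost verbatim, since the multiplier $v=\frac{3^m-1}2+3^{\frac{m-1}2}-1$ treated here is exactly the one used there; the only changes are that now $m\equiv 3~({\rm mod}~4)$, so $\frac{m-1}2$ is odd, and that the target residue class is $3$ rather than $1$ modulo $4$. First I would dispose of coprimality: as $m$ is odd, $\frac{N}2=\frac{3^m-1}2$ is odd, so $\gcd(v,N)=\gcd(v,\frac N2)=\gcd(3^{\frac{m-1}2}-1,\frac N2)$; since $\gcd(\frac{m-1}2,m)=1$ one has $\gcd(3^{\frac{m-1}2}-1,3^m-1)=2$, and after cancelling the denominator this yields $\gcd(v,N)=1$. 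This is the same computation as in Lemma~\ref{lemm40}.

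Next I would reuse the congruence (\ref{EEQQ17}), which is valid for this same $v$: writing $a=1+2i$, $M=\frac{3^{\frac{m-1}2}-1}2+a$ and $L=\frac{3^{\frac{m-1}2}-1}2-a$, it reads $v(1+2i)\equiv 3^{m-1}+3^{\frac{m-1}2}M+L\pmod N$, and $M+L=3^{\frac{m-1}2}-1$. Because $\frac{3^{\frac{m-1}2}-1}2$ is odd here, the inequalities $0\leq L$ and $M\leq 3^{\frac{m-1}2}-1$ both hold exactly when $0\leq i\leq\frac{3^{\frac{m-1}2}-3}4$. On this generic range the three summands occupy disjoint blocks of base-$3$ digits, so $\wt_3(v(1+2i))=1+\wt_3(M)+\wt_3(L)$; applying Lemma~\ref{lemm39} to $M=3^{\frac{m-1}2}-1-L$ gives $\wt_3(M)+\wt_3(L)=m-1$, hence $\wt_3(v(1+2i))=m\equiv 3~({\rm mod}~4)$, exactly as in case~1 of Lemma~\ref{lemm40}.

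The only value of $i$ not covered is the endpoint $i=\frac{3^{\frac{m-1}2}+1}4$, which is an integer precisely because $\frac{m-1}2$ is odd and equals $\frac{3^{\frac{m-1}2}-3}4+1$. Here both an overflow and a borrow occur simultaneously: $M=3^{\frac{m-1}2}+1$ and $L=-2$, so (\ref{EEQQ17}) collapses to $v(1+2i)\equiv 2\cdot 3^{m-1}+3^{\frac{m-1}2}-2\pmod N$. Reading off the base-$3$ digits of $2\cdot 3^{m-1}+(3^{\frac{m-1}2}-2)$ then gives weight $2+\left(2\cdot\frac{m-1}2-1\right)=m\equiv 3~({\rm mod}~4)$. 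Since these two ranges exhaust $0\leq i\leq\frac{3^{\frac{m-1}2}+1}4$, every residue $v(1+2i)\bmod N$ has $3$-weight $\equiv 3~({\rm mod}~4)$ and therefore lies in $S_3(m)$.

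I expect the only genuine obstacle to be this endpoint: one must verify that the simultaneous overflow of the middle block to $3^{\frac{m-1}2}+1$ and the borrow forcing $L=-2$ combine into the clean residue $2\cdot 3^{m-1}+3^{\frac{m-1}2}-2$ without any further carry propagation, so that its $3$-adic digit pattern—and hence its weight—can be read off directly. Everything else runs in parallel with Lemmas~\ref{lemm40} and \ref{lemm42}.
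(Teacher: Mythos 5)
Your proposal is correct and follows essentially the same route as the paper: the paper also invokes Lemma~\ref{lemm40} for the coprimality and for the weight-$m$ conclusion on the range $0\leq i\leq \frac{3^{(m-1)/2}-3}{4}$, and then computes the single remaining endpoint $i=\frac{3^{(m-1)/2}+1}{4}$ directly, obtaining the residue $2\cdot 3^{m-1}+2\sum_{j=1}^{(m-3)/2}3^j+1$, which is exactly your $2\cdot 3^{m-1}+3^{\frac{m-1}{2}}-2$ written in base~$3$, of weight $m\equiv 3\pmod 4$. If anything, your write-up is slightly more careful than the paper's, since you re-justify the generic-range computation (including the $L=0$ case created by $\frac{3^{(m-1)/2}-1}{2}$ being odd) rather than citing Lemma~\ref{lemm40} verbatim outside its stated hypothesis $m\equiv 1\pmod 4$.
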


\begin{proof}
	By Lemma \ref{lemm40}, we have $\gcd(v,N)=1$ and 
	$$\wt_3(v(1+2i)\bmod{N} )=m$$ 
	for $0\leq i\leq \frac{3^{\frac{m-1}2}-3}{4}$. If $i=\frac{3^{\frac{m-1}2}+1}4$, then 
	$$v(1+2i)\equiv 2\cdot 3^{m-1}+2\sum_{j=1}^{\frac{m-3}2}3^j+1~({\rm mod}~N).$$
	It follows that $\wt_3(v(1+2i)\bmod{N}) =m$. Therefore, $\wt_3(v(1+2i)\bmod{N}) \equiv 3~({\rm mod}~4)$ for $0\leq i\leq \frac{3^{\frac{m-1}2}+1}{4}$. This completes the proof.
\end{proof}

Based on the foregoing lemmas, the main results of this section are given in the next theorem.

\begin{theorem}
Let $m\geq 5$ be odd. Then the negacyclic code $\C_{(1,m)}$ has parameters 
$$\left[\frac{3^m-1}2,\, \frac{3^m-1+2(-1)^{\frac{m+1}2}}4,\, d\geq \frac{3^{\frac{m-1}2}+2+(-1)^{\frac{m-1}2}}4+3 \right],$$
and $\C_{(1,m)}^{\bot}=\C_{(3,m)}$, which has parameters 
$$\left [\frac{3^m-1}2,\, \frac{3^m-1+2(-1)^{\frac{m-1}2}}4,\,  d^{\bot}\geq \frac{3^{\frac{m-1}2}-(-1)^{\frac{m-1}2}}4+2 \right].$$ 
\end{theorem}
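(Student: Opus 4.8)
The plan is to establish the four assertions --- the two dimensions, the duality $\C_{(1,m)}^\bot=\C_{(3,m)}$, and the two minimum-distance lower bounds --- in that order, drawing on the combinatorial Lemmas \ref{Lemm38} and \ref{lemm39}--\ref{lemm43} that are already in place. First I would read off the dimensions. Since $\C_{(j,m)}$ is generated by $g_{(j,m)}(x)$ and $\deg(g_{(j,m)}(x))=|S_j(m)|$, we have $\dim(\C_{(j,m)})=\tfrac{3^m-1}{2}-|S_j(m)|$. Inserting the odd-$m$ values from Lemma \ref{Lemm38}, namely $|S_1(m)|=\tfrac{3^m-1+2(-1)^{(m-1)/2}}{4}$ and $|S_3(m)|=\tfrac{3^m-1+2(-1)^{(m+1)/2}}{4}$, and using $(-1)^{(m+1)/2}=-(-1)^{(m-1)/2}$, yields exactly the two claimed dimensions.

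Next, the duality. From the factorization $x^n+1=g_{(1,m)}(x)\,g_{(3,m)}(x)$, the check polynomial of $\C_{(1,m)}$ is $h_1(x)=(x^n+1)/g_{(1,m)}(x)=g_{(3,m)}(x)$, so by the dual-code lemma of \cite{KS90} the code $\C_{(1,m)}^\bot$ is generated by $\widehat{h_1}(x)=\widehat{g_{(3,m)}}(x)$. The roots of $g_{(3,m)}(x)$ are $\{\beta^i: i\in T_3\}$ with $T_3=\{1\le i\le N-1:\ \wt_3(i)\equiv 3\pmod 4\}$, so the roots of $\widehat{g_{(3,m)}}(x)$ are $\{\beta^{-i}:i\in T_3\}=\{\beta^{N-i}:i\in T_3\}$. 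Applying Lemma \ref{lemm39} with $s=m$ gives $\wt_3(N-i)=2m-\wt_3(i)\equiv 2-\wt_3(i)\pmod 4$ for $m$ odd, hence $\wt_3(N-i)\equiv 3\pmod 4$ whenever $\wt_3(i)\equiv 3\pmod 4$; therefore $N-T_3=T_3$, and $\widehat{g_{(3,m)}}(x)$ and $g_{(3,m)}(x)$ are monic polynomials with the same simple roots (the zeros being simple since $x^n+1$ is separable, as $\gcd(n,3)=1$), so they coincide. Thus $\C_{(1,m)}^\bot=\C_{(3,m)}$.

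Finally, the minimum distances via a multiplier-twisted BCH argument. For $v$ odd with $\gcd(v,N)=1$, the substitution $\sigma:x\mapsto x^v$ is a well-defined ring automorphism of $\gf(3)[x]/(x^n+1)$, since $x^{vn}\equiv(-1)^v=-1$, and under $\Phi$ it realizes a monomial transformation of $\gf(3)^n$; hence $\sigma(\C_{(1,m)})$ is a negacyclic code monomially equivalent to $\C_{(1,m)}$, with defining set $v^{-1}T_1$. The purpose of Lemmas \ref{lemm40} and \ref{lemm42} is precisely that $v\cdot(1+2i)\in T_1=S_1(m)$ over a run of consecutive $i$, i.e.\ the consecutive odd exponents $1,3,5,\dots$ all lie in the zero set of $\sigma(\C_{(1,m)})$; the BCH bound (Lemma \ref{lem1}) applied to that run, together with $d(\C_{(1,m)})=d(\sigma(\C_{(1,m)}))$, gives the stated bound. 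Concretely, for $m\equiv 1\pmod 4$ I would invoke Lemma \ref{lemm40}, whose range $0\le i\le \tfrac{3^{(m-1)/2}-1}{4}+2$ feeds $\delta=\tfrac{3^{(m-1)/2}-1}{4}+4$ into Lemma \ref{lem1}; for $m\equiv 3\pmod 4$ I would use Lemma \ref{lemm42}. The identical scheme applied to $\C_{(3,m)}$ (defining set $T_3$) via Lemmas \ref{lemm41} and \ref{lemm43} produces $d^\bot$. Each resulting $\delta$ simplifies to the displayed closed form once the sign $(-1)^{(m-1)/2}$ is inserted according to $m\bmod 4$. The main obstacle is the justification that the odd multiplier $\sigma$ genuinely preserves the negacyclic structure and the minimum distance, and that the BCH index bookkeeping reproduces the claimed expressions exactly; the combinatorial core --- that these arithmetic-progression images land in $S_1(m)$ or $S_3(m)$ --- has already been discharged by Lemmas \ref{lemm40}--\ref{lemm43}.
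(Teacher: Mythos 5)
Your proposal is correct and follows essentially the same route as the paper: dimensions from Lemma \ref{Lemm38}, the duality $\C_{(1,m)}^{\bot}=\C_{(3,m)}$ via the self-reciprocity $\wt_3(N-i)\equiv \wt_3(i)\pmod 4$ coming from Lemma \ref{lemm39}, and the distance bounds from Lemmas \ref{lemm40}--\ref{lemm43} combined with the BCH bound of Lemma \ref{lem1}, with index bookkeeping that matches the stated closed forms. The only difference is presentational: where the paper re-expresses the zeros of $\C_{(j,m)}$ with respect to the new primitive $2n$-th root of unity $\gamma=\beta^v$, you transport the code by the ring automorphism $x\mapsto x^v$ to a monomially equivalent negacyclic code with defining set $v^{-1}T_j$ --- these are two formulations of the same multiplier argument.
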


\begin{proof}
	It is clear that $\dim(\C_{(1,m)})=n-|S_1(m)|=|S_3(m)|$. The desired dimension of $\C_{(1,m)}$ then follows from Lemma \ref{Lemm38}. We now prove the lower bound on the minimum distance of the code $\C_{(1,m)}$. We consider the 
	following two cases.
	\begin{enumerate}
		\item Suppose that $m\equiv 1~({\rm mod}~4)$. Let $v=\frac{3^{m}-1}{2}+3^{\frac{m-1}2}-1$. It follows from Lemma \ref{lemm40} that $\gcd(v,N)=1$. Let $\gamma=\beta^v$, then $\gamma^n=-1$. It follows again from Lemma \ref{lemm40} that the defining set of $\C_{(1,m)}$ with respect to $\gamma$ contains the set $\left \{1+2i: \ 0\leq i\leq \frac{3^{\frac{m-1}2}-1}{4}+2 \right \}$. The desired lower bound on $d$ then follows from the BCH bound on negacyclic codes.
		\item Suppose that $m\equiv 3~({\rm mod}~4)$. Let $v=\frac{3^{m}-1}{2}-3^{\frac{m-1}2}-1$. It follows from Lemma \ref{lemm42} that $\gcd(v,N)=1$. Let $\gamma=\beta^v$, then $\gamma^n=-1$. It follows again from Lemma \ref{lemm42} that the defining set of $\C_{(1,m)}$ with respect to $\gamma$ contains the set $\left \{1+2i : \ 0\leq i\leq \frac{3^{(m-1)/2}+1}{4}+1 \right \}$. The desired lower bound on $d$ then follows from the BCH bound on negacyclic codes.
	\end{enumerate}

	It follows from Lemma \ref{lemm39} that $\wt_3(N-i)=2m-\wt_{3}(i)$. Therefore, 
	 $$\wt_{3}(N-i)\equiv \wt_{3}(i)~({\rm mod}~4)$$
	 for odd $i$. It follows that $\widehat{g_{(j, m)}}(x)=g_{(j, m)}(x)$ for $j\in \{1,3\}$. By definition, $\C_{(1,m)}=( g_{(1,m)}(x))$ and 
	 $$\C_{(1,m)}^{\bot}=( \widehat{g_{(3, m)}}(x))=( g_{(3,m)}(x)).$$
	Hence, $\C_{(1,m)}^{\bot}=\C_{(3,m)}$. 
	
	The desired dimension of $\C_{(3,m)}$ then follows from Lemma \ref{Lemm38}. We now prove the lower bound on the minimum distance of the code $\C_{(3,m)}$. We consider the following two cases.
 \begin{enumerate}
		\item Suppose that $m\equiv 1~({\rm mod}~4)$. Let $v=\frac{3^{m}-1}{2}-3^{\frac{m-1}2}-1$. It follows from Lemma \ref{lemm41} that $\gcd(v,N)=1$. Let $\gamma=\beta^v$, then $\gamma^n=-1$. It follows again from Lemma \ref{lemm41} that the defining set of $\C_{(3,m)}$ with respect to $\gamma$ contains the set $\left \{1+2i: \ 0\leq i\leq \frac{3^{\frac{m-1}2}-1}{4} \right \}$. The desired lower bound on $d^{\bot}$ then follows from the BCH bound on negacyclic codes.
		\item Suppose that $m\equiv 3~({\rm mod}~4)$. Let $v=\frac{3^m-1}2+3^{\frac{m-1}2}-1$. It follows from Lemma \ref{lemm43} that $\gcd(v,N)=1$. Let $\gamma=\beta^v$, then $\gamma^n=-1$. It follows again from Lemma \ref{lemm43} that the defining set of $\C_{(3,m)}$ with respect to $\gamma$ contains the set $\left \{1+2i : \ 0\leq i\leq \frac{3^{(m-1)/2}+1}{4} \right \}$. The desired lower bound on $d^{\bot}$ then follows from the BCH bound on negacyclic codes.
	\end{enumerate}
	This completes the proof.
\end{proof}

Since $m$ is odd,  $n=\frac{3^m-1}2$ is odd. Then the ternary negacyclic code $\C_{(i,m)}$ of length $n$ is scalar-equivalent to a ternary cyclic code of length $n$. Studying these negacyclic codes are still valuable due to the following facts. Firstly, our experimental data shows  that this family of negacyclic codes have very good parameters in general and contain distance-optimal codes. 
For example, when $m=3$, the negacyclic code $\C_{(1,m)}$ has parameters $[13,7,5]$, and $\C_{(3,m)}$ has parameters $[13,6,6]$. These two negacyclic codes are distance-optimal.	
Secondly, the ternary cyclic codes that are scalar-equivalent to these negacyclic codes $\C_{(i, m)}$ have not been studied in the literature. 

Finally, we compare the family of ternary negacyclic codes $\C_{(1,m)}$ with  the family of  ternary projective Reed-Muller codes 
(see \cite{Sorensen}).  The parameters of the ternary projective Reed-Muller codes of length $13$ are given below:
$$[13,3,9],\ \ [13,6,6],\ \ [13, 10,3], \ \ [13,12,2].  $$
Notice that the negacyclic code $\C_{(1,3)}$ has parameters $[13,7,5]$. 
The family of codes $\C_{(1, m)}$ and the ternary projective Reed-Muller codes are different in general. 
 
\section{Summary and concluding remarks}\label{sec7} 
 
The main contributions of this paper are the constructions and analyses of several families of ternary negacyclic codes. These ternary negacyclic codes are very interesting in theory as they contain distance-optimal codes and codes with best known parameters (see the code examples presented in this paper). A summary of the main specific contributions of this paper goes as follows.  
\begin{enumerate}
	\item A family of ternary irreducible negacyclic codes $\C(\rho)$ with parameters $$[2\rho, \rho-1, d\geq \sqrt{\rho}+1]$$ was constructed in Section \ref{sec3} 
	(see Theorems \ref{thm-am1} and \ref{thm-am2}). 
	The dual code $\C(\rho)^{\bot}$ has parameters $[2\rho, \rho+1,d^{\bot} \geq \sqrt{\rho}]$.  	
	The authors are not aware of any family of ternary codes that can outperform this family of negacyclic codes in terms of the error-correcting capability when the length and dimension are fixed. 
	\item A family of ternary irreducible negacyclic codes $\C(\ell)$ with parameters $\left[\frac{3^{\ell}+1}2, 2\ell, d\geq \frac{3^{\ell-1}+1}2\right]$ was constructed in Section \ref{sec4}. Moreover, the dual code $\C(\ell)^{\bot}$ has parameters $$\left[\frac{3^{\ell}+1}2, \frac{3^{\ell}+1}2-2\ell, 5\right]$$ and is distance-optimal (see Theorem \ref{thm19}). 
	\item A family of ternary irreducible negacyclic codes $\C(\ell)$ with parameters $\left[\frac{3^{\ell}+1}4, 2\ell, d\geq \frac{3^{\ell-1}-1}4\right]$ was constructed in Section \ref{sec4}. Moreover, the dual code $\C(\ell)^{\bot}$ has parameters $$\left[\frac{3^{\ell}+1}4, \frac{3^{\ell}+1}4-2\ell, 5\leq d^{\bot}\leq 6\right]$$ and is distance-almost-optimal (see Theorem \ref{thm25}).
	\item A family of ternary irreducible negacyclic codes $\C(\ell)$ with parameters $\left[3^{\ell}+1, 2\ell, d\geq \frac{3^{\ell}+3}2\right]$ was constructed in Section \ref{sec4} (see Theorem \ref{thm-a171}). Examples \ref{exam-a171},   \ref{exam-a172}, 
	and  \ref{exam-a173} show that the code could be much 
	better than the best ternary cyclic code with the same length and dimension and could be distance-optimal.  In addition, 
	we have the following: 
	\begin{itemize}
	\item If $\ell$ is odd, 	the dual code $\C(\ell)^{\bot}$ has parameters $[3^{\ell}+1, 3^{\ell}+1-2\ell,3]$, and is distance-almost-optimal.
	\item If $\ell$ is even, the dual code $\C(\ell)^{\bot}$ has parameters $[3^{\ell}+1, 3^{\ell}+1-2\ell,4]$, and is distance-optimal.
	\end{itemize}
	\item A family of ternary negacyclic codes $\C(m)$ with parameters $\left[\frac{3^{m}-1}2, 2m, d\geq \frac{3^{m-1}-1}2\right]$ was constructed in Section \ref{sec5} (see Theorem \ref{thm32}). Examples \ref{exam-a176}  
	and  \ref{exam-a177} show that the code could be distance-optimal or could have the best known parameters.  
	 Moreover, the dual code $\C(m)^{\bot}$ has parameters $$\left[\frac{3^{m}-1}2, \frac{3^{m}-1}2-2m, 5\right]$$ and is 
	 distance-optimal.
	\item A family of ternary negacyclic codes $\C(m)$ with parameters $\left[\frac{3^{m}-1}4, 2m, d\geq \frac{3^{m-1}+1}4\right]$ was constructed in Section \ref{sec5} (see Theorem \ref{thm36}). Examples \ref{exam-a178}  
	and  \ref{exam-a179} show that the code could be much 
	better than the best ternary cyclic code with the same length and dimension. 
	Moreover, the dual code $\C(m)^{\bot}$ has parameters $$\left[\frac{3^{m}-1}4, \frac{3^{m}-1}4-2m, d^{\bot}\right],$$ where $5\leq d^\bot\leq 6$, and $d^{\bot}=5$ if $m\equiv 0\pmod{4}$,  and $\C(m)^{\bot}$ is distance-almost-optimal.
	\item A family of ternary negacyclic codes $\C_{(1,m)}$ with parameters $$\left[\frac{3^m-1}2,\frac{3^m-1+2(-1)^{\frac{m+1}2}}4, d\geq \frac{3^{\frac{m-1}2}+2+(-1)^{\frac{m-1}2}}4+3 \right] $$ 
	was constructed in Section \ref{sec6}. The dual code $\C_{(1,m)}^{\bot}$ has parameters $$\left [\frac{3^m-1}2,\frac{3^m-1+2(-1)^{\frac{m-1}2}}4, d^{\bot}\geq \frac{3^{\frac{m-1}2}-(-1)^{\frac{m-1}2}}4+2 \right],$$ 
\end{enumerate} 
The codes $\C_{(1,m)}$ and $\C_{(1,m)}^\perp$ have very good parameters in general. 

Some families of ternary negacyclic codes presented in this paper have odd lengths. So these ternary negacyclic codes are scalar-equivalent to some ternary cyclic codes. Therefore, this paper has produced some families of ternary cyclic codes with good parameters, which 
were not studied in the literature. Notice that most families of ternary negacyclic codes presented in this paper have even lengths, and they have a much better error-correcting capability compared with ternary cyclic codes with the same length and dimension.  



\begin{thebibliography}{99}

\bibitem{Black66} 
E. R. Berlekamp, Negacyclic codes for the Lee metric, in: Proc. Conf. Combinatorial Mathematics and its Applications, Chapel Hill, NC, 1968, 298--316. 

\bibitem{Black08} 
T. Blackford, Negacyclic duadic codes, \emph{Finite Fields Appl.} 14 (2008), 930--943. 

\bibitem{CDFL15} 
B. Chen, H. Q. Dinh, Y. Fan, S. Ling, Polyadic constacyclic codes,  
\emph{IEEE Trans. Inf. Theory} 61(9) (2015), 4895--4904.  

\bibitem{CFLL12} 
B. Chen, Y. Fan, L. Lin, H. Liu, Constacyclic codes over finite fields,
\emph{Finite Fields Appl.} 18 (2012), 1217--1231. 

\bibitem{dingtang2014} 
C. Ding, \emph{Codes from difference sets}, World Scientific, Singapore, 2014.  

\bibitem{dingtang2022} 
C. Ding, C. Tang, \emph{Designs from Linear Codes}, Second Edition, World Scientific, Singapore, 2022.  

\bibitem{dingyang2013}
C. Ding, J. Yang, Hamming weights in irreducible cyclic codes, 
\emph{Discr. Math.} 313(4) (2013), 434--446.

\bibitem{DP92} 
C. Dahl, J. P. Pedersen, Cyclic and pseudo-cyclic MDS codes of length $q+1$, 
\emph{J. Comb. Theory Ser. A} 59 (1992), 130--133. 


\bibitem{DDR11} 
D. Danev, S. Dodunekov, D. Radkova, A family of constacyclic ternary quasi-perfect codes with covering radius 3, 
\emph{Des. Codes Cryptogr.} 59 (2011), 111--118.  

\bibitem{DY10} 
X. Dong, S. Yin, The trace representation of $\lambda$-constacyclic codes over $\mathbb{F}_q$, 
{\it J. Liaoning Normal Univ. (Nat. Sci. ed.)} 33 (2010), 129--131.  

\bibitem{FWF2017}
W. Fang, J. Wen, F. Fu, A $q$-polynomial approach to constacyclic codes
\emph{Finite Fields Appl.} 47 (2017), 161--182.

 
\bibitem{Geor82}
J. Georgiades, Cyclic $(q+1, k)$-codes of odd order $q$ and even dimension $k$ are not optimal, 
\emph{Atti Sent. Mat. Fis. Univ. Modena} 30 (1982), 284--285.

\bibitem{Grassl} 
M. Grassl, Bounds on the minimum distance of linear codes and quantum codes, 
Online available at http://www.codetables.de.

\bibitem{GLL20} 
G. Guo, R. Li, Y. Liu, et al, A family of negacyclic BCH codes of length $n=\frac{q^{2m}-1}2$. \emph{Cryptography and Communications} 12(2) (2020), 187--203.

\bibitem{HD2019}
Z. Heng, C. Ding, A construction of $q$-ary linear codes with irreducible cyclic codes, \emph{Des. Codes Cryptogr.} 87 (2019), 1087--1108.

\bibitem{HP2003}
W. C. Huffman, V. Pless, Fundamentals of error correcting codes, 
\emph{Cambridge University Press} 2003.

\bibitem{Hughes20}
G. Hughes, Constacyclic codes, cocycles and a $u+v|u-v$ construction,\emph{IEEE Trans. Inf. Theory} 46(2) (2000), 674--680.

\bibitem{IR1990}
K. Ireland, M. Rosen, A classical introduction to modern number theory, second ed., Grad. Texts in Math., vol. 84, Springer-Verlag, 1990.

\bibitem{KZ12}
X. Kai, S. Zhu, New quantum MDS codes from negacyclic codes, \emph{IEEE Trans. Inf. Theory} 59(2) (2012), 1193--1197.

\bibitem{KZ13}
X. Kai, S. Zhu, Y. Tang, Quantum negacyclic codes, \emph{Physical Review A} 88(1) (2013) 012326.

\bibitem{KZ18}
X. Kai, P. Li, S. Zhu, Construction of quantum negacyclic BCH codes, \emph{International Journal of Quantum Information} 16(07) (2018), 1850059.


\bibitem{KS90}
A. Krishna, D. V. Sarwate, Pseudocyclic maximum-distance-separable codes, 
\emph{IEEE Trans. Inf. Theory} 36(4) (1990), 880--884.

\bibitem{Klove2007} T. K\o lve, \emph{Codes for Error Detection}, World Scientfic, Singapore, 2007


\bibitem{LLLM17} 
Y. Liu, R. Li, L. Lv, Y. Ma, A class of constacyclic BCH codes and new quantum codes, 
\emph{Quantum Inf Process} 16(66) (2017), 1--16.

\bibitem{LQL2017}
F. Li, Q. Yue, F. Liu, The weight distribution of constacyclic codes, 
\emph{Advances in Mathematics of Communications} 11(3) (2017), 471--480.

\bibitem{LQ2018}
F. Li, Q. Yue, The primitive idempotents and weight distributions of irreducible constacyclic codes,
\emph{Des. Codes Cryptogr.} 86 (2018), 771--784.

\bibitem{LFF}
R. Lidl, H. Niederreiter, \emph{Finite Fields}, Addison-Wesly, New York (1983).

\bibitem{MC21} 
J. Mi, X. Cao, Constructing MDS Galois self-dual constacyclic codes over finite fields,  
\emph{Discrete Mathematics} 344(6), 1--15. 

\bibitem{PZS18}
B. Pang, S. Zhu, Z. Sun, On LCD negacyclic codes over finite fields, \emph{J Syst Sci Complex} 31(4) (2018), 1065--1077.

\bibitem{PD91} 
J. P. Pedersen, C. Dahl, Classification of pseudo-cyclic MDS codes, 
\emph{IEEE Trans. Inf. Theory} 37(2) (1991), 365--370. 

\bibitem{PW72}
W. W. Peterson, E. J. Weldon, Jr., 
\emph{Error-Correcting Codes}, 2nd ed., 
MIT Press, Cambridge, MA, USA, 1972.  

\bibitem{WLL20}
J. Wang, R. Li, Y. Liu, et al. Some negacyclic BCH codes and quantum codes, \emph{Quantum Inf. Process} 19(2) (2020), 74.

\bibitem{WSZ19} L. Wang, Z. Sun, S. Zhu, Hermitian dual-containing narrow-sense constacyclic BCH codes and quantum codes, \emph{Quantum Inf. Process} 18(10) (2019), 323. 

\bibitem{WSD22} X. Wang, Z. Sun, C. Ding, Two families of negacyclic BCH codes, arXiv:2207.01877.
 
\bibitem{Wolfmann2008}
J. Wolfmann, Projective two-weight irreducible cyclic and constacyclic codes,
\emph{Finite Fields Appl.} 14(2) (2008),351--360. 

\bibitem{SR2018}
A. Sharma, S. Rani, Trace description and Hamming weights of irreducible constacyclic codes,
\emph{Advances in Mathematics of Communications} 12(1) (2018), 123--141.

\bibitem{SF2020}
Z. Shi, F. Fu, The primitive idempotents of irreducible constacyclic codes and LCD cyclic codes,
\emph{Cryptography and Communications} 12 (2020), 29--52.

\bibitem{Singh2018}
M. Singh, Weight distributions of all irreducible $\mu$-constacyclic codes of length $\ell^n$, [Online]. Avaliable: https://arxiv.org/abs/1806.10600v1

\bibitem{Sorensen}
A. Sørensen, Projective Reed–Muller codes, \emph{IEEE Trans. Inf. Theory} 37(6) (1991), 1567--1576. 

\bibitem{SZW20} 
Z. Sun, S. Zhu, L. Wang, A class of constacyclic BCH codes, 
\emph{Cryptography and Communications} 12 (2020), 265--284. 

\bibitem{SWD22} 
Z. Sun, X. Wang , C. Ding, Several families of irreducible constacyclic and cyclic codes, arXiv:2207.01864. 

\bibitem{ZKZ19}
Y. Zhou, X. Kai, S. Zhu, et al. On the minimum distance of negacyclic codes with two zeros, \emph{Finite Fields Appl.} 55(2019), 134--150.

\bibitem{ZSL18} 
S. Zhu, Z. Sun, P. Li, 
A class of negacyclic BCH codes and its application to quantum codes, 
\emph{Des. Codes Cryptogr.} 86(10) (2018), 2139-2165. 

\end{thebibliography}
\end{document}